%% file: main.tex
\documentclass[runningheads]{llncs}
\usepackage[T1]{fontenc}
\usepackage[utf8]{inputenc}
\usepackage[T1]{fontenc}
\usepackage{lipsum}        
\usepackage{amsmath,amssymb,amsfonts}
\usepackage{algorithmicx}
\usepackage{graphicx}
\usepackage{cite}
\usepackage{url}
\usepackage{booktabs}
\usepackage[hidelinks]{hyperref}
\usepackage{verbatim}
\usepackage{array}
\usepackage{booktabs}
\usepackage{subcaption}
\usepackage{algorithmicx}
\usepackage[noend]{algpseudocode}
\usepackage{xcolor}

\newcommand{\Vars}{\mathrm{Vars}}
\usepackage{algorithm}
\usepackage{algorithmicx}
\usepackage{algpseudocode}
\algtext*{EndIf}
\algtext*{EndFor}

\newcommand{\AlgLine}[2]{%
  \Statex\hspace*{-\algorithmicindent}%
  \textbf{#1:} #2%
}

\newcommand{\Input}[1]{\AlgLine{Input}{#1}}
\newcommand{\Output}[1]{\AlgLine{Output}{#1}}
\renewcommand{\Require}[1]{\AlgLine{Requires}{#1}}

\newcommand{\URG}{\mathsf{URG}}
\newcommand{\WRN}{\mathsf{WRN}}
\newcommand{\MOV}{\mathsf{MOV}}
\newcommand{\DCL}{\mathsf{DCL}}
\newcommand{\STP}{\mathsf{STP}}

\usepackage{float}

\newcommand{\qedbox}{\hfill$\blacksquare$}

\usepackage{amsmath,amssymb,mathtools,bm}
\usepackage{stmaryrd}   
\usepackage{physics}
\usepackage{bbm}

\usepackage{tikz}
\usetikzlibrary{arrows.meta,positioning,calc,decorations.pathreplacing}
\usetikzlibrary{tikzmark}

\newcommand{\Always}{\Box}
\newcommand{\Eventually}{\Diamond}
\newcommand{\Next}{\bigcirc}
\newcommand{\True}{\mathsf{true}}
\newcommand{\False}{\mathsf{false}}

\newenvironment{proofsketch}
  {\par\noindent\textit{Proof sketch.}\ }
  {\hfill$\blacksquare$\par}

\usepackage{enumitem}
\setlist[itemize]{noitemsep, topsep=0pt}
\setlist[enumerate]{noitemsep, topsep=0pt}

\usepackage{graphicx}
\begin{document}
%
\title{Computing GR(1) Assume--Guarantee Contracts}
%
%
\author{
Yi\u{g}it Narter
\and
Emily Zhao
\and
Inigo Incer
}
%
\authorrunning{Y. Narter et al.}
\institute{ 
Department of Electrical and Computer Engineering, 
University of Michigan, Ann Arbor, MI, USA\\
\email{\{narter,emzhao,iir\}@umich.edu}\\
}
\maketitle              


\begin{abstract}
Assume--guarantee contracts provide a set of algebraic operations that enable the efficient compositional design and analysis of complex engineering systems. The explicit computation of these operations at component interfaces is currently supported only for some logics lacking a notion of time. This paper develops theory and algorithms to compute these operations when the contracts are expressed in the GR(1) modulo theory. The key challenge is the computation of two quantifier elimination problems for temporal formulas.
We develop context-aided variable elimination procedures on GR(1)-modulo formulas, establish their soundness, characterize subproblem-wise optimal solutions within an admissible class of formulas, and use the procedures to compute contract operations, including composition and quotient. We implement the resulting framework in Pacti, automating compositional reasoning over GR(1) contracts. Finally, we demonstrate the framework on an automated emergency-braking case study, where we compute system-level contracts and expose compatibility conditions arising from the interaction of component specifications. Experiments on synthetic families show that the cost is governed by the number of eliminated variables and liveness conjuncts, and a comparison with exact automata-based elimination shows complementary scaling.
\end{abstract}

\keywords{Assume--guarantee contracts \and GR(1) \and Linear temporal logic \and Compositional reasoning \and Contract-based design \and Quantifier elimination}

\section{Introduction}
Safety-critical systems increasingly rely on software that is built faster than our ability to verify it across all operating conditions.
A 2025 joint report by CISA, DARPA, OUSD (R\&E), and the NSA~\cite{cisa2025} refers to this gap between what we design and what we can verify as the ``software understanding gap.''
The technical call to action of this report states that ``all suitable techniques (including formal methods, AI, and others) must be leveraged to develop rigorous, reliable, rapid, and inexpensive capabilities'' to address such understanding.

A key principle for designing and analyzing complex systems is compositionality.
Over the last 20 years, assume--guarantee contracts have been proposed as a mathematical framework that supports tractable compositional design by formalizing system integrations across heterogeneous components~\cite{beneveniste2008, frankenstein}.
Contracts are interface specifications of components that encode information usually contained in the datasheet of a system; they define \emph{assumptions} a component makes on its context of operation together with \emph{guarantees} expected from that component whenever the assumptions hold~\cite{rouquette2026early}. They supply algebraic operations that enable system-level design and analysis \cite{incer2026some}. A recently introduced open-source platform for contract-based design, Pacti, supports the efficient computation of operations on contracts whose assumptions and guarantees are expressed in logical specification languages that are accessible to practitioners \cite{pacti}. 

This paper studies the computation of the algebra of contracts when the specifications are expressed in 
a fragment of Linear Temporal Logic (LTL)~\cite{ltl}, which can capture safety and liveness properties of discrete and hybrid event systems. This is desirable in domains such as computer architecture~\cite{armoni2002forspec}, automotive systems~\cite{annpureddy2011s}, and multi-agent robotics~\cite{alur2016compositional}. 
Pacti supports contracts over propositional logic and
linear real arithmetic, but not temporal specifications. Temporal contract tools such as OCRA~\cite{cimatti2013ocra}, AGREE~\cite{cofer2012compositional}, and CHASE~\cite{nuzzo2018chase} verify contract refinement, and algebraic
theories~\cite{beneveniste2008,incer2026some,hypercontracts} define contract operations semantically. \textbf{To the best of our knowledge,
it is not currently known how to compute the contract of an interconnection (composition) or of a missing
component (quotient) over interface variables when the assumptions and guarantees are temporal formulas.}
The difficulty lies in variable elimination.
It was shown in~\cite{pacti} that 
composition and quotient must eliminate variables outside the interface of the resulting contract. Concretely, we must compute $\forall\bar p.(\Gamma\Rightarrow\phi)$ or $\exists\bar p.(\Gamma\wedge\phi)$ for a formula $\phi$ and a context $\Gamma$.
Automata-based quantification~\cite{qptl,sistla1987complementation} eliminates existentially quantified
variables directly, but universal quantification needs complementation, which is
exponential in the worst case~\cite{sistla1987complementation}, and the result is an automaton whose language may not be expressible in LTL~\cite{wolper1983temporal}.
We restrict our attention to the GR(1) fragment of LTL, a common format for system
specifications with initial, invariant, and liveness clauses~\cite{gr1, maoz2015gr, ehlers2024fully}. The GR(1) format conveniently relates to contracts and offers useful properties for computing variable elimination. Exact elimination can leave this fragment, so our procedures trade exactness for soundness, and each operation again yields a contract in GR(1) form.

\noindent
\textbf{Contributions.}
We enable the computation of system-level design operations on contracts expressed in GR(1)-modulo theory. We provide procedures for the two GR(1) context-aided variable elimination problems behind composition and quotient (Section~\ref{sec:gr1-cave}), computed soundly by SMT-based quantifier elimination (QE). Our procedures attain closure within GR(1) and subproblem-wise optimality within an admissible class of GR(1) formulas. We implement the procedures in the Pacti~\cite{pacti} platform for contract-based design. We also provide an automated emergency-braking case study (Section~\ref{sec:gr1-contract-operations}) deriving system-level temporal contracts and compatibility conditions. The optional appendix contains proofs and experiments (Appendix~\ref{app:scalability}), including a bounded-lasso soundness and precision check and a comparison with automata-based elimination using Spot~\cite{duret2022spot}, and will be posted as a preprint after review.

\begin{figure}[htb]
    \centering
    \includegraphics[width=0.86\linewidth]{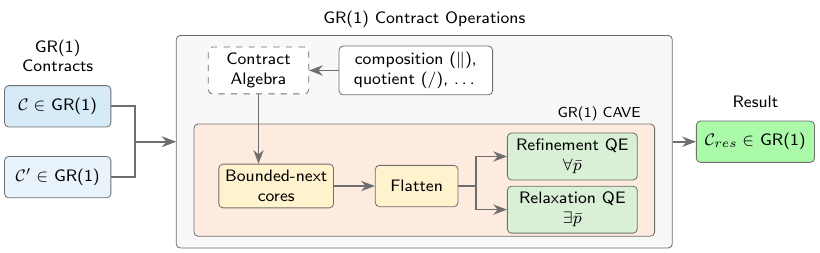}
    \caption{%
        Role of CAVE in the GR(1) contract-algebra workflow.
        Contract operations may produce formulas outside the
        GR(1) representation; CAVE constructs a GR(1) result through
        refinement or relaxation-based abstraction.
    }
    \label{fig:cave-overview}
\end{figure}

\section{Preliminaries}
\label{sec:preliminaries}

\subsection{Linear Temporal Logic}

Linear temporal logic (LTL) \cite{ltl} specifies properties of infinite traces over a set
of atomic propositions \(AP\). A trace is an infinite sequence
\(\sigma=s_0s_1s_2\cdots\), where each \(s_t\subseteq AP\) records the
propositions true at time \(t\). LTL formulas are generated by the syntax
$\varphi ::= \top \mid p \mid \neg\varphi \mid \varphi\land\varphi \mid
\bigcirc\varphi \mid \varphi\mathbin{\mathsf U}\varphi$.
Here, $p\in AP$ is an atomic proposition, and $\bigcirc$ and $\mathsf U$ denote
next and until. Their semantics are $(\sigma,t)\models\bigcirc\varphi$ iff
$(\sigma,t+1)\models\varphi$, and
$(\sigma,t)\models\varphi_1\mathbin{\mathsf U}\varphi_2$ iff there is $t'\ge t$
with $(\sigma,t')\models\varphi_2$ and $(\sigma,t'')\models\varphi_1$ for all
$t\le t''<t'$. The Boolean operators have their usual semantics. We use the derived
operators $\Diamond\varphi:=\top\mathbin{\mathsf U}\varphi$ (eventually) and
$\Box\varphi:=\neg\Diamond\neg\varphi$ (always), so $(\sigma,t)\models\Diamond\varphi$
iff $(\sigma,t')\models\varphi$ for some $t'\ge t$, and $(\sigma,t)\models\Box\varphi$
iff $(\sigma,t')\models\varphi$ for every $t'\ge t$. We write
\(\sigma\models\varphi\) for \((\sigma,0)\models\varphi\). For specifications modulo a background theory $\mathcal T$, atomic formulas
may also be predicates over typed variables. A trace assigns each such variable
a value in its domain at every time position, with atomic predicates evaluated
under the corresponding theory interpretation.

\subsection{The GR(1) Fragment}
\label{sec:gr1-fragment}

Generalized Reactivity (1), or GR(1), is a structured fragment of LTL that
supports efficient symbolic reasoning for reactive systems
\cite{gr1,maoz2015gr,ehlers2024fully,ehlers2016slugs,majumdar2019environmentally}. A standard GR(1) specification has the
form
\[
  \left(
    \varphi_{\mathrm{init}}
    \wedge
    \Always\varphi_{\mathrm{inv}}
    \wedge
    \bigwedge_i\Always\Eventually \varphi_{\mathrm{liv},i}
  \right)
  \Rightarrow
  \left(
    \psi_{\mathrm{init}}
    \wedge
    \Always\psi_{\mathrm{inv}}
    \wedge
    \bigwedge_i\Always\Eventually \psi_{\mathrm{liv},i}
  \right),
\]
where the left-hand side of the implication contains initial, invariant, and liveness requirements made on the operational environment, and the right-hand side contains the corresponding system requirements.

Throughout this work, the sub-formulas
$\varphi_{\mathrm{init}},\varphi_{\mathrm{inv}},\varphi_{\mathrm{liv},i},\psi_{\mathrm{init}},\psi_{\mathrm{inv}},$ and $\psi_{\mathrm{liv},i}$
are built from Boolean connectives and from propositions or predicates over typed
variables in a background theory, such as linear real arithmetic
\cite{becchi2025abstraction}, using finitely many applications of \(\Next\) and no
other temporal operators. We call such a sub-formula a \emph{bounded-next core}, or
simply a \emph{core}, of the GR(1)-modulo formula. A term $\Next^k x$ can be mapped to an indexed variable $x_k$
of the same type, allowing each bounded-next core to be reduced to a finite
propositional or SMT formula. This representation later enables algorithms for
computing GR(1) contract operations.

\subsection{Assume--Guarantee Contracts}
\label{sec:assume-guarantee-contracts}
Assume--guarantee contracts are interface specifications of the form \(\mathcal C=(A,G)\), where \(A\) states assumptions the component makes on its input variables and
\(G\) states guarantees the component promises to deliver on its outputs whenever the assumptions hold. A valid implementation of a contract is a model that satisfies \(A\Rightarrow G\). 
A valid environment for the contract is a model that satisfies $A$.

\paragraph{Contract Operations.}
Algebraic operations on contracts relate the contracts of interconnected
components with the system-level contract. We use the contract algebra underlying Pacti~\cite{pacti}. At the semantic
level, the exact operations are defined independently of the
specification theory used to express assumptions and guarantees. For contracts
$\mathcal C=(A,G)$ and $\mathcal C'=(A',G')$, \emph{refinement} is defined by
$\mathcal C\leq\mathcal C'$ iff
$(A'\Rightarrow A)\land
((G\lor\neg A)\Rightarrow(G'\lor\neg A'))$;
it captures substitutability, with $\mathcal C$ satisfying at least the
requirements of $\mathcal C'$.
\emph{Composition} combines component contracts into a contract for their
interconnection and is defined by
$\mathcal C\parallel\mathcal C'
=((A\land A')\lor(A\land\neg G)\lor(A'\land\neg G'),
(G\lor\neg A)\land(G'\lor\neg A'))$.
\emph{Quotient} derives the largest contract for a missing component that,
when composed with an existing component, refines a target contract:
$\mathcal C/\mathcal C'
=(A\land(\neg A'\lor G'),
(A'\land G)\lor\neg A\lor(A'\land\neg G'))$.
It satisfies the residual property
$\mathcal C'\parallel\mathcal X\leq\mathcal C
\Leftrightarrow
\mathcal X\leq\mathcal C/\mathcal C'$.
Finally, \emph{merging} combines requirements that must hold simultaneously
and is defined by
$\mathcal C\bullet\mathcal C'
=(A\land A',
(G\land G')\lor\neg A\lor\neg A')$.

\paragraph{Context Aided Variable Elimination.} We distinguish the exact expressions for operations from their computational
realizations, which, as in Pacti~\cite{pacti}, require specification-theory-specific procedures.
Consider the composition operation, which derives a system-level contract based on the assumptions and guarantees of its constituent components. During contract interconnection, some variables that appear in the assumptions and guarantees become \emph{irrelevant}; they help describe component interfacing, but should not appear in the contract returned to a user. In composition, these are the internal variables that connect the components, and in quotient, they are the variables external to the missing component. Computing contract operations requires eliminating irrelevant variables while preserving the information supplied by the formulas they appear in. Generally, this elimination cannot be performed by simply deleting the variables or treating them in isolation, because their admissible values are constrained by the rest of the interconnected
specification. As pointed out in \cite{pacti}, computing operations on contracts reduces to solving two such \emph{context-aided variable elimination (CAVE)} problems for the background theory that the assumptions and guarantees are expressed in. 

\begin{enumerate}
    \item \emph{CAVE by refinement}: Given a formula \(\phi(x,y)\), a context \(\Gamma(x,y,z)\), and an irrelevant variable \(y\), find a \(y\)-free formula \(\Psi(x,z)\) satisfying \(\Gamma\wedge\Psi\models\phi\). The exact logical solution is \(\Psi\colon \forall y.(\Gamma\Rightarrow\phi)\).
    \item \emph{CAVE by relaxation}: Given a formula \(\phi(x,y)\), a context \(\Gamma(x,y,z)\), and an irrelevant variable \(y\), find a \(y\)-free consequence \(\Psi(x,z)\) satisfying \(\Gamma\wedge\phi\models\Psi\). The exact solution is \(\Psi\colon \exists y.(\Gamma\wedge\phi)\). 
\end{enumerate} 
These are not auxiliary operations; CAVE procedures are the primitives required to compute user-interpretable contract
operations by eliminating variables inside the desired interface. Figure~\ref{fig:cave-overview} illustrates the contract-algebraic workflow. 


\subsection{GR(1) Specifications as Contracts}

We model GR(1) specifications using assume--guarantee contracts. In general, a contract may
contain multiple assumptions and guarantees. Since these are interpreted
conjunctively, they can be collected without loss of generality into
\(A:=\bigwedge_k A_k\) and \(G:=\bigwedge_\ell G_\ell\). To match the structure of GR(1) specifications, we consider assumptions and guarantees with forms
\(A=A_{\mathrm{init}}\wedge A_{\mathrm{inv}}\wedge A_{\mathrm{liv}}\) and
\(G=G_{\mathrm{init}}\wedge G_{\mathrm{inv}}\wedge G_{\mathrm{liv}}\), where
\(A_{\mathrm{init}}=\varphi_{\mathrm{init}}\),
\(A_{\mathrm{inv}}=\Always\varphi_{\mathrm{inv}}\), and
\(A_{\mathrm{liv}}=\bigwedge_i\Always\Eventually\varphi_{\mathrm{liv},i}\), and similarly
\(G_{\mathrm{init}}=\psi_{\mathrm{init}}\),
\(G_{\mathrm{inv}}=\Always\psi_{\mathrm{inv}}\), and
\(G_{\mathrm{liv}}=\bigwedge_i\Always\Eventually\psi_{\mathrm{liv},i}\).
Thus, a GR(1) specification can be translated directly into a contract
\(\mathcal C=(A,G)\). The sub-formulas
\(\varphi_{\mathrm{init}}\), \(\varphi_{\mathrm{inv}}\), \(\varphi_{\mathrm{liv},i}\),
\(\psi_{\mathrm{init}}\), \(\psi_{\mathrm{inv}}\), and \(\psi_{\mathrm{liv},i}\)
are the corresponding initial, invariant, and liveness bounded-next cores. 

The preceding notions provide the logical and contract-theoretic ingredients needed for our formulation. Our objective henceforth is to derive CAVE
procedures for bounded-next GR(1) formulas, so that the contract-operation
algorithms can be instantiated for these GR(1) contracts. These procedures are introduced in Section~\ref{sec:gr1-cave}. 
Next, we set up an example that we will use to illustrate our procedures throughout the rest of the work.

\subsection{Running Example: Automated Emergency-Braking}
\label{sec:aeb-running-example}

We use an automated emergency-braking (AEB) scenario as a running example,
adapted from the vehicle-in-the-loop automated-driving case study of
Zhang et al.~\cite{zhang2025vehicle}, which combines an adaptive-cruise
controller with a safety filter that modifies the commanded longitudinal
velocity according to traffic conditions. We encode it as a bounded-next GR(1)
abstraction with a safety supervisor $\mathcal C_S=(A_S,G_S)$, a
vehicle-response component $\mathcal C_V=(A_V,G_V)$, and a distance-to-obstacle
prediction component $\mathcal C_D=(A_D,G_D)$ that we add. The goal is to
guarantee a clearance between the vehicle and a detected obstacle at every time
step, and the contract operations must derive what this requires of the
components and of the environment.

Here, $d\in\mathbb R$ is the measured obstacle distance, $u\in\mathbb R$ the
longitudinal command produced by the supervisor, $v\in\mathbb R$ the vehicle
speed, and $z\in\mathbb R$ the predicted clearance. The Booleans $\URG$ and
$\WRN$ denote urgent and warning traffic conditions, and the supervisor has three
Boolean modes, $\MOV$ (\emph{moving}), $\DCL$ (\emph{decelerating}), and $\STP$
(\emph{stopped}). The architecture is shown in Figure~\ref{fig:aeb-running}.

\begin{figure}[t]
\centering
\includegraphics[width=0.9\linewidth]{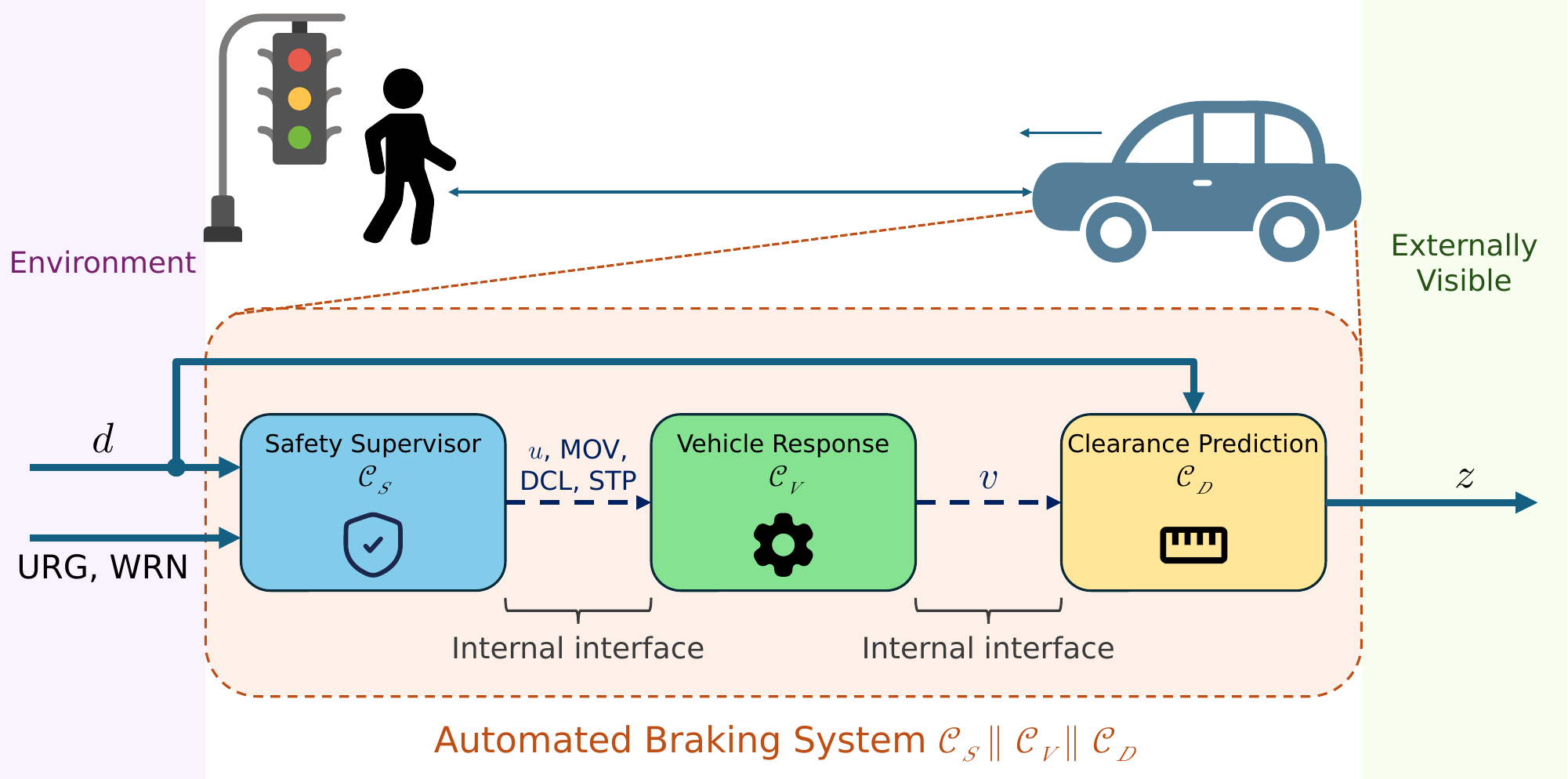}
\caption{%
Running AEB example, adapted from the automated-driving architecture
of Zhang et al.~\cite{zhang2025vehicle} and augmented with our
predicted-clearance component $\mathcal C_D$. Composition eliminates
$\{u,\MOV,\DCL,\STP,v\}$ along the three-component chain, leaving $z$
externally visible.
}
\label{fig:aeb-running}
\end{figure}

Table~\ref{tab:aeb-contracts} summarizes $\mathcal C_S$, $\mathcal C_V$, and $\mathcal C_D$.
The supervisor assumes an environment that starts in the clear region
($20\le d\le30$), whose conditions follow the obstacle distance ($\URG$ for
$d\le10$, $\WRN$ for $10<d\le15$, neither beyond), and that returns infinitely
often to the clear condition $\neg\URG\land\neg\WRN$, which strengthens the separate recurrence of $\neg\URG$ and $\neg\WRN$ assumed by Zhang et al.~\cite{zhang2025vehicle}, since the latter allows $\URG$ and $\WRN$ to alternate forever and prevent motion. It guarantees exactly one
active mode ($\operatorname{OneHot}(\MOV,\DCL,\STP)$), starts moving with command
$u=d/10$, commands stopping under $\URG$, deceleration under $\WRN$ unless
already stopped and motion otherwise, constrains $u$ by the obstacle distance
and the commanded mode, and guarantees motion infinitely often.

The vehicle assumes commands that are admissible for the commanded mode and
motion commanded infinitely often, which the supervisor liveness guarantee
provides. It guarantees nonnegative speed and a tracking error of $\frac12$
around the commanded value at the next step, with no liveness guarantee. Zhang et
al.~\cite{zhang2025vehicle} supply the $10\,\mathrm m$ and $15\,\mathrm m$
hazard thresholds; the remaining values are illustrative, with
$\epsilon=0.5\,\mathrm{m/s}$, $\Delta t=1\,\mathrm s$,
$d_s=1\,\mathrm m$, and $v_{\mathrm{prog}}=3/2\,\mathrm{m/s}$.
The contract $\mathcal C_D$ (Table~\ref{tab:aeb-contracts}) predicts the clearance
under constant speed over the one-second horizon, $z_{t+1}=d_t-\Delta t\,v_{t+1}$,
and carries no recurrent state.

\begin{table}[t]
\centering
\caption{GR(1) contracts $\mathcal C_S$, $\mathcal C_V$, and $\mathcal C_D$ for the AEB running example.}
\label{tab:aeb-contracts}
\scriptsize
\setlength{\tabcolsep}{5pt}
\renewcommand{\arraystretch}{1.2}

\resizebox{\columnwidth}{!}{%
\begin{tikzpicture}
\node[inner sep=0pt] (aebtable) {%
\begin{tabular}{
    @{}
    >{\raggedright\arraybackslash}m{0.49\textwidth}
    >{\raggedright\arraybackslash}m{0.49\textwidth}
    @{}
}
\toprule

\multicolumn{1}{c}{
    \textbf{Safety supervisor $\mathcal C_S=(A_S,G_S)$}
}
&
\multicolumn{1}{c}{
    \textbf{Vehicle response $\mathcal C_V=(A_V,G_V)$}
}
\tabularnewline

\midrule

\begin{tabular}{@{}l@{}}
\textbf{Inputs:}\quad $d,\URG,\WRN$\\
\textbf{Outputs:}\quad $u,\MOV,\DCL,\STP$
\end{tabular}
&
\begin{tabular}{@{}l@{}}
\textbf{Inputs:}\quad $u,\MOV,\DCL,\STP$\\
\textbf{Output:}\quad $v$
\end{tabular}
\tabularnewline

\midrule

\multicolumn{1}{c}{
    \begin{tabular}{@{}c@{}}
    \textbf{Assumptions}\\[-0.3mm]
    $A_S
    =A_{S,\mathrm{init}}
    \land A_{S,\mathrm{inv}}
    \land A_{S,\mathrm{liv}}$
    \end{tabular}
}
&
\multicolumn{1}{c}{
    \begin{tabular}{@{}c@{}}
    \textbf{Assumptions}\\[-0.3mm]
    $A_V
    =A_{V,\mathrm{init}}
    \land A_{V,\mathrm{inv}}
    \land A_{V,\mathrm{liv}}$
    \end{tabular}
}
\tabularnewline

\begin{tabular}{@{}l@{}}
$A_{S,\mathrm{init}}$:\quad
$20\le d\le30\land\neg\URG\land\neg\WRN$
\\[1.5mm]
$A_{S,\mathrm{inv}}$:\quad
$\displaystyle
\Always\!\left(
\begin{aligned}
&\,\,\,0\le d\le30\\
&\land(d\le10\Rightarrow\URG\land\neg\WRN)\\
&\land(10<d\le15\Rightarrow\WRN\land\neg\URG)\\
&\land(d>15\Rightarrow\neg\URG\land\neg\WRN)
\end{aligned}
\right)$
\\[7mm]
$A_{S,\mathrm{liv}}$:\quad
$\Always\Eventually(\neg\URG\land\neg\WRN)$
\end{tabular}
&
\begin{tabular}{@{}l@{}}
$A_{V,\mathrm{init}}$:\quad $2\le u\le3$
\\[1.5mm]
$A_{V,\mathrm{inv}}$:\quad
$\displaystyle
\Always\!\left(
\begin{aligned}
&\,\,\,0\le u\le6\\
&\land(\Next\STP\Rightarrow u=0)\\
&\land(\Next\DCL\Rightarrow u\le3)\\
&\land(\Next\MOV\Rightarrow u\ge2)
\end{aligned}
\right)$
\\[7mm]
$A_{V,\mathrm{liv}}$:\quad $\Always\Eventually\MOV$
\end{tabular}
\tabularnewline

\midrule

\multicolumn{1}{c}{
    \begin{tabular}{@{}c@{}}
    \textbf{Guarantees}\\[-0.3mm]
    $G_S
    =G_{S,\mathrm{init}}
    \land G_{S,\mathrm{inv}}
    \land G_{S,\mathrm{liv}}$
    \end{tabular}
}
&
\multicolumn{1}{c}{
    \begin{tabular}{@{}c@{}}
    \textbf{Guarantees}\\[-0.3mm]
    $G_V
    =G_{V,\mathrm{init}}
    \land G_{V,\mathrm{inv}}
    \land G_{V,\mathrm{liv}}$
    \end{tabular}
}
\tabularnewline

\begin{tabular}{@{}l@{}}
$G_{S,\mathrm{init}}$:\quad
$\MOV\land\neg\DCL\land\neg\STP\land u=d/10$
\\[1.5mm]
$G_{S,\mathrm{inv}}$:\quad
$\displaystyle
\Always\!\left(
\begin{aligned}
&\,\,\,\operatorname{OneHot}(\MOV,\DCL,\STP)\\
&\land(\URG\land\neg\WRN\Rightarrow\Next\STP)\\
&\land\bigl(
\WRN\land\neg\URG\land(\MOV\lor\DCL)\\
&\hspace{10mm}\Rightarrow\Next\DCL
\bigr)\\
&\land\bigl(
\WRN\land\neg\URG\land\STP\\
&\hspace{10mm}\Rightarrow\Next\STP
\bigr)\\
&\land(\neg(\URG\lor\WRN)\Rightarrow\Next\MOV)\\
&\land 0\le u\le6\land u\le d/5\\
&\land(\Next\STP\Rightarrow u=0)\\
&\land(\Next\DCL\Rightarrow u\le3)\\
&\land(\Next\MOV\Rightarrow u\ge d/10)
\end{aligned}
\right)$
\\[20mm]
$G_{S,\mathrm{liv}}$:\quad $\Always\Eventually\MOV$
\end{tabular}
&
\begin{tabular}{@{}l@{}}
$G_{V,\mathrm{init}}$:\quad $v=u$
\\[1.5mm]
$G_{V,\mathrm{inv}}$:\quad
$\displaystyle
\Always\!\left(
v\ge0
\land
u-\frac12\le\Next v\le u+\frac12
\right)$
\\[1.5mm]
$G_{V,\mathrm{liv}}$:\quad ---
\\[2mm]
\multicolumn{1}{@{}l@{}}{\hspace*{-\tabcolsep}\rule{\dimexpr\linewidth+\tabcolsep\relax}{1.5pt}}
\\[0.5mm]
\multicolumn{1}{c}{\textbf{Clearance prediction $\mathcal C_D=(A_D,G_D)$}}
\\[0.8mm]
\multicolumn{1}{c}{\textbf{Inputs:}\quad $d,v$\qquad\textbf{Output:}\quad $z$}
\\[2.5mm]
\multicolumn{1}{c}{$A_D=G_{D,\mathrm{init}}=G_{D,\mathrm{liv}}=\top$}
\\[1.0mm]
\multicolumn{1}{c}{$G_{D,\mathrm{inv}}$:\quad $\Always(\Next z=d-\Delta t\,\Next v)$}
\end{tabular}
\tabularnewline

\bottomrule
\end{tabular}%
};

\draw[line width=0.4pt]
    ([yshift=-5.3mm]aebtable.north)
    --
    (aebtable.south);

\end{tikzpicture}}
\end{table}

For quotient, we use a safety target $\mathcal C_T=(A_T,G_T)$ with inputs
$\{d,\allowbreak\URG,\allowbreak\WRN\}$ and output $\{v\}$. Its assumptions extend those of
$\mathcal C_S$ with $d\ge3/2$ and with $d\ge20$ whenever neither condition is
active, namely $A_{T,\mathrm{init}}:=A_{S,\mathrm{init}}$,
$A_{T,\mathrm{inv}}
:=
A_{S,\mathrm{inv}}
\land
\Always(d\ge3/2)
\land
\Always(\neg(\URG\lor\WRN)\Rightarrow d\ge20),$
and $A_{T,\mathrm{liv}}:=A_{S,\mathrm{liv}}$.
Its guarantees are $G_{T,\mathrm{init}}:=\top$,
$G_{T,\mathrm{inv}}:=\Always(\Next v\le(d-d_s)/\Delta t)$, a one-step safety
bound, and $G_{T,\mathrm{liv}}:=\Always\Eventually(
\neg(\URG\lor\WRN)\land\Next v\ge v_{\mathrm{prog}})$, which synchronizes
forward progress with recurring clear opportunities. Since the vehicle tracks
the command $u$ within $\pm\epsilon$, these guarantees translate into the
following sufficient conditions on $u$: $u\le\frac{d-d_s}{\Delta t}-\epsilon$ and
$\Always\Eventually\!\left(
\neg(\URG\lor\WRN)\land u\ge v_{\mathrm{prog}}+\epsilon\right)$. The quotient
$\mathcal C_T/\mathcal C_V$ is the supervisor-side component that meets
$\mathcal C_T$ with the fixed vehicle.
Composing $\mathcal C_S$ and $\mathcal C_V$ eliminates the internal interface
$\{u,\MOV,\DCL,\STP\}$, so
\(
\mathcal C_{\mathrm{comp}}
:=
\mathcal C_S\parallel\mathcal C_V
\)
has inputs $\{d,\URG,\WRN\}$ and output $\{v\}$. Composing it with
$\mathcal C_D$ additionally eliminates $v$, yielding the same inputs and output
$\{z\}$. Section~\ref{sec:gr1-contract-operations} computes both operations on
this example with the CAVE procedures, where refinement CAVE generates external
requirements and relaxation CAVE propagates externally visible guarantees.

\section{Related Work}
\label{sec:related-work}

\paragraph{Quantification in Temporal Logic.}
Propositional quantification in modal and temporal logic has been studied from
early foundations by Carnap~\cite{Carnap_1946} to the modern account of
Fritz~\cite{Fritz_2024}. QPTL~\cite{qptl} extends LTL with quantification over atomic
propositions, and Bolotov~\cite{6004288} develops a deductive verification
method for QPTL. Automata-based treatments eliminate existentially quantified
propositions, but universal quantification requires complementing a B\"uchi
automaton~\cite{sistla1987complementation}. These works give semantics and decision
procedures for quantified formulas, not GR(1) initial, invariant, and liveness
results; CAVE instead eliminates irrelevant variables by refinement and
relaxation to obtain quantifier-free GR(1) results. We compared CAVE for refinement with an exact automata-based pipeline built on Spot~\cite{duret2022spot} (Appendix~\ref{app:eval-automata}). The two have complementary scaling, with Spot faster for few liveness conjuncts or many irrelevant variables and CAVE otherwise. CAVE returns a GR(1) formula with initial, invariant, and liveness parts, whereas the exact result is a B\"uchi automaton.

\paragraph{Contract Theories and Tools for Temporal Logic.}
Algebraic contract theories~\cite{beneveniste2008,incer2026some} and
hypercontracts~\cite{hypercontracts} define refinement, composition, and
quotient semantically, and Pacti~\cite{pacti} computes these operations for
propositional and linear-arithmetic specifications. For temporal logic,
OCRA~\cite{cimatti2013ocra} checks that the contracts of a decomposition refine
the contract of their parent by generating temporal-logic proof obligations, and
its tightening extension~\cite{cimatti2017tightening} searches for weaker parent
assumptions and subcomponent guarantees by parameter synthesis.
AGREE~\cite{cofer2012compositional} uses assume--guarantee reasoning to compositionally verify architectural models against component contracts,
CHASE~\cite{nuzzo2018chase} formalizes requirements in a pattern-based language to check them for inconsistencies, and CROME~\cite{mallozzi2020crome} states the composition and conjunction of LTL goal contracts syntactically with logical connectives, passing the resulting formulas to a model checker. These tools verify or tune refinements supplied by the designer, or state contract operations syntactically; to our knowledge, none eliminates irrelevant variables when computing the contract of an interconnection or of a missing component for temporal contracts.

\paragraph{GR(1) Specifications: Decomposition, Hiding, and Synthesis.}
Filippidis and Murray~\cite{Filippidis2017hiding,filippidis2018layering}
use BDD-based quantification to hide variables while decomposing
system-level GR(1) specifications into realizable component contracts; in
contrast, our component contracts are given, and elimination is induced by
composition and quotient. Graebener et al.~\cite{Graebener2022} apply strong-merge 
to GR(1) unit-test contracts, which, unlike composition and quotient,
does not require CAVE. Zhang et al.~\cite{zhang2025vehicle}
synthesize an automated-driving safety filter, from which our AEB example is
adapted, and Katis et al.~\cite{katis2018validity} synthesize implementations
from assume--guarantee contracts; we treat GR(1) specifications as
assume--guarantee contracts~\cite{nuzzo2014contract} and compute contract operations while remaining
within GR(1).

\paragraph{Temporal-Logic Contracts for Compositional Control.}
Kim et al.~\cite{Kim2017smallgain} use parametric LTL assume--guarantee
contracts for compositional reasoning; Saoud et al.~\cite{saoud2021} and Sharf et al.~\cite{sharf2021assume} develop assume--guarantee contracts for compositional reasoning about continuous-time and discrete-time dynamical systems, respectively. Related
STL approaches coordinate multi-agent controllers through local
contracts~\cite{liu2025controller} and synthesize compatible local contracts
for decentralized control~\cite{Ghasemi2022}. These methods target verification or controller synthesis for dynamical systems, not the computation of contract operations such as composition and quotient on temporal specifications.

\section{GR(1) Context-Aided Variable Elimination}
\label{sec:gr1-cave}

It has been shown that computing the algebraic operations of contracts requires
solving two QE problems~\cite{pacti}, introduced in
Section~\ref{sec:assume-guarantee-contracts}. We now fix their notation for GR(1).
Suppose that a formula $\phi(\bar p,\bar q)$, called the \emph{target}, contains
irrelevant variables $\bar p$ that must be eliminated, and that we may use a
predicate $\Gamma(\bar p,\bar q)$, called the \emph{context}, which collects the
surrounding contract conditions that constrain these variables. The relevant
variables $\bar q$ remain in the result. In the notation of
Section~\ref{sec:assume-guarantee-contracts}, $\bar p$ plays the role of $y$, and
$\bar q$ the role of $x$ and $z$. Each variable is typed; some may be Boolean
propositions, and others may range over a background theory such as linear real
arithmetic. A formula is \emph{$\bar p$-free} if it is semantically independent of
every variable in $\bar p$, so the result of an elimination is a formula over
$\bar q$.
The two problems ask for a $\bar p$-free formula $\Psi^{ref}(\bar q)$ with
$\Gamma\wedge\Psi^{ref}\models\phi$ (refinement), and for a $\bar p$-free formula
$\Psi^{rel}(\bar q)$ with $\Gamma\wedge\phi\models\Psi^{rel}$ (relaxation). Their
exact solutions are
$\Psi^{ref}_{exact}(\bar q):=
\forall\bar p.\,(\Gamma(\bar p,\bar q)\Rightarrow\phi(\bar p,\bar q))$
for refinement and
$\Psi^{rel}_{exact}(\bar q):=
\exists\bar p.\,(\Gamma(\bar p,\bar q)\wedge\phi(\bar p,\bar q))$
for relaxation. We call $\forall\bar p.\,F$ and $\exists\bar p.\,F$ the
\emph{universal} and \emph{existential elimination} of $\bar p$ from $F$, where
$\forall\bar p.\,F$ is the weakest $\bar p$-free formula that implies $F$, and
$\exists\bar p.\,F$ is the strongest $\bar p$-free formula implied by $F$.

In the GR(1) setting, the context and the target have the form
$\Gamma(\bar p,\bar q):=
\gamma_{\mathrm{init}}\wedge
\Always\gamma_{\mathrm{inv}}\wedge
\bigwedge_{j\in J}\Always\Eventually\gamma_{\mathrm{liv},j}$
and
$\phi(\bar p,\bar q):=
\phi_{\mathrm{init}}\wedge
\Always\phi_{\mathrm{inv}}\wedge
\bigwedge_{i\in I}\Always\Eventually\phi_i$.
Exact temporal QE need not preserve the GR(1) form.
For example,
$\forall p.\left(
\Always\Eventually p
\Rightarrow
\Always\Eventually(p\wedge s)
\right)$
$\equiv\Eventually\Always s$.
Indeed, if \(s\) eventually holds forever, every recurrent valuation of \(p\)
eventually overlaps \(s\) infinitely often. Conversely, if
\(\Eventually\Always s\) fails, then \(\neg s\) occurs infinitely often; choosing
\(p:=\neg s\) makes \(\Always\Eventually p\) true while
\(\Always\Eventually(p\wedge s)\) is false. Thus exact elimination may produce
a formula outside GR(1), motivating sound algorithms guaranteed to remain within GR(1).


\subsection{Bounded-Next CAVE}
\label{sec:finite-window-cave}

Recall from Section~\ref{sec:gr1-fragment} that a bounded-next core is a sub-formula
built from Boolean connectives and typed predicates using a finite number of
$\Next$'s and no other temporal operator.
For such a core $\chi$, its
\emph{next-depth} $d(\chi)$ is the maximum number of $\Next$ operators on the path
from the root of $\chi$ to any occurrence of a variable.
When $\chi$ contains no $\Next$, $d(\chi)=0$.
We \emph{flatten} a bounded-next core by substituting
$\Next^k p$ with a fresh variable $p_k$ of the same sort as $p$ for each $p\in\bar p$ and
$\Next^k q$ with a fresh variable $q_k$ for each $q\in\bar q$. The resulting formula contains a finite number of variables and is denoted by $\chi^\flat$.

We assume cores
\(\gamma_{\mathrm{init}},\gamma_{\mathrm{inv}},\gamma_{\mathrm{liv},j},
\phi_{\mathrm{init}},\phi_{\mathrm{inv}},\phi_i\)
are bounded-next. For a finite set $B$ of cores involved in a variable elimination problem, define
$h_B:=\max_{\chi\in B}d(\chi)$ and
$D_B:=h_B+d(\gamma_{\mathrm{inv}})$, where
$\gamma_{\mathrm{inv}}$ is the invariant core of the context $\Gamma$, and let
\(
\Gamma_{\mathrm{inv}}^{[0:h_B]}
:=
\bigwedge_{k=0}^{h_B}\Next^k\gamma_{\mathrm{inv}}
\)
be the conjunction of the shifted invariants $\bigcirc^k\gamma_{\mathrm{inv}}$ (the invariant core shifted by $k$ steps) at the positions read by the cores in $B$, which we call the \textit{shifted invariant context}. 
Since $\Gamma$ contains $\Box\gamma_{\mathrm{inv}}$, the core $\gamma_{\mathrm{inv}}$
holds at every time position, so each $\bigcirc^{k}\gamma_{\mathrm{inv}}$ holds at
every position as well (Lemma~\ref{lem:shifted-invariant}). These instances are therefore
consequences of the context, and assuming them while eliminating $\bar p$ is sound.
We call the indexed positions $0,\ldots,D_B$ the \emph{finite window}
of the elimination problem. It contains all time positions needed to
evaluate these cores together with the shifted invariant context.
Consequently, after flattening, these formulas can be represented
over
$\bar p_0,\ldots,\bar p_{D_B}$ and
$\bar q_0,\ldots,\bar q_{D_B}$.
Throughout, $\forall\bar p$ and $\exists\bar p$ in a finite-window
elimination denote quantification over all copies
$\bar p_0,\ldots,\bar p_{D_B}$ associated with that window. We use $F_B(\bar p,\bar q)$ to denote the propositional formula obtained by flattening $\Gamma_{\mathrm{inv}}^{[0:h_B]}$. We show that evaluating these temporal cores over the finite window is equivalent to evaluating their flattened finite-variable representations. 

\begin{lemma}[Flattening Preserves Semantics]
\label{lem:flattening-correctness}
Let $\chi(\bar p,\bar q)$ be a bounded-next core with $d(\chi)\le D$.
Then, for every trace $\sigma$ and time $t$,
$(\sigma,t)\models\chi$ if and only if
$v_{\sigma,t,D}\models\chi^\flat$, where $v_{\sigma,t,D}$ assigns each
indexed variable $x_k$ the typed value $\sigma_{t+k}(x)$, for every
$x\in\bar p\cup\bar q$ and $0\le k\le D$.
\end{lemma}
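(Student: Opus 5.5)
We prove the statement by structural induction on the bounded-next core $\chi$. The induction hypothesis has to be strengthened so that it tracks the offset introduced by nested $\Next$ operators. For a core $\chi$, write $\chi^{\flat,+j}$ for the flattening in which every occurrence of a variable $x$ under $k$ applications of $\Next$ inside $\chi$ is replaced by $x_{j+k}$. The unshifted flattening is then $\chi^\flat=\chi^{\flat,+0}$.

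The strengthened claim reads as follows. For every core $\chi$, every $j\ge 0$ with $j+d(\chi)\le D$, every trace $\sigma$ and every time $t$,
\[
(\sigma,t+j)\models\chi \iff v_{\sigma,t,D}\models\chi^{\flat,+j}.
\]
The lemma is the case $j=0$. The one fact about flattening needed for this is the recursion $(\Next\chi)^{\flat,+j}=\chi^{\flat,+(j+1)}$. This is immediate from the definition, since each variable inside $\Next\chi$ sits under one more $\Next$ than it does in $\chi$.

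The cases are handled in this order.

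\emph{Atomic case.} Let $\chi$ be a proposition $x$, or a theory predicate $P(x^1,\dots,x^m)$ with $d(\chi)=0$. Then $\chi^{\flat,+j}$ is $P(x^1_j,\dots,x^m_j)$. By definition, $v_{\sigma,t,D}$ assigns $x^i_j$ the value $\sigma_{t+j}(x^i)$, and $j\le D$ holds by assumption. So both sides evaluate $P$ under the theory interpretation on the same typed values, and they agree.

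\emph{Boolean cases.} $\top$ is trivial. For $\neg$ and $\wedge$, we have $d(\neg\chi)=d(\chi)$ and $d(\chi_1\wedge\chi_2)=\max(d(\chi_1),d(\chi_2))$. Hence the side condition $j+d(\cdot)\le D$ passes to the subformulas, and the case follows from the induction hypothesis because flattening commutes with the connectives.

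\emph{Next case.} Let $\chi=\Next\chi'$. Then $d(\chi')=d(\chi)-1$, so $(j+1)+d(\chi')\le D$. The chain of equivalences is
\[
(\sigma,t+j)\models\Next\chi' \iff (\sigma,t+j+1)\models\chi' \iff v_{\sigma,t,D}\models\chi'^{\flat,+(j+1)}=(\Next\chi')^{\flat,+j}.
\]
The first step is the semantics of $\Next$, the second is the induction hypothesis, and the last is the recursion above.

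The main obstacle is getting the definitions precise rather than any real mathematical difficulty. Three points need care. First, the paper's flattening replaces terms $\Next^k x$, and for predicates this means $\Next$ may apply to terms inside the predicate as well as to whole formulas. I would first normalize every core by pushing $\Next$ inward to the variables, using $\Next\neg\chi\equiv\neg\Next\chi$, $\Next(\chi_1\wedge\chi_2)\equiv\Next\chi_1\wedge\Next\chi_2$, and $\Next P(\vec x)\equiv P(\Next\vec x)$. These equivalences are immediate from the pointwise semantics. The inductive proof above then covers the normalized form, and the next-depth is unchanged by this normalization. Second, the side condition $j+d(\chi)\le D$ must propagate correctly so that every index that occurs stays inside the window $0,\dots,D$ where $v_{\sigma,t,D}$ is defined. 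Third, the flattening only ever reads the values $\sigma_{t+k}$ for $k\le D$, so the rest of the trace is irrelevant. This is what makes the finite valuation $v_{\sigma,t,D}$ sufficient.
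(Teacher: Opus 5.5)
Your argument is correct and is essentially the paper's structural induction, differing only in bookkeeping: you fix $t$ and $D$ and carry the shift as an offset $j$, so the $\Next$ case closes by the syntactic identity $(\Next\chi)^{\flat,+j}=\chi^{\flat,+(j+1)}$, whereas the paper applies the hypothesis at $(t+1,D-1)$ and then re-indexes $x_k\mapsto x_{k+1}$. One small repair is needed: pushing $\Next$ inward produces atoms $P(\Next^{k_1}x^1,\dots,\Next^{k_m}x^m)$ with $d>0$ (e.g.\ $u-\tfrac12\le\Next v$), which your atomic case, stated only for $d(\chi)=0$, does not literally cover; the same one-line argument handles them, since both sides evaluate $P$ at the values $\sigma_{t+j+k_i}(x^i)$, and $j+k_i\le j+d(\chi)\le D$.
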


\begin{proof}
By structural induction on $\chi$. The atomic and Boolean cases are immediate.
For $\chi=\Next\eta$, the semantics of $\Next$ give
$(\sigma,t)\models\Next\eta$ if and only if
$(\sigma,t+1)\models\eta$. By the induction hypothesis, this holds if and
only if $v_{\sigma,t+1,D-1}\models\eta^\flat$. Relative to time $t$, each
$x_k$ in this valuation corresponds to $x_{k+1}$, exactly as prescribed by
flattening $\Next\eta$. Hence,
$v_{\sigma,t,D}\models(\Next\eta)^\flat$, completing the induction.
\qedbox
\end{proof}

Hence, flattening cores preserves semantics over the finite-window. The restriction to bounded-next formulas reduces each CAVE problem to QE over a finite SMT formula. Constructions with
unbounded next-depth, such as infinite conjunctions of progressively longer
time windows, do not admit this finite representation and generally require automata- or fixed-point-based reasoning.

Our procedures treat the cores within each conjunct of the GR(1) formula separately; namely,
the initial core, the invariant core, and each of the liveness cores. 
We call each core solved over its finite window of invariant instances a \emph{finite subproblem}. A \emph{certificate} for a subproblem is a $\bar p$-free core $\eta$ that solves it;
the procedures then lift $\eta$ to a conjunct of the GR(1) result by adding the
temporal operators shown in the first column of the tables below. Each subproblem may utilize only the cores of the context that hold at the time positions it addresses. The initial subproblem concerns time~$0$, where $\gamma_{\mathrm{init}}$ and the shifted invariant context $\Gamma_{\mathrm{inv}}^{[0:h_B]}$ hold; the invariant subproblem pertains to every time position, where only the shifted invariant context is guaranteed to hold; and a liveness subproblem concerns infinitely many positions, so it may additionally use one fixed liveness core from the context. We call a certificate \emph{admissible} when it solves its
subproblem in this way. Keeping the subproblems separate preserves the conjunctive
GR(1) structure, and by Lemma~\ref{lem:flattening-correctness}, each one can be flattened into an equivalent
finite-window problem. Results that depend on several subproblems at once, on preceding invariant instances, on combinations or time-varying choices of
recurring liveness cores, or on unbounded next-depth need joint or unbounded
temporal reasoning and are outside the class. The restriction therefore yields a uniform construction for arbitrary bounded-next GR(1) formulas; formula-specific constructions outside the class may yield stronger results.

\begin{definition}[Admissible GR(1) CAVE Construction]
\label{def:admissible-gr1-cave}
A certificate $\eta$ is \emph{admissible} for a finite subproblem if it is
bounded-next, $\bar p$-free, and satisfies the \emph{obligation} on the finite window listed below for 
its corresponding subproblem. The refinement obligations are
\[
\begin{array}{c|c}
\text{lifted conjunct} & \text{finite window obligation}\\ \hline
\eta_{\mathrm{init}}
&
\eta_{\mathrm{init}}\models
\gamma_{\mathrm{init}}\wedge
\Gamma_{\mathrm{inv}}^{[0:h_B]}
\Rightarrow\phi_{\mathrm{init}},
\quad B=\{\gamma_{\mathrm{init}},\phi_{\mathrm{init}}\}
\\
\Always\eta_{\mathrm{inv}}
&
\eta_{\mathrm{inv}}\models
\Gamma_{\mathrm{inv}}^{[0:h_B]}
\Rightarrow\phi_{\mathrm{inv}},
\quad B=\{\phi_{\mathrm{inv}}\}
\\
\Always\Eventually\eta_{i,0}
&
\eta_{i,0}\models
\Gamma_{\mathrm{inv}}^{[0:h_{i,0}]}
\Rightarrow\phi_i,
\quad h_{i,0}=d(\phi_i)
\\
\Always\eta_{i,j}
&
\eta_{i,j}\models
\Gamma_{\mathrm{inv}}^{[0:h_{i,j}]}
\wedge\gamma_{\mathrm{liv},j}
\Rightarrow\phi_i,
\quad
\begin{gathered}
j\in J\ \text{fixed},\\[-1mm]
h_{i,j}=\max\{d(\phi_i),d(\gamma_{\mathrm{liv},j})\}.
\end{gathered}
\end{array}
\]
The relaxation obligations are
\[
\begin{array}{c|c}
\text{lifted conjunct} & \text{finite window obligation}\\ \hline
\eta_{\mathrm{init}}
&
\gamma_{\mathrm{init}}\wedge
\Gamma_{\mathrm{inv}}^{[0:h_B]}\wedge\phi_{\mathrm{init}}
\models\eta_{\mathrm{init}},
\quad B=\{\gamma_{\mathrm{init}},\phi_{\mathrm{init}}\}
\\
\Always\eta_{\mathrm{inv}}
&
\Gamma_{\mathrm{inv}}^{[0:h_B]}\wedge\phi_{\mathrm{inv}}
\models\eta_{\mathrm{inv}},
\quad B=\{\phi_{\mathrm{inv}}\}
\\
\Always\Eventually\eta_j^\gamma
&
\Gamma_{\mathrm{inv}}^{[0:d(\gamma_{\mathrm{liv},j})]}
\wedge\gamma_{\mathrm{liv},j}
\models\eta_j^\gamma,
\quad j\in J
\\
\Always\Eventually\eta_i^\phi
&
\Gamma_{\mathrm{inv}}^{[0:d(\phi_i)]}\wedge\phi_i
\models\eta_i^\phi,
\quad i\in I .
\end{array}
\]
A complete admissible relaxation, in which each certificate is admissible for its own
finite subproblem independently of the others, has the decomposed form
$\eta_{\mathrm{init}}\wedge\Always\eta_{\mathrm{inv}}
\wedge\bigwedge_{j\in J}\Always\Eventually\eta_j^\gamma
\wedge\bigwedge_{i\in I}\Always\Eventually\eta_i^\phi \in \mathsf{GR(1)}$. Dependencies between subproblems, multiple or
time-varying refinement supports, preceding invariant time-windows, and unbounded
next-depth are not admissible.
\end{definition}

The finite-window restriction can exclude GR(1)-expressible consequences that
require earlier invariant information. For example, for
$\Gamma=p\wedge\Always(\Next p\leftrightarrow q)$ and
$\phi=\Always(p\leftrightarrow s)$, eliminating $p$ gives the exact refinement
$s\wedge\Always(\Next s\leftrightarrow q)$, whereas the admissible invariant
subproblem yields $\bot$ because its finite window does not include the
preceding invariant instance. Such dependencies require a broader elimination
construction.

\subsection{GR(1)-Compatible Refinement and Relaxation}
\label{sec:gr1-cave-bounds}

\subsubsection{Refinement.}

Algorithm~\ref{alg:gr1-refinement} constructs the refinement bound by solving the initial, invariant, and liveness subproblems of Definition~\ref{def:admissible-gr1-cave}. For every liveness target \(\phi_i\) it builds one candidate per
\emph{branch}: the \emph{synthetic branch} (index $0$) and one \emph{support
branch} for each context-liveness core $\gamma_{\mathrm{liv},j}$, $j\in J$. 

\begin{algorithm}[t]
\caption{GR(1) Refinement CAVE}
\label{alg:gr1-refinement}
\begin{algorithmic}[1]
\Input{context
$\Gamma=\gamma_{\mathrm{init}}\wedge\Always\gamma_{\mathrm{inv}}
\wedge\bigwedge_{j\in J}\Always\Eventually\gamma_{\mathrm{liv},j}$,
target
$\phi=\phi_{\mathrm{init}}\wedge\Always\phi_{\mathrm{inv}}
\wedge\bigwedge_{i\in I}\Always\Eventually\phi_i$,
eliminated variables $\bar p$}
\Output{$\bar p$-free GR(1) refinement $\Psi^{ref}$}

\State $B\gets\{\gamma_{\mathrm{init}},\phi_{\mathrm{init}}\}$;
$\psi^{ref}_{\mathrm{init}}\gets
\forall\bar p.\,
\bigl(\gamma_{\mathrm{init}}^{\flat}\wedge
(\Gamma_{\mathrm{inv}}^{[0:h_B]})^\flat
\Rightarrow\phi_{\mathrm{init}}^{\flat}\bigr)$

\State $B\gets\{\phi_{\mathrm{inv}}\}$;
$\psi^{ref}_{\mathrm{inv}}\gets
\forall\bar p.\,
\bigl((\Gamma_{\mathrm{inv}}^{[0:h_B]})^\flat
\Rightarrow\phi_{\mathrm{inv}}^{\flat}\bigr)$

\For{$i\in I$}
    \State $(j_i,\psi^{ref}_{i,j_i})\gets
    \Call{SelectSupport}{\phi_i,\gamma_{\mathrm{inv}},
    \{\gamma_{\mathrm{liv},j}\}_{j\in J},\bar p}$
\EndFor
\State $I_0\gets\{i\in I:j_i=0\}$;
$I_{\mathrm{sup}}\gets I\setminus I_0$
\State $\Psi^{ref}\gets
\psi^{ref}_{\mathrm{init}}\wedge\Always\psi^{ref}_{\mathrm{inv}}
\wedge\bigwedge_{i\in I_0}\Always\Eventually\psi^{ref}_{i,0}
\wedge\bigwedge_{i\in I_{\mathrm{sup}}}\Always\psi^{ref}_{i,j_i}$
\State \Return $\Psi^{ref}$

\Statex
\Procedure{SelectSupport}
{$\phi_i,\gamma_{\mathrm{inv}},
\{\gamma_{\mathrm{liv},j}\}_{j\in J},\bar p$}
    \State $h_{i,0}\gets d(\phi_i)$;
    $\psi^{ref}_{i,0}\gets
    \forall\bar p.\bigl((\Gamma_{\mathrm{inv}}^{[0:h_{i,0}]})^\flat
    \Rightarrow\phi_i^\flat\bigr)$
    \State $U_i\gets\{0\}$
    \For{$j\in J$}
        \State $h_{i,j}\gets\max\{d(\phi_i),d(\gamma_{\mathrm{liv},j})\}$
        \State $\psi^{ref}_{i,j}\gets
        \forall\bar p.\bigl((\Gamma_{\mathrm{inv}}^{[0:h_{i,j}]})^\flat
        \wedge\gamma_{\mathrm{liv},j}^\flat
        \Rightarrow\phi_i^\flat\bigr)$;
        $U_i\gets U_i\cup\{j\}$
    \EndFor
    \State $j_i\gets\operatorname{Select}(U_i)$
    \Comment{valid core first, then non-false core, then fixed order}
    \State \Return $(j_i,\psi^{ref}_{i,j_i})$
\EndProcedure
\end{algorithmic}
\end{algorithm}


The synthetic branch eliminates $\bar p$ using the invariant context alone,
assuming no context-liveness core (hence synthetic: $0\notin J$); its
certificate (Definition~\ref{def:admissible-gr1-cave}) is lifted as the
recurrence condition $\Always\Eventually\psi^{ref}_{i,0}$. A support branch
$j\in J$ additionally assumes $\gamma_{\mathrm{liv},j}$ and is lifted as
$\Always\psi^{ref}_{i,j}$, because two independent recurrence conditions need
not synchronize. If $\phi_i$ is $\bar p$-free, we need not retain
$\Always\Eventually\phi_i$ directly. Since $\phi_i\Rightarrow\psi^{ref}_{i,0}$,
we have
$\Always\Eventually\phi_i\Rightarrow\Always\Eventually\psi^{ref}_{i,0}$, so the
synthetic branch, when selected, is at least as weak as direct retention.
After constructing every candidate, $\operatorname{Select}$ prefers one whose
finite core is established valid by the SMT solver, then one not established
unsatisfiable, and otherwise the first in a fixed synthetic-first order
(support syntax depth, then SMT-LIB core). It keeps the chosen candidate's
temporal operator ($\Always\Eventually$ for the synthetic branch, $\Always$ for
a support branch). Every candidate is individually sound
(Lemmas~\ref{lem:synthetic-liveness-refinement}
and~\ref{lem:fixed-support-liveness-refinement}), so $\operatorname{Select}$
affects only the strength of the result; no optimality is claimed across
branches with different temporal operators ($\Always$ versus
$\Always\Eventually$). Appendices~\ref{app:soundness}
and~\ref{app:optimality} give the proofs.

\paragraph{Illustrative example: refinement.}
Let $p$ be irrelevant and $q,s$ relevant, with context
$\Gamma=\Always\gamma_{\mathrm{inv}}$,
$\gamma_{\mathrm{inv}}:=((q\land\Next\neg q)\Rightarrow\Next p)\land(\Next p\Rightarrow q)$,
which forces $\Next p$ at each falling edge of $q$ and allows it only after $q$,
and target $\phi=\Always(\Next p\lor s)\land\Always\Eventually\Next p$. Over the window
$\{0,1,2\}$, the invariant and liveness subproblems evaluate to
$(q_0\land\neg q_1)\lor s_0$ and $q_0\land\neg q_1$, so refinement CAVE returns
$\forall p.\,{(\Always\gamma_{\mathrm{inv}}\Rightarrow\phi)}
\equiv\Always{\bigl((q\land\Next\neg q)\lor s\bigr)}\land\Always\Eventually{(q\land\Next\neg q)}$,
which is exact, since the worst case $p_{t+1}=q_t\land\neg q_{t+1}$ is a legal trace. Without
$\Gamma$, the elimination would be $\bot$.


\subsubsection{Relaxation.}
Algorithm~\ref{alg:gr1-relaxation} constructs the relaxation bound by treating
the initial, invariant, and liveness subproblems separately. Initial and
invariant elimination use the corresponding finite windows directly, while
the liveness subroutine derives consequences over the remaining relevant
variables from the context and target liveness cores, removing redundant
ones. For liveness relaxation, every context and target liveness core can
independently yield a consequence over the remaining relevant variables. The subroutine collects these
consequences in $\mathcal P$, discards those equivalent to $\top$, retains one
representative of each logical-equivalence class, and removes any core
$\rho_b$ strictly implied by another core $\rho_a$. Since
$\rho_a\Rightarrow\rho_b$ implies
$\Always\Eventually\rho_a\Rightarrow\Always\Eventually\rho_b$, these
reductions preserve the conjunction of liveness requirements exactly.

\begin{algorithm}[t]
\caption{GR(1) Relaxation CAVE}
\label{alg:gr1-relaxation}
\begin{algorithmic}[1]
\Input{ context
$\Gamma=\gamma_{\mathrm{init}}\wedge\Always\gamma_{\mathrm{inv}}
\wedge\bigwedge_{j\in J}\Always\Eventually\gamma_{\mathrm{liv},j}$,
target
$\phi=\phi_{\mathrm{init}}\wedge\Always\phi_{\mathrm{inv}}
\wedge\bigwedge_{i\in I}\Always\Eventually\phi_i$,
eliminated variables $\bar p$}
\Output{$\bar p$-free GR(1) relaxation $\Psi^{rel}$}

\State $B\gets\{\gamma_{\mathrm{init}},\phi_{\mathrm{init}}\}$;
$\psi^{rel}_{\mathrm{init}}\gets
\exists\bar p.\,
\bigl[
\gamma_{\mathrm{init}}^{\flat}\wedge
(\Gamma_{\mathrm{inv}}^{[0:h_B]})^\flat\wedge
\phi_{\mathrm{init}}^{\flat}
\bigr]$

\State $B\gets\{\phi_{\mathrm{inv}}\}$;
$\psi^{rel}_{\mathrm{inv}}\gets
\exists\bar p.\,
\bigl[
(\Gamma_{\mathrm{inv}}^{[0:h_B]})^\flat
\wedge\phi_{\mathrm{inv}}^{\flat}
\bigr]$

\State $\mathcal P^\star\gets
\Call{SelectLiveness}
{\{\phi_i\}_{i\in I},\gamma_{\mathrm{inv}},
 \{\gamma_{\mathrm{liv},j}\}_{j\in J},\bar p}$

\State $\Psi^{rel}\gets
\psi^{rel}_{\mathrm{init}}
\wedge\Always\psi^{rel}_{\mathrm{inv}}
\wedge\bigwedge_{\rho\in\mathcal P^\star}
\Always\Eventually\rho$

\State \Return $\Psi^{rel}$

\Statex
\Procedure{SelectLiveness}
{$\{\phi_i\}_{i\in I},\gamma_{\mathrm{inv}},
  \{\gamma_{\mathrm{liv},j}\}_{j\in J},\bar p$}

    \State $\mathcal P\gets\emptyset$

    \For{$j\in J$}
        \State $h_j\gets d(\gamma_{\mathrm{liv},j})$;
        $D_j\gets h_j+d(\gamma_{\mathrm{inv}})$
        \State $\psi^\gamma_j\gets
        \exists\bar p.\,
        \bigl[
        (\Gamma_{\mathrm{inv}}^{[0:h_j]})^\flat
        \wedge\gamma_{\mathrm{liv},j}^\flat
        \bigr]$
        \If{$\psi^\gamma_j\not\equiv\top$}
            \State $\mathcal P\gets
            \mathcal P\cup\{\psi^\gamma_j\}$
        \EndIf
    \EndFor

    \For{$i\in I$}
        \State $h_i\gets d(\phi_i)$;
        $D_i\gets h_i+d(\gamma_{\mathrm{inv}})$
        \State $\psi^\phi_i\gets
        \exists\bar p.\,
        \bigl[
        (\Gamma_{\mathrm{inv}}^{[0:h_i]})^\flat
        \wedge\phi_i^\flat
        \bigr]$
        \If{$\psi^\phi_i\not\equiv\top$}
            \State $\mathcal P\gets
            \mathcal P\cup\{\psi^\phi_i\}$
        \EndIf
    \EndFor

    \State $\mathcal P\gets\mathcal P/{\equiv}$;
    $\mathcal P^\star\gets\mathcal P$ \Comment{remove logically equivalent candidates from $\mathcal P$}

    \ForAll{$\rho_b\in\mathcal P$}
        \ForAll{$\rho_a\in\mathcal P\setminus\{\rho_b\}$}
            \If{$\mathrm{SMT\_UNSAT}
            (\rho_a\wedge\neg\rho_b)$
            \textbf{and}
            $\mathrm{SMT\_SAT}
            (\rho_b\wedge\neg\rho_a)$}
                \State $\mathcal P^\star\gets
                \mathcal P^\star\setminus\{\rho_b\}$
                \State \textbf{break}
            \EndIf
        \EndFor
    \EndFor

    \State \Return $\mathcal P^\star$
\EndProcedure
\end{algorithmic}
\end{algorithm}

The invariant condition $\Always\psi^{rel}_{\mathrm{inv}}$ is sound but need
not be exact, since the admissible construction performs elimination on a finite invariant
window rather than unrestricted infinite-trace elimination.
We therefore retain this finite GR(1) construction. Each existential elimination
is the strongest $\bar p$-free consequence of its fixed finite subproblem, and
\textsc{SelectLiveness} preserves the conjunction of the liveness
requirements obtained by elimination exactly. The resulting complete relaxation is strongest within
the decomposed finite-subproblem class of Definition~\ref{def:admissible-gr1-cave};
the proofs are in Appendices~\ref{app:soundness} and~\ref{app:optimality}.

\paragraph{Illustrative example: relaxation.}
For the same $\Gamma$ and $\phi$, the window subproblems evaluate to $q_0\lor s_0$
and $q_0$, so relaxation CAVE returns
$\exists p.\,(\Always\gamma_{\mathrm{inv}}\land\phi)
\equiv\Always{(q\lor s)}\land\Always\Eventually q$,
again exact, since the best case $p_{t+1}=q_t$ is a legal trace meeting $\phi$ whenever
the result holds.

\looseness=-1 \subsubsection{Complexity.}
Let $n_p:=|\bar p|$, $n:=|\bar p|+|\bar q|$, and $D_B=h_B+d(\gamma_{\mathrm{inv}})$ be the CAVE-window depth. Flattening produces $O(n(D_B+1))$ indexed variables, with formula size linear in the $h_B+1$ invariant copies before simplification. The dominant operation is finite propositional/SMT QE. For each refinement liveness target, Algorithm~\ref{alg:gr1-refinement} performs one QE call for the synthetic branch and $|J|$ for the support branches. Relaxation performs one QE call for each context and target liveness core, followed by at most quadratic finite SMT implication checks in \textsc{SelectLiveness}. Thus CAVE avoids general temporal QE by exploiting bounded-next GR(1) structure, while its symbolic cost depends strongly on the modulo theory and representation. 
Appendix~\ref{app:scalability} reports measurements on synthetic families, including a comparison baseline that unrolls the specifications over a finite horizon and ignores liveness (Appendix~\ref{app:eval-unrolling}), chains of compositions (Appendix~\ref{app:eval-chains}), and ablations (Appendix~\ref{app:eval-ablations}). The cost is governed by $n_p(D_B+1)$ (Appendix~\ref{app:eval-scaling}). On the Boolean substitution family (Appendix~\ref{app:eval-families}), refinement grows from $0.06\,\mathrm{s}$ to $27\,\mathrm{s}$ when $n_p$ goes from $2$ to $6$ at $D_B=1$, and $n_p=8$ exceeds $600\,\mathrm{s}$, whereas on linear real arithmetic the runtime stays below $0.8\,\mathrm{s}$ for $n_p\le 8$. The runtime grows linearly with the number of liveness cores (for refinement from $0.03\,\mathrm{s}$ to $0.22\,\mathrm{s}$ when $n_I$ goes from $1$ to $8$). These are observations over the tested ranges, not asymptotic bounds.

\subsubsection{Empirical comparisons.}
We compared the refinement construction with exact elimination through B\"uchi automata computed with Spot~\cite{duret2022spot}, by translating $\Gamma\wedge\neg\varphi$, existentially eliminating the irrelevant variables, and complementing, on a Boolean family designed to stress complementation. 
Spot~\cite{duret2022spot} is faster on small instances, but its cost grows steeply with the number of liveness conjuncts, as expected from complementation. Spot needs $0.003\,\mathrm{s}$ with one target liveness conjunct, $1.1\,\mathrm{s}$ with four, $62\,\mathrm{s}$ with five, and exceeds $300\,\mathrm{s}$ with six, whereas CAVE needs between $0.05$ and $0.36\,\mathrm{s}$ for one to eight. On the number of eliminated variables, the order reverses ($7.8\,\mathrm{s}$ for Spot against $137\,\mathrm{s}$ for CAVE, on seven variables). The automaton is exact, whereas the CAVE result was sound but not necessarily equivalent (Appendix~\ref{app:eval-automata}). Against an explicit-state oracle on bounded-lasso words ($8000$ random instances per direction), no unsound result was found in $16000$ classifications, and the refinement and relaxation results were exact on $35.3\%$ and $78.6\%$ respectively on nontrivial problems (where result and exact answer are not the same constant) (Appendix~\ref{app:eval-precision}).

\subsection{Soundness and Optimality}
\label{sec:cave-optimality}

We now show that the constructions above are sound solutions to the CAVE by refinement and relaxation problems.

\begin{theorem}[Soundness of GR(1) CAVE]
\label{thm:cave-soundness}
Let $\Psi^{\mathrm{ref}}(\bar q)$ and $\Psi^{\mathrm{rel}}(\bar q)$ be the $\bar p$-free GR(1) formulas returned by Algorithms~\ref{alg:gr1-refinement} and~\ref{alg:gr1-relaxation} for context $\Gamma$, target $\phi$, and eliminated variables $\bar p$. Then
\[
  \Psi^{ref}(\bar q)
  \Rightarrow
  \forall\bar p.
  \bigl(
    \Gamma(\bar p,\bar q)
    \Rightarrow
    \phi(\bar p,\bar q)
  \bigr) \quad \text{and} \quad  \exists\bar p.
  \bigl(
    \Gamma(\bar p,\bar q)
    \wedge
    \phi(\bar p,\bar q)
  \bigr)
  \Rightarrow
  \Psi^{rel}(\bar q).
\]
\end{theorem}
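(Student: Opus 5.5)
The plan is to prove both implications trace-by-trace, conjunct-by-conjunct. Each conjunct of $\Psi^{ref}$ or $\Psi^{rel}$ is the lift of a finite-window certificate, so we transfer the finite obligations to infinite traces using Lemma~\ref{lem:flattening-correctness} together with Lemma~\ref{lem:shifted-invariant}, which says $\Always\gamma_{\mathrm{inv}}$ implies $\Next^k\gamma_{\mathrm{inv}}$ at every position.

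\textbf{Refinement.} Since $\Psi^{ref}$ is $\bar p$-free, it suffices to show $\Gamma\wedge\Psi^{ref}\models\phi$. Fix a trace $\sigma$ with $\sigma\models\Gamma\wedge\Psi^{ref}$. The trace may assign arbitrary values to $\bar p$, which is exactly what the $\forall\bar p$ demands.
\begin{itemize}
\item \emph{Initial conjunct.} At $t=0$, $\sigma$ satisfies $\gamma_{\mathrm{init}}$, $\Gamma_{\mathrm{inv}}^{[0:h_B]}$ and $\psi^{ref}_{\mathrm{init}}$. By Lemma~\ref{lem:flattening-correctness}, the valuation $v_{\sigma,0,D_B}$ satisfies the flattened versions. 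Because $\psi^{ref}_{\mathrm{init}}$ is the universal elimination of the flattened implication, it implies that implication at this particular valuation of $\bar p_0,\ldots,\bar p_{D_B}$. Hence $\phi_{\mathrm{init}}^\flat$ holds, and flattening back gives $(\sigma,0)\models\phi_{\mathrm{init}}$.
\item \emph{Invariant conjunct.} The same argument applies at every $t$, using $\Always\psi^{ref}_{\mathrm{inv}}$ and the shifted invariant context at $t$.
\item \emph{Synthetic liveness branch ($j_i=0$).} Infinitely many $t$ satisfy $\psi^{ref}_{i,0}$, and at each such $t$ the invariant window holds. So $\phi_i$ holds infinitely often.
\item \emph{Support branch ($j_i=j$).} $\psi^{ref}_{i,j}$ holds at every $t$, and $\Always\Eventually\gamma_{\mathrm{liv},j}$ from $\Gamma$ supplies infinitely many $t$ where $\gamma_{\mathrm{liv},j}$ holds. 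At those $t$ the finite obligation yields $\phi_i$.
\end{itemize}
This is where the choice of $\Always$ rather than $\Always\Eventually$ for support branches is essential. It is the step I expect to need the most care: under $\Always\Eventually$ the two recurrence sets need not intersect. The fact that $\operatorname{Select}$ picks a single branch per $i$ is harmless, since each candidate is individually sound.

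\textbf{Relaxation.} Fix $\sigma'$ over $\bar q$ satisfying $\exists\bar p.(\Gamma\wedge\phi)$. Choose a witness trace $\sigma$ that agrees with $\sigma'$ on $\bar q$ and satisfies $\Gamma\wedge\phi$; it suffices to show $\sigma\models\Psi^{rel}$, since $\Psi^{rel}$ is $\bar p$-free.
\begin{itemize}
\item \emph{Initial and invariant conjuncts.} At $t=0$ (respectively every $t$), $\sigma$ satisfies the initial (respectively invariant) core conjunction. By flattening, the valuation witnesses the existential, so $\psi^{rel}_{\mathrm{init}}$ (respectively $\psi^{rel}_{\mathrm{inv}}$) holds there.
\item \emph{Liveness conjuncts.} Each $\psi^\gamma_j$ and $\psi^\phi_i$ holds at each of the infinitely many positions where its core holds, again with the invariant window available. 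So $\Always\Eventually\psi$ holds for every $\psi\in\mathcal P$.
\end{itemize}
It remains to check that \textsc{SelectLiveness} only discards conjuncts implied by retained ones. Dropped $\top$ cores are trivial, and equivalence-class representatives are equivalent. A core $\rho_b$ is removed only if some $\rho_a$ satisfies $\rho_a\models\rho_b$. The subtle point is that $\rho_a$ itself may later be removed. I will handle this by noting that strict implication is a strict partial order on the finite set $\mathcal P/{\equiv}$, so every removed element is implied by a maximal, retained one. Therefore $\bigwedge_{\mathcal P^\star}\Always\Eventually\rho$ is equivalent to $\bigwedge_{\mathcal P}\Always\Eventually\rho$, which completes the relaxation direction.
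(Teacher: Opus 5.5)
Your proposal is correct and takes essentially the paper's route (Theorems~\ref{thm:refinement-soundness} and~\ref{thm:relaxation-soundness}, via Lemmas~\ref{lem:synthetic-liveness-refinement}, \ref{lem:fixed-support-liveness-refinement}, and~\ref{lem:bounded-next-existential-relaxation}): you transfer each finite-window obligation conjunct by conjunct using flattening and the shifted invariant, and you lift support branches with $\Always$ so that the support core and the certificate hold at the same positions. Your well-foundedness argument for \textsc{SelectLiveness} makes explicit what Lemma~\ref{lem:select-liveness-exact-reduction} leaves implicit, namely that the witness $\rho_a$ may itself be removed; for soundness alone, however, only the trivial direction is needed, because $\mathcal P^\star\subseteq\mathcal P$ and dropping conjuncts can only weaken $\Psi^{rel}$.
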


\begin{proofsketch}
Exact finite-window flattening and universal elimination establish the refinement
INIT/INV obligations. For LIV, recurrence of the synthetic core after elimination is
sufficient under the invariant context, while a genuine recurring support uses
a persistent implication after elimination so that both hold at the same support
positions. Existential elimination gives the relaxation consequences, and
\textsc{SelectLiveness} removes only terms redundant in their recurrent
conjunction. Appendix~\ref{app:soundness} gives the trace-level proofs.
\end{proofsketch}


Next, we compare CAVE with all admissible solutions of the finite subproblems.

\begin{definition}[Finite-Subproblem Optimality]
\label{def:gr1-optimality}
Within a fixed admissible refinement subproblem or certificate branch as in Definition~\ref{def:admissible-gr1-cave},
$\eta^\star$ is \emph{weakest} iff $\eta\Rightarrow\eta^\star$ for every admissible certificate $\eta$. Similarly, within a fixed admissible relaxation subproblem as in Definition~\ref{def:admissible-gr1-cave}, $\rho^\star$ is
\emph{strongest} iff $\rho^\star\Rightarrow\rho$ for every admissible consequence $\rho$. These orders are subproblem or branch specific and do not imply
complete-refinement optimality.
\end{definition}

\begin{theorem}[Finite-Subproblem Optimality of CAVE]
\label{thm:finite-window-gr1-optimality}
Under Definition~\ref{def:gr1-optimality}, for refinement,
$\psi^{ref}_{\mathrm{init}}$ and $\Always\psi^{ref}_{\mathrm{inv}}$ are weakest
in their corresponding finite subproblems; for each liveness target $i$,
$\Always\Eventually\psi^{ref}_{i,0}$ and, for every fixed $j\in J$,
$\Always\psi^{ref}_{i,j}$ are weakest in their respective admissible branches.

For relaxation, $\psi^{rel}_{\mathrm{init}}$,
$\Always\psi^{rel}_{\mathrm{inv}}$,
$\Always\Eventually\psi^\gamma_j$ for every $j\in J$, and
$\Always\Eventually\psi^\phi_i$ for every $i\in I$ are strongest in their
respective finite subproblems. Moreover, the complete relaxation is strongest within the decomposed finite-subproblem class of
Definition~\ref{def:admissible-gr1-cave}.
\end{theorem}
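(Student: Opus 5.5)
The argument reduces everything to the finite-window characterization of universal and existential elimination. For a quantifier-free formula $F(\bar p,\bar q)$ over finitely many indexed variables, $\forall\bar p.\,F$ is the weakest $\bar p$-free formula implying $F$, and $\exists\bar p.\,F$ is the strongest $\bar p$-free formula implied by $F$. The steps are: (i) translate each admissibility obligation of Definition~\ref{def:admissible-gr1-cave} into a statement about flattened formulas; (ii) observe that the algorithm's certificate is exactly the corresponding $\forall\bar p$ or $\exists\bar p$; (iii) lift the comparison through the temporal operator attached to that subproblem.

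\textbf{Refinement.} Take the invariant subproblem with $B=\{\phi_{\mathrm{inv}}\}$. Let $\eta$ be any admissible certificate, so $\eta$ is bounded-next, $\bar p$-free, and $\eta\models\Gamma_{\mathrm{inv}}^{[0:h_B]}\Rightarrow\phi_{\mathrm{inv}}$.
\begin{enumerate}
\item By Lemma~\ref{lem:flattening-correctness}, with window depth $D=\max(D_B,d(\eta))$, this entailment is equivalent to $\eta^\flat\models(\Gamma_{\mathrm{inv}}^{[0:h_B]})^\flat\Rightarrow\phi_{\mathrm{inv}}^\flat$.
\item Every valuation of the window variables is realized by some trace, by choosing $\sigma_{t+k}$ arbitrarily. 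Entailment over traces and entailment over valuations therefore coincide.
\item Since $\eta^\flat$ does not mention $\bar p_0,\dots,\bar p_{D}$, the entailment holds for every assignment to those copies. Hence $\eta^\flat\models\forall\bar p.\bigl((\Gamma_{\mathrm{inv}}^{[0:h_B]})^\flat\Rightarrow\phi_{\mathrm{inv}}^\flat\bigr)=\psi^{ref}_{\mathrm{inv}}$.
\item Read back through Lemma~\ref{lem:flattening-correctness}, this gives $\eta\Rightarrow\psi^{ref}_{\mathrm{inv}}$ at every position, and so $\Always\eta\Rightarrow\Always\psi^{ref}_{\mathrm{inv}}$.
\end{enumerate}
The same four steps, with the obligation changed, handle $\psi^{ref}_{\mathrm{init}}$ (add $\gamma_{\mathrm{init}}$), the synthetic branch $\psi^{ref}_{i,0}$, and each fixed support branch $\psi^{ref}_{i,j}$ (add $\gamma_{\mathrm{liv},j}$). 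Monotonicity of $\Always$ and of $\Always\Eventually$ then gives weakness of the lifted conjuncts within each branch.

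Weakness also requires that $\psi^{ref}$ itself be admissible. It is $\bar p$-free by construction, and it satisfies its own obligation because $\forall\bar p.F\models F$. One point needs care: QE output over the flattened variables must be re-read as a bounded-next core via $x_k\mapsto\Next^k x$, and this re-reading is well defined because flattening is a bijection on indexed variables.

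\textbf{Relaxation, individual conjuncts.} The argument is dual. Suppose $\rho$ is admissible for the invariant subproblem, so $\Gamma_{\mathrm{inv}}^{[0:h_B]}\wedge\phi_{\mathrm{inv}}\models\rho$ with $\rho$ $\bar p$-free. Flattening this entailment and using that $\rho^\flat$ does not depend on the $\bar p$-copies gives $\exists\bar p.[\cdots]\models\rho^\flat$, that is, $\psi^{rel}_{\mathrm{inv}}\Rightarrow\rho$. Lifting gives $\Always\psi^{rel}_{\mathrm{inv}}\Rightarrow\Always\rho$. The initial conjunct and the liveness conjuncts $\psi^\gamma_j$ and $\psi^\phi_i$ are handled identically, using monotonicity of $\Always\Eventually$.

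\textbf{Relaxation, complete result.} Let $\eta_{\mathrm{init}}\wedge\Always\eta_{\mathrm{inv}}\wedge\bigwedge_j\Always\Eventually\eta^\gamma_j\wedge\bigwedge_i\Always\Eventually\eta^\phi_i$ be any member of the decomposed class. Each component is implied by the corresponding CAVE component, by the previous paragraph.

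The main obstacle is \textsc{SelectLiveness}, which does not return every $\psi^\gamma_j$ and $\psi^\phi_i$. It drops those equivalent to $\top$, keeps one representative per equivalence class, and drops any $\rho_b$ strictly implied by some $\rho_a$. I must show these deletions do not weaken the conjunction, i.e.\ that $\bigwedge_{\rho\in\mathcal P^\star}\Always\Eventually\rho$ is equivalent to the conjunction over all computed cores. Two of the three deletions are easy:
\begin{itemize}
\item A core equivalent to $\top$ contributes $\Always\Eventually\top\equiv\top$.
\item An equivalent representative contributes an equivalent conjunct.
\end{itemize}
For strict implication, the difficulty is that $\rho_b$ may be removed because of some $\rho_a$ that is itself removed later. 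Strict implication is a strict partial order on the finite set $\mathcal P/{\equiv}$, so every removed element is strictly implied by some $\rho_a$ that is maximal in strength, meaning not strictly implied by any other element. Since $\mathcal P/{\equiv}$ contains no two equivalent elements, such a maximal $\rho_a$ is never removed and so remains in $\mathcal P^\star$. Then $\rho_a\Rightarrow\rho_b$ gives $\Always\Eventually\rho_a\Rightarrow\Always\Eventually\rho_b$, so the conjunct $\Always\Eventually\rho_b$ is redundant.

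This establishes the equivalence, and conjoining the component implications yields strongest-ness of the complete relaxation within the class.
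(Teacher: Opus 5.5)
Your proposal is correct and follows the paper's route. You flatten each admissible obligation exactly (Lemma~\ref{lem:flattening-correctness}), apply the optimality of universal and existential elimination (Lemmas~\ref{lem:universal-projection-optimal} and~\ref{lem:existential-projection-optimal}), lift through $\Always$ and $\Always\Eventually$ by monotonicity, and invoke the exactness of \textsc{SelectLiveness} (Lemma~\ref{lem:select-liveness-exact-reduction}) for the complete relaxation. Two points you make explicit are passed over quickly in the paper: your maximal-element argument for chained removals in \textsc{SelectLiveness}, and your check that the CAVE certificates are themselves admissible.
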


\begin{proofsketch}
Flattening converts admissible temporal subproblem conditions to finite
indexed implications. Universal elimination then gives the weakest $\bar p$-free
sufficient refinement conditions; existential elimination dually gives the
strongest $\bar p$-free relaxation consequences.
\textsc{SelectLiveness} preserves the recurrent conjunction, so strongest
relaxation consequences compose across the subproblems. Refinement branch
choices need not compose optimally. See Appendix~\ref{app:optimality}.
\end{proofsketch}

The theorem concerns the admissible finite GR(1) constructions of
Definition~\ref{def:admissible-gr1-cave}. Admissibility is a logical condition
on a certificate, namely sufficiency for its window obligation, and is stated
without reference to Algorithm~\ref{alg:gr1-refinement}; the theorem therefore
says that CAVE loses nothing relative to any certificate satisfying it, and not
merely relative to its own output.
The admissible class is not an arbitrary restriction. It is what the GR(1)
decomposition prescribes, independently of any algorithm, namely solving each
initial, invariant, and liveness obligation separately at the position where it
is checked, under the invariant context from that position onward. Within each
such subproblem the finite-window representation is exact
(Lemma~\ref{lem:flattening-correctness}), so CAVE returns the weakest refinement
and strongest relaxation attainable for that subproblem.
Any conservatism therefore comes from what the class deliberately leaves out, that is,
dependencies between distinct obligations, as in
the example following Definition~\ref{def:admissible-gr1-cave}, and
consistency conditions of unbounded depth. For example, over real-valued $u$,
every finite shifted-window conjunction formed
from $u\geq0\wedge\Next u\leq u-1$ is satisfiable, whereas
$\Always(u\geq0\wedge\Next u\leq u-1)$ is unsatisfiable, so an unrestricted exact
invariant relaxation may require increasingly deep consistency conditions.


\section{GR(1) Contract Operations in Pacti}
\label{sec:gr1-contract-operations}

We instantiate the contract algebra of Section~\ref{sec:assume-guarantee-contracts} for GR(1)-modulo theory using the existing Pacti contract operation algorithms ~\cite{pacti}. The contract algebra itself is unchanged; the refinement and
relaxation CAVE procedures of Algorithms~\ref{alg:gr1-refinement}
and~\ref{alg:gr1-relaxation} provide the GR(1)-specific variable-elimination
primitives that composition, quotient, and related operations require. As in~\cite{pacti}, the resulting GR(1) composition is a
sound relaxation of the exact algebraic composition, while the quotient is a
refinement of the exact quotient; merging is exact after lifting to a compatible
union interface. The algorithmic realizations of these operations and their soundness results
are given in Appendix~\ref{app:gr1-contract-operations}.

Internally, contracts maintain separate \textsc{Init}, \textsc{Inv}, and
\textsc{Liv} assumptions and guarantees. Operations construct
the corresponding assume--guarantee expressions and invoke CAVE, using
finite-window flattening and symbolic QE, whenever irrelevant
variables must be eliminated; the generated GR(1) contracts are retained for
all subsequent operations. 

The implementation also provides a sound refinement verifier for bounded-next
GR(1) contracts (Appendix~\ref{app:gr1-refinement-check}). It checks
the structured \textsc{Init}, \textsc{Inv}, and \textsc{Liv} obligations using
direct implications and finite certificates formed from shifted source
invariants, independently of the finite CAVE windows used to construct operation
results; an exact procedure would instead require a separate infinite-trace temporal
reasoning backend, beyond the finite SMT-based CAVE framework considered here.
By Proposition~\ref{prop:app-refinement-soundness} in Appendix~\ref{app:gr1-refinement-check}, every successful check
establishes semantic contract refinement, whereas failure means only that the
implemented procedure did not establish refinement.

\subsection{Running Example: Composition and Quotient}
\label{sec:aeb-operations}

We apply the GR(1) contract operations to the AEB architecture of
Section~\ref{sec:aeb-running-example}. Composition returns a sound
relaxation of the exact composition, and quotient a sound refinement of the
exact quotient. We compute
$\mathcal C_{\mathrm{comp}}:=\mathcal C_S\parallel\mathcal C_V$ and then
$\mathcal C_{SVD}:=\mathcal C_{\mathrm{comp}}\parallel\mathcal C_D$, which
together eliminate $\{u,\MOV,\DCL,\STP,v\}$ and leave inputs
$\{d,\URG,\WRN\}$ and output $\{z\}$. Separately, we compute
$\mathcal Q:=\mathcal C_T/\mathcal C_V$, with inputs $\{d,\URG,\WRN\}$ and
outputs $\{u,\MOV,\DCL,\STP\}$, resulting in what a supervisor component must
satisfy to meet $\mathcal C_T$ when composed with the vehicle.

Let $\mathsf{clear}:=\neg(\URG\lor\WRN)$ and abbreviate the environmental
invariant core by
$E_b := b\le d\le30\land(d\le10\Rightarrow\URG\land\neg\WRN)
\land(10<d\le15\Rightarrow\WRN\land\neg\URG)
\land(d>15\Rightarrow\mathsf{clear})
\land(\mathsf{clear}\Rightarrow d\ge20)$. Table~\ref{tab:aeb-operation-results} shows the contracts generated
by Pacti, in compact forms verified equivalent to the generated contracts by mutual refinement. For readability, both initial assumptions state
$20\le d\le30$ and the composition's initial guarantee shows the full
predicted-clearance interval.

\begin{table}[!t]
\centering
\caption{Equivalent compact forms of the computed AEB contracts.}
\label{tab:aeb-operation-results}
\scriptsize
\setlength{\tabcolsep}{5pt}
\renewcommand{\arraystretch}{1.2}

\resizebox{\columnwidth}{!}{%
\begin{tikzpicture}
\node[inner sep=0pt] (aebopstable) {%
\begin{tabular}{
    @{}
    >{\raggedright\arraybackslash}m{0.49\textwidth}
    >{\raggedright\arraybackslash}m{0.49\textwidth}
    @{}
}
\toprule

\multicolumn{1}{c}{
    \textbf{Composition $\mathcal C_{SVD}$}
}
&
\multicolumn{1}{c}{
    \textbf{Quotient $\mathcal Q=\mathcal C_T/\mathcal C_V$}
}
\tabularnewline

\midrule

\begin{tabular}{@{}l@{}}
\textbf{Inputs:}\quad $d,\URG,\WRN$\\
\textbf{Output:}\quad $z$
\end{tabular}
&
\begin{tabular}{@{}l@{}}
\textbf{Inputs:}\quad $d,\URG,\WRN$\\
\textbf{Outputs:}\quad $u,\MOV,\DCL,\STP$
\end{tabular}
\tabularnewline

\midrule

\multicolumn{1}{c}{
    \begin{tabular}{@{}c@{}}
    \textbf{Assumptions}\\[-0.3mm]
    $A_{SVD}
    =A_{SVD,\mathrm{init}}
    \land A_{SVD,\mathrm{inv}}
    \land A_{SVD,\mathrm{liv}}$
    \end{tabular}
}
&
\multicolumn{1}{c}{
    \begin{tabular}{@{}c@{}}
    \textbf{Assumptions}\\[-0.3mm]
    $A_Q
    =A_{Q,\mathrm{init}}
    \land A_{Q,\mathrm{inv}}
    \land A_{Q,\mathrm{liv}}$
    \end{tabular}
}
\tabularnewline

\begin{tabular}{@{}l@{}}
$A_{SVD,\mathrm{init}}$:\quad
$20\le d\le30\land\mathsf{clear}$
\\[1.5mm]
$A_{SVD,\mathrm{inv}}$:\quad $\Always(E_0)$
\\[1.5mm]
$A_{SVD,\mathrm{liv}}$:\quad
$\Always\Eventually(\mathsf{clear})$
\end{tabular}
&
\begin{tabular}{@{}l@{}}
$A_{Q,\mathrm{init}}$:\quad
$20\le d\le30\land\mathsf{clear}$
\\[1.5mm]
$A_{Q,\mathrm{inv}}$:\quad $\Always(E_{3/2})$
\\[1.5mm]
$A_{Q,\mathrm{liv}}$:\quad
$\Always\Eventually(\mathsf{clear})$
\end{tabular}
\tabularnewline

\midrule

\multicolumn{1}{c}{
    \begin{tabular}{@{}c@{}}
    \textbf{Guarantees}\\[-0.3mm]
    $G_{SVD}
    =G_{SVD,\mathrm{init}}
    \land G_{SVD,\mathrm{inv}}
    \land G_{SVD,\mathrm{liv}}$
    \end{tabular}
}
&
\multicolumn{1}{c}{
    \begin{tabular}{@{}c@{}}
    \textbf{Guarantees}\\[-0.3mm]
    $G_Q
    =G_{Q,\mathrm{init}}
    \land G_{Q,\mathrm{inv}}
    \land G_{Q,\mathrm{liv}}$
    \end{tabular}
}
\tabularnewline

\begin{tabular}{@{}l@{}}
$G_{SVD,\mathrm{init}}$:\quad
$9d/10-1/2\le\Next z\le9d/10+1/2$
\\[1.5mm]
$G_{SVD,\mathrm{inv}}$:\quad
$\displaystyle
\Always\!\left(
\begin{aligned}
&\,\,\,4d/5-1/2\le\Next z\le d\\
&\land\bigl(\URG\land\neg\WRN\\
&\hspace{6mm}\Rightarrow\Next z\ge d-1/2\bigr)\\
&\land\bigl(\mathsf{clear}\\
&\hspace{6mm}\Rightarrow\Next z\le9d/10+1/2\bigr)
\end{aligned}
\right)$
\\[1.5mm]
$G_{SVD,\mathrm{liv}}$:\quad $\mathsf{true}$
\end{tabular}
&
\begin{tabular}{@{}l@{}}
$G_{Q,\mathrm{init}}$:\quad $2\le u\le3$
\\[1.5mm]
$G_{Q,\mathrm{inv}}$:\quad
$\displaystyle
\Always\!\left(
\begin{aligned}
&\,\,\,0\le u\le6\\
&\land u\le d-3/2\\
&\land(\Next\STP\Rightarrow u=0)\\
&\land(\Next\DCL\Rightarrow u\le3)\\
&\land(\Next\MOV\Rightarrow u\ge2)
\end{aligned}
\right)$
\\[9mm]
$G_{Q,\mathrm{liv}}$:\quad
$\displaystyle
\begin{aligned}
\,\,\,\Always\Eventually(\mathsf{clear}\land u\ge2) \land\Always\Eventually\MOV
\end{aligned}$
\end{tabular}
\tabularnewline

\bottomrule
\end{tabular}%
};

\draw[line width=0.4pt]
    ([yshift=-5.3mm]aebopstable.north)
    --
    (aebopstable.south);

\end{tikzpicture}}
\end{table}

Composition exposes a compatibility requirement that neither component states.
The condition $\mathsf{clear}\Rightarrow d\ge20$ is introduced by
$\mathcal C_S\parallel\mathcal C_V$ and reused in $A_T$. It is the weakest
condition under which every clear-mode supervisor command meets the vehicle
requirement: for $15<d<20$, the supervisor may issue commands
$u\in[d/10,d/5]$ below $2$, violating $\Next\MOV\Rightarrow u\ge2$.
Composing with $\mathcal C_D$ preserves these assumptions and exposes the
one-step predicted residual clearance through
$\Next z=d-\Delta t\,\Next v$.
For $\Delta t=1$, the computed relaxed composition yields region-wise
bounds on this clearance in every environment mode, namely
$\URG\land\neg\WRN\Rightarrow d-\tfrac12\le\Next z\le d$,
$\WRN\land\neg\URG\Rightarrow\tfrac{4d}{5}-\tfrac12\le\Next z\le d$, and
$\mathsf{clear}\Rightarrow\tfrac{4d}{5}-\tfrac12\le\Next z\le\tfrac{9d}{10}+\tfrac12$.
These bounds describe the computed relaxation, not exact physical dynamics (Figure~\ref{fig:aeb-operations-regions}(a)).

The quotient generates concrete requirements for the missing
supervisor (Figure~\ref{fig:aeb-operations-regions}(b)). It retains the vehicle's command-admissibility
requirements and adds one-step safety and recurrent progress in clear
conditions: $u\le\frac{d-d_s}{\Delta t}-\epsilon$ and $\Always\Eventually
  (\mathsf{clear}\land u\ge v_{\mathrm{prog}}+\epsilon)$.
For the chosen parameters, these become $u\le d-3/2$ and
$\Always\Eventually(\mathsf{clear}\land u\ge2)$. The command bound leaves
room for the vehicle's tracking error while meeting the target's
braking-distance bound, and the progress condition ensures forward motion
whenever the environment clears. Because the quotient retains $u\ge0$,
the eliminations simplify exactly to these forms. The requirement $\Always\Eventually\MOV$ stays separate,
since the two recurrence conditions need not hold at the same positions.
The restriction $d\ge3/2$ is the safety-feasibility condition already
included in $A_T$, not generated by quotient. Pacti verifies both
$\mathcal C_S\le\mathcal Q$, so the existing supervisor already meets the
computed requirements over the operating domain, and $\mathcal C_{\mathrm{rec}}\le\mathcal C_T$, where
$\mathcal C_{\mathrm{rec}}:=\mathcal C_V\parallel\mathcal Q$. Hence, any supervisor refining $\mathcal Q$ meets the target when composed with the vehicle.

\begin{figure}[t]
\centering
\begin{minipage}[t]{0.49\linewidth}
  \centering
  \includegraphics[width=\linewidth]{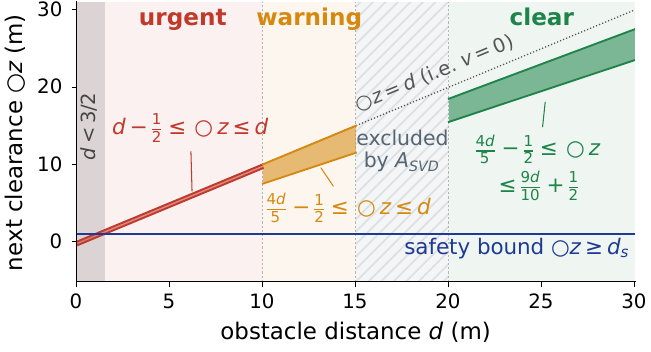}\\[-0.5ex]
  {\small (a)}
\end{minipage}\hfill
\begin{minipage}[t]{0.49\linewidth}
  \centering
  \includegraphics[width=\linewidth]{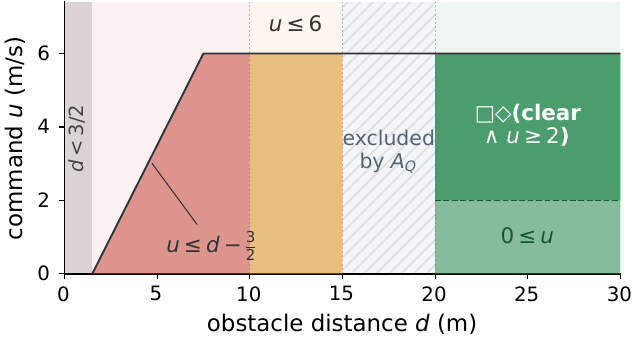}\\[-0.5ex]
  {\small (b)}
\end{minipage}
\caption{Region-wise view of the computed AEB contracts over the obstacle
distance $d$ (hatched: excluded by the computed assumption; gray:
$d<\tfrac32$, outside $A_T$).
(a)~Relaxed composition $\mathcal C_{SVD}$: bounds on the clearance
$\bigcirc z$ per mode, above the safety bound $\bigcirc z\ge d_s$ for
$d\ge\tfrac32$.
(b)~Quotient $\mathcal Q=\mathcal C_T/\mathcal C_V$: admissible commands
$0\le u\le\min\{6,\,d-\tfrac32\}$ and recurrent progress $u\ge2$ when clear.}
\label{fig:aeb-operations-regions}
\end{figure}

Finally, the refinement checker establishes
$\mathcal C_{\mathrm{comp}}\le\mathcal C_T$: under the target assumptions,
the supervisor and vehicle together satisfy the target's safety bound and
recurrent progress. $\mathcal C_{\mathrm{comp}}$ relaxes the exact
composition, so the latter also refines $\mathcal C_T$.
The recurring support
$\Always\Eventually(\mathsf{clear})$ comes from the target
assumption $A_{T,\mathrm{liv}}$; combined with the actual invariant context,
whose clear-region consequences include $d\ge20$ and
$\Next v\ge d/10-1/2$, it yields
$\Always\Eventually(\mathsf{clear}\land\Next v\ge3/2)$.
The certificate succeeds at horizon $h=0$. Support alone fails, invariants
alone do not establish recurring clear positions, but their combination
succeeds by exercising the invariant-assisted recurrence branch of
Algorithm~\ref{alg:app-gr1-refinement-check} in Appendix~\ref{app:gr1-refinement-check}.
All results are computed automatically in Pacti. Computing $\mathcal C_{SVD}$ and $\mathcal Q$ takes a median of $0.93\,\mathrm{s}$ over $10$ runs, of which $0.48\,\mathrm{s}$ is spent in CAVE. Detailed run-time results are reported in Appendix~\ref{app:scalability-aeb}.

\section{Conclusion and Future Work}
\label{sec:conclusion}

We developed a bounded-next GR(1) assume--guarantee contract framework with
sound refinement and relaxation CAVE procedures and weakest/strongest
finite-subproblem guarantees within the admissible class. Their Pacti
implementation supports composition, quotient, merging, and refinement
analysis through finite symbolic computation; the AEB study illustrates the
resulting system-level requirements and compatibility conditions. Experiments indicate that the cost grows steeply with the number of eliminated variables but only linearly with the number of liveness conjuncts. Future work
includes more complete temporal refinement verification, less conservative
CAVE branch selection, extensions beyond bounded-next GR(1), and evaluation on
larger compositional systems.

\paragraph{Data Availability Statement.} The implementation, examples, verification tests, benchmark and comparison scripts, plotting scripts, and the raw per-run measurements behind the experiments of Appendix~\ref{app:scalability} will be made publicly available in an archived artifact with a persistent identifier upon acceptance. The link is withheld for double-blind review. The raw data record the source revision, the package versions, and the digests of the benchmark scripts.



\bibliographystyle{splncs04}
\bibliography{references}

\newpage
\appendix
\section*{Appendices}

\addcontentsline{toc}{section}{Appendices}
\input{appendices}

\end{document}

%% file: appendices.tex
\emergencystretch=2em 
\section{GR(1) Contract-Operation Algorithms}
\label{app:gr1-contract-operations}

This appendix gives the GR(1) realizations of the contract-operation algorithms
implemented in Pacti~\cite{pacti}. The underlying assume--guarantee algebra is
the one introduced in Section~\ref{sec:assume-guarantee-contracts}; CAVE
eliminates the irrelevant variables of an interconnection while
retaining a bounded-next GR(1) representation.

Throughout this appendix, let
\[
\mathcal C=(I,O,A,G),
\qquad
\mathcal C'=(I',O',A',G')
\]
denote GR(1) contracts, where $I$ and $O$ are the input and output interfaces
and $A$ and $G$ are the assumptions and guarantees, respectively. Although we
write $A$ and $G$ as complete formulas below, the implementation represents
each of them through separate \textsc{Init}, \textsc{Inv}, and \textsc{Liv}
buckets.

\subsection{CAVE primitives}
\label{app:cave-primitives}

We first recall the semantic properties of the two CAVE primitives used by the
contract operations. Let $\Gamma$ be a GR(1) context, $\phi$ a GR(1) target,
and $\bar p$ the variables to eliminate. Algorithm~\ref{alg:gr1-refinement}
returns a $\bar p$-free GR(1) formula
$\Psi^{\mathrm{ref}}=\Call{RefineCAVE}{\Gamma,\phi,\bar p}$ with
$\Gamma\land\Psi^{\mathrm{ref}}\Rightarrow\phi$; since $\Psi^{\mathrm{ref}}$ is
$\bar p$-free, this is equivalent to
$\Psi^{\mathrm{ref}}\Rightarrow\forall\bar p\,(\Gamma\Rightarrow\phi)$
(Theorem~\ref{thm:refinement-soundness}). Dually,
Algorithm~\ref{alg:gr1-relaxation} returns a $\bar p$-free GR(1) formula
$\Psi^{\mathrm{rel}}=\Call{RelaxCAVE}{\Gamma,\phi,\bar p}$ with
$\Gamma\land\phi\Rightarrow\Psi^{\mathrm{rel}}$, equivalently
$\exists\bar p\,(\Gamma\land\phi)\Rightarrow\Psi^{\mathrm{rel}}$
(Theorem~\ref{thm:relaxation-soundness}). Refinement CAVE thus yields a GR(1)
strengthening of the exact universal elimination, and relaxation CAVE a GR(1)
weakening of the exact existential elimination.

These implications, rather than equality with the quantified formulas, are the
properties required below. Both procedures operate on the structured GR(1)
representation: initial and invariant obligations are projected under the
corresponding finite shifted invariant contexts, and liveness obligations are
treated core by core. The calls in the algorithms below therefore denote the
complete structured CAVE procedures, not monolithic LTL quantifier elimination.

\subsection{GR(1) Refinement}
\label{app:gr1-refinement-check}

For contracts
$\mathcal C=(I,O,A,G)$ and
$\mathcal C'=(I,O,A',G')$ over the same interface, refinement is defined by
\begin{equation}
\mathcal C\leq\mathcal C'
\quad\Longleftrightarrow\quad
(A'\Rightarrow A)
\land
(A'\land G\Rightarrow G').
\label{eq:app-exact-refinement}
\end{equation}
These are semantic infinite-trace implications. The implementation verifies
them using a sound finite procedure over the structured bounded-next
GR(1) representation.

\begin{algorithm}[H]
\caption{GR(1) Contract Refinement Verification}
\label{alg:app-gr1-refinement-check}
\begin{algorithmic}[1]
\Input{$\mathcal C=(I,O,A,G)$,
       $\mathcal C'=(I,O,A',G')$}
\Output{\textsc{True} if $\mathcal C\leq\mathcal C'$ is certified}

\State $b_A\gets\Call{GR1Implies}{A',A}$
\State $b_G\gets
\Call{GR1Implies}{G\land A',\,G'\land A'}$
\State \Return $b_A\land b_G$

\vspace{1mm}
\Function{GR1Implies}{$S,T$}
    \State $I_S\gets\bigwedge S_{\mathrm{init}}$,
           $I_T\gets\bigwedge T_{\mathrm{init}}$
    \State $b_{\mathrm{init}}\gets
           \Call{Implies}{I_S,I_T}$
    \If{$\neg b_{\mathrm{init}}$}
        \For{$h=0,\ldots,h_{\max}$}
            \State $K_h\gets
            \displaystyle\bigwedge_{k=0}^{h}
            \Next^k\!\left(\bigwedge S_{\mathrm{inv}}\right)$
            \If{$\Call{Implies}{I_S\land K_h,I_T}$}
                \State $b_{\mathrm{init}}\gets\textsc{True}$
                \State \textbf{break}
            \EndIf
        \EndFor
    \EndIf
    \If{$\neg b_{\mathrm{init}}$}
        \State \Return \textsc{False}
    \EndIf

    \ForAll{$t\in T_{\mathrm{inv}}$}
        \State $b\gets\textsc{False}$
        \For{$h=0,\ldots,h_{\max}$}
            \State $K_h\gets
            \displaystyle\bigwedge_{k=0}^{h}
            \Next^k\!\left(\bigwedge S_{\mathrm{inv}}\right)$
            \If{$\Call{Implies}{K_h,t}$}
                \State $b\gets\textsc{True}$
                \State \textbf{break}
            \EndIf
        \EndFor
        \If{$\neg b$}
            \State \Return \textsc{False}
        \EndIf
    \EndFor

    \ForAll{$t\in T_{\mathrm{liv}}$}
        \State $b\gets
        \displaystyle\bigvee_{s\in S_{\mathrm{liv}}}
        \Call{Implies}{s,t}$
        \If{$\neg b$}
            \For{$h=0,\ldots,h_{\max}$}
                \State $K_h\gets
                \displaystyle\bigwedge_{k=0}^{h}
                \Next^k\!\left(\bigwedge S_{\mathrm{inv}}\right)$
                \If{$\Call{Implies}{K_h,t}$ \textbf{or}
                    $\displaystyle\bigvee_{s\in S_{\mathrm{liv}}}
                    \Call{Implies}{K_h\land s,t}$}
                    \State $b\gets\textsc{True}$
                    \State \textbf{break}
                \EndIf
            \EndFor
        \EndIf
        \If{$\neg b$}
            \State \Return \textsc{False}
        \EndIf
    \EndFor

    \State \Return \textsc{True}
\EndFunction
\end{algorithmic}
\end{algorithm}

For a source term list, let $I_S$ and $S_{\mathrm{inv}}$ denote the
conjunctions of its \textsc{Init} and \textsc{Inv} cores, respectively, and
let $\{S_{\mathrm{liv},j}\}_j$ denote its \textsc{Liv} cores. We write
$S_{\mathrm{inv}}^{[0:h]}:=\bigwedge_{k=0}^{h}\Next^k S_{\mathrm{inv}}$.
For each target invariant core $T_{\mathrm{inv},i}$, the verifier searches for
\begin{equation}
S_{\mathrm{inv}}^{[0:h]}
\Rightarrow T_{\mathrm{inv},i}.
\label{eq:app-refinement-inv}
\end{equation}
For the target initial conjunction $I_T$, it first checks
$I_S\Rightarrow I_T$ and, if necessary, searches for
\begin{equation}
I_S\land S_{\mathrm{inv}}^{[0:h]}
\Rightarrow I_T.
\label{eq:app-refinement-init}
\end{equation}
Finally, each target liveness core $T_{\mathrm{liv},i}$ is discharged by one of
three routes: a single source recurrence with
$S_{\mathrm{liv},j}\Rightarrow T_{\mathrm{liv},i}$, the shifted source
invariant alone, or the shifted source invariant together with one recurring
source liveness core,
\begin{align}
S_{\mathrm{inv}}^{[0:h]}&\Rightarrow T_{\mathrm{liv},i},
\label{eq:app-refinement-liv}\\
S_{\mathrm{inv}}^{[0:h]}\wedge S_{\mathrm{liv},j}
&\Rightarrow T_{\mathrm{liv},i}.
\label{eq:app-refinement-liv-supported}
\end{align}
The verifier searches $0\leq h\leq h_{\max}$, with $h_{\max}=2$ in the
experiments. Algorithm~\ref{alg:app-gr1-refinement-check} summarizes the
resulting structured verifier.

Here $\Call{Implies}{\phi,\psi}$ returns \textsc{True} only when the SMT
solver establishes $\phi\land\neg\psi$ as unsatisfiable.

\begin{proposition}[Soundness of GR(1) refinement verification]
\label{prop:app-refinement-soundness}
If Algorithm~\ref{alg:app-gr1-refinement-check} returns \textsc{True}, then
\[
\boxed{
\mathcal C\leq\mathcal C'.
}
\]
The converse need not hold.
\end{proposition}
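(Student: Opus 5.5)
The plan is to prove the two conjuncts of the refinement definition separately: the environment part $A'\Rightarrow A$ and the guarantee part $A'\land G\Rightarrow G'$. The guarantee part is exactly what the call $\Call{GR1Implies}{G\land A',G'\land A'}$ certifies, since $G\land A'\Rightarrow G'\land A'$ is equivalent to $A'\land G\Rightarrow G'$. It therefore suffices to prove one lemma: if $\Call{GR1Implies}{S,T}$ returns \textsc{True}, then every trace $\sigma$ with $\sigma\models S$ satisfies $\sigma\models T$. Here $S=S_{\mathrm{init}}\land\Always S_{\mathrm{inv}}\land\bigwedge_j\Always\Eventually S_{\mathrm{liv},j}$, and $T$ has the same shape. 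The key tools are Lemma~\ref{lem:flattening-correctness}, which turns each successful $\Call{Implies}{\cdot,\cdot}$ check on flattened cores into a pointwise temporal implication valid at every position $t$ of every trace, and Lemma~\ref{lem:shifted-invariant}, which gives that $\Always S_{\mathrm{inv}}$ implies $(\sigma,t)\models S_{\mathrm{inv}}^{[0:h]}$ for all $t$ and $h$. I also need that $\Call{Implies}{}$ returns \textsc{True} only on genuine unsatisfiability, which the paper stipulates.

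Fix $\sigma\models S$ and treat the three buckets of $T$ in turn.

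\textbf{Init.} If $I_S\Rightarrow I_T$ holds, then $(\sigma,0)\models I_S$ gives $I_T$ at time $0$. Otherwise, some $h$ certifies~\eqref{eq:app-refinement-init}, and since $(\sigma,0)\models I_S\land S_{\mathrm{inv}}^{[0:h]}$, we again get $(\sigma,0)\models I_T$.

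\textbf{Inv.} For each target invariant core $t$, some $h$ certifies~\eqref{eq:app-refinement-inv}. Because $S_{\mathrm{inv}}^{[0:h]}$ holds at every position, $t$ holds at every position, so $\sigma\models\Always t$.

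\textbf{Liv.} Each target core is discharged by one of three routes:
\begin{enumerate}
\item \emph{Direct support} $s\Rightarrow t$: every one of the infinitely many positions where $s$ holds also satisfies $t$.
\item \emph{Invariant only}, as in~\eqref{eq:app-refinement-liv}: $t$ holds at every position, hence infinitely often.
\item \emph{Supported}, as in~\eqref{eq:app-refinement-liv-supported}: at each of the infinitely many positions $t'$ where $s$ holds, $S_{\mathrm{inv}}^{[0:h]}$ also holds, so $t$ holds at $t'$.
\end{enumerate}
In every case $\sigma\models\Always\Eventually t$. Since $T$ is the conjunction of these buckets, $\sigma\models T$.

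The main obstacle is the bookkeeping in the flattening step. A validity check over indexed variables must transfer to every time offset, which requires choosing the window depth $D$ at least the next-depth of both sides, including $h+d(S_{\mathrm{inv}})$, so that Lemma~\ref{lem:flattening-correctness} applies uniformly to the valuation $v_{\sigma,t,D}$. The other delicate point is the supported liveness route: the argument must place the invariant window and the recurrent core at the same position, and this is safe precisely because the invariant holds everywhere.

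For the converse, I would give a counterexample outside the searched class. One option is the one after Definition~\ref{def:admissible-gr1-cave}, where the needed invariant information precedes the window. Another is $h_{\max}$-bounded: take $S_{\mathrm{inv}}$ with $\Next x\leftrightarrow y$ chained three steps, so that the implication needs depth $h=3>h_{\max}$. A further option is a liveness target that needs two source recurrences together.
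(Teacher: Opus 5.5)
Your proposal is correct and follows the paper's proof essentially step for step. Both reduce the two refinement obligations to soundness of \textsc{GR1Implies}, and both discharge the initial and invariant checks via Lemma~\ref{lem:shifted-invariant} and exact flattening. Both also treat the three liveness routes identically: direct support, invariant alone, and invariant plus one fixed recurring support at the same positions.

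Your window-depth bookkeeping and your converse counterexamples go slightly beyond the paper, which only asserts that the converse fails. One small fix is needed in the $h_{\max}$ example. A chain of three $\Next$-links, e.g.\ $S_{\mathrm{inv}}=(\Next x\leftrightarrow x)$ with target $x\leftrightarrow\Next^3 x$, is already certified at $h=2=h_{\max}$. You need four links, with target $x\leftrightarrow\Next^4 x$, to force $h=3$.
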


\begin{proof}
Each successful structured check establishes the corresponding temporal
implication. By Lemma~\ref{lem:shifted-invariant}, $\Always S_{\mathrm{inv}}$
entails $\Always S_{\mathrm{inv}}^{[0:h]}$ for every finite $h$, so
\eqref{eq:app-refinement-inv} establishes each target invariant and
\eqref{eq:app-refinement-init} the target initial condition. For liveness,
$S_{\mathrm{liv},j}\Rightarrow T_{\mathrm{liv},i}$ gives
$\Always\Eventually S_{\mathrm{liv},j}\Rightarrow
\Always\Eventually T_{\mathrm{liv},i}$ by Lemma~\ref{lem:liveness-prop};
\eqref{eq:app-refinement-liv} gives $\Always T_{\mathrm{liv},i}$ and hence
$\Always\Eventually T_{\mathrm{liv},i}$; and under
\eqref{eq:app-refinement-liv-supported}, $\Always S_{\mathrm{inv}}$ supplies
$S_{\mathrm{inv}}^{[0:h]}$ at every position while
$\Always\Eventually S_{\mathrm{liv},j}$ supplies infinitely many positions at
which the support holds, so $T_{\mathrm{liv},i}$ holds at each of them.
Hence a successful call $\Call{GR1Implies}{S,T}$ establishes $S\Rightarrow T$.

Algorithm~\ref{alg:app-gr1-refinement-check} therefore establishes
$A'\Rightarrow A$ and $A'\land G\Rightarrow A'\land G'$, the latter implying
$A'\land G\Rightarrow G'$. These are the two obligations in
\eqref{eq:app-exact-refinement}. \qedbox
\end{proof}

\paragraph{Proof-search limitations.}
The bound $h_{\max}$ limits the finite certificates considered by the verifier;
it is a proof-search bound, not a semantic horizon. Even when a finite certificate of the form
\eqref{eq:app-refinement-inv} exists, the required $h$ can be arbitrarily
large relative to the $\Next$-depth of the individual cores. Moreover, some
valid infinite-trace implications admit no certificate of this finite-shift
form for any $h$. Consequently, increasing $h_{\max}$ can establish additional
refinements, but a returned \textsc{False} means only that refinement was not
established by this procedure.

\paragraph{Exact refinement.}
A complete verifier would decide the temporal implications in
\eqref{eq:app-exact-refinement} over infinite traces, rather than searching
only for finite shifted-invariant certificates. Such a procedure would require
a complete temporal reasoning backend capable of accounting for unbounded
invariant propagation and general interactions between invariant and recurrence
obligations. This is orthogonal to the finite SMT-based CAVE machinery developed
here.

\subsection{GR(1) Composition}
\label{app:gr1-composition}

Ignoring interface elimination, the exact contract-algebra composition is
\begin{equation}
\mathcal C\parallel\mathcal C'
=
(B,H),
\end{equation}
where
\begin{align}
B
&=
(A\land A')
\lor(A\land\neg G)
\lor(A'\land\neg G').
\\
H
&=
(G\lor\neg A)
\land
(G'\lor\neg A').
\end{align}

Algorithm~\ref{alg:app-gr1-composition} gives the GR(1) realization used by
the implementation. For clarity, the algorithm shows the case in which all
interconnection variables are irrelevant. The implementation additionally permits
selected interconnection variables to be retained explicitly. 

\begin{algorithm}[t]
\caption{GR(1) Contract Composition}
\label{alg:app-gr1-composition}
\begin{algorithmic}[1]
\Input{GR(1) contracts $\mathcal C=(I,O,A,G)$,
       $\mathcal C'=(I',O',A',G')$}
\Output{GR(1) relaxation
        $\widehat{\mathcal C}^{\,c}=(I^c,O^c,A^c,G^c)$ satisfying
        $\mathcal C\parallel\mathcal C'
        \leq\widehat{\mathcal C}^{\,c}$}
\Require{$O\cap O'=\emptyset$}

\State $\bar p\gets(O\cap I')\cup(I\cap O')$
\State $I^c\gets(I\cup I')\setminus\bar p$
\State $O^c\gets(O\cup O')\setminus\bar p$
\State $F_A\gets\bar p\cup O^c$

\State $I_A\gets\Vars(A)$;
       $I_{A'}\gets\Vars(A')$
       \Comment{assumption-constrained inputs}
\State $\mathsf{Cycle}\gets
       (O'\cap I\neq\emptyset)\land(O\cap I'\neq\emptyset)$
\Require{$\neg\Bigl(
         \mathsf{Cycle}\land
         \bigl[(O'\cap I_A\neq\emptyset)
         \lor(O\cap I_{A'}\neq\emptyset)\bigr]
         \Bigr)$}

\If{$O\cap I'\neq\emptyset$ and $O'\cap I=\emptyset$}
    \State $\bar A'\gets
      \Call{RefineCAVE}{A\land G,A',F_A}$
    \Require{$\Vars(\bar A')\cap F_A=\emptyset$}
    \State $A^c\gets A\land\bar A'$
\ElsIf{$O'\cap I\neq\emptyset$ and $O\cap I'=\emptyset$}
    \State $\bar A\gets
      \Call{RefineCAVE}{A'\land G',A,F_A}$
    \Require{$\Vars(\bar A)\cap F_A=\emptyset$}
    \State $A^c\gets A'\land\bar A$
\Else
    \State $A^c\gets A\land A'$
\EndIf

\State $\bar G\gets
  \Call{RelaxCAVE}{G',G,\bar p}$
\State $\bar G'\gets
  \Call{RelaxCAVE}{G,G',\bar p}$
\State $G_0\gets\bar G\land\bar G'$
\State $G_{\mathrm{pre}}\gets
  \Call{RelaxComp}{A^c,G_0,\bar p}$
\State $G^c\gets
  G_{\mathrm{pre}}\setminus
  \{\tau\in G_{\mathrm{pre}}:
  \Vars(\tau)\cap\bar p\neq\emptyset\}$

\State \Return
$\Call{ReduceGR1}{(I^c,O^c,A^c,G^c)}$
\end{algorithmic}
\end{algorithm}

Here $\Call{RelaxComp}{A^c,G_0,\bar p}$ invokes relaxation CAVE whenever
elimination remains necessary. If the target and context are already
$\bar p$-free, it retains the projected guarantee directly. The implementation
also preserves the time-zero consequence of an invariant when needed by the
structured \textsc{Init}/\textsc{Inv} representation. Any residual guarantee
term containing a variable in $\bar p$ is then discarded, further relaxing the
result. The operation $\Call{ReduceGR1}{}$ denotes structure-preserving
simplification and redundancy removal; it does not change the semantic
direction established by CAVE.

As in the underlying Pacti composition algorithm~\cite{pacti}, the
composability condition permits bidirectional feedback unless an output of
either contract feeds an input of the other that is constrained by its
assumption. Specifically, $I_A=\Vars(A)\subseteq I$ and
$I_{A'}=\Vars(A')\subseteq I'$ denote the respective assumption-constrained
inputs, and $\mathsf{Cycle}$ indicates that the two contracts are connected in
both directions. The second composability condition excludes precisely those
bidirectional cycles for which $O'\cap I_A\neq\emptyset$ or
$O\cap I_{A'}\neq\emptyset$.

The following proposition records the relation between
Algorithm~\ref{alg:app-gr1-composition} and the exact algebraic composition.

\begin{proposition}[Soundness of GR(1) composition]
\label{prop:app-composition-soundness}
Let $\widehat{\mathcal C}^{\,c}$ be returned by
Algorithm~\ref{alg:app-gr1-composition}. Then
\[
\boxed{
\mathcal C\parallel\mathcal C'
\leq
\widehat{\mathcal C}^{\,c}.
}
\]
Thus the computed GR(1) composition is a relaxation of the exact algebraic
composition.
\end{proposition}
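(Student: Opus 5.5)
Throughout, the refinement order is the one of Section~\ref{sec:assume-guarantee-contracts}: $(B,H)\le(A^c,G^c)$ holds iff $A^c\Rightarrow B$ and $(H\lor\neg B)\Rightarrow(G^c\lor\neg A^c)$. By the first conjunct, $\neg A^c$ implies $\neg B$, so it suffices to prove two facts.
\begin{enumerate}
\item[(i)] $A^c\Rightarrow B$.
\item[(ii)] $A^c\land H\Rightarrow G^c$, read with $\bar p$ existentially closed.
\end{enumerate}
Given (i), the disjunct $\neg B$ forces $\neg A^c$, which handles the rest of the guarantee condition.

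I will treat the exact composition semantically, with $\bar p$ hidden. Concretely, $B$ is read as $\forall\bar p.\,B$ and $H$ as $\exists\bar p.\,H$, matching the Pacti convention that assumptions must hold for every value of the hidden interface and guarantees must hold for some value.

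\textbf{Assumptions.} I argue by the three branches of the algorithm.
\begin{itemize}
\item In the branch $O\cap I'\neq\emptyset$, $O'\cap I=\emptyset$, we have $A^c=A\land\bar A'$. By Theorem~\ref{thm:cave-soundness} (refinement part), $\bar A'\land A\land G\Rightarrow A'$ holds for every value of $F_A$.
\item Fix a trace satisfying $A^c$ and arbitrary values of $\bar p$. If $G$ fails, then $A\land\neg G$ holds, so $B$ holds. Otherwise $A\land G\land\bar A'$ gives $A'$, so $A\land A'$ holds, and again $B$ holds. Hence $A^c\Rightarrow\forall\bar p.\,B$.
\item Here the requirement $\Vars(\bar A')\cap F_A=\emptyset$ is used: it ensures $\bar A'$, and therefore $A^c$, depends only on $I^c$. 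Since $\Vars(A)\subseteq I$ and $I\cap\bar p=\emptyset$ in this branch, $A$ is also $\bar p$-free, so quantifying over $\bar p$ is harmless.
\item The symmetric branch is identical.
\item In the last branch $A^c=A\land A'$, which implies $B$ directly. The composability requirement ensures that, whenever there is a cycle, neither $A$ nor $A'$ mentions $\bar p$.
\end{itemize}

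\textbf{Guarantees.} Fix a trace satisfying $A^c\land H$ together with witnessing values of $\bar p$. By the argument above these values satisfy $A$ and $A'$ (in the refinement branch we also use $G$ when $A$ holds). Hence $G\land G'$ holds on this extended trace.
\begin{itemize}
\item Theorem~\ref{thm:cave-soundness} (relaxation part) with context $G'$ and target $G$ gives $G'\land G\Rightarrow\bar G$. Symmetrically, $G\land G'\Rightarrow\bar G'$. So $G_0$ holds.
\item Since $A^c$ holds, a second application of relaxation soundness to $\Call{RelaxComp}{A^c,G_0,\bar p}$ yields $G_{\mathrm{pre}}$. In the degenerate case where RelaxComp retains $G_0$ directly, this holds trivially.
\item $G^c$ is obtained from $G_{\mathrm{pre}}$ by dropping conjuncts, so it is weaker and still holds.
\item $\Call{ReduceGR1}{}$ preserves equivalence. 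The time-zero strengthening of Init is justified by Lemma~\ref{lem:shifted-invariant}.
\item $G^c$ is $\bar p$-free, so it holds on the original trace. This gives (ii).
\end{itemize}

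\textbf{Main obstacle.} The delicate step is the assumption case in the first branch. The disjunct $A\land\neg G$ of $B$ must absorb exactly those traces where $G$ fails, while the refinement CAVE output certifies $A'$ only in the presence of $G$. One must also check that quantifying over $F_A$ rather than only $\bar p$ does not break the argument. This is where I would be most careful: the eliminated set $F_A$ includes $O^c$, so the claim is that $\bar A'$ holds independently of the outputs. That property is what makes $A^c$ a legitimate assumption over the inputs $I^c$.
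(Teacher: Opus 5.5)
Your proof is correct and is essentially the paper's. You use the same case split on $G$ to obtain $A^c\Rightarrow B$, the same saturation step showing that $A^c\land H$ forces $G\land G'$, and the same chain through $G_0$, the final relaxation, and conjunct dropping to reach $G^c$. Your hidden-variable reading ($\forall\bar p.\,B$, $\exists\bar p.\,H$) is a harmless strengthening the paper does without, which is why the paper never needs the $\bar p$-freeness of $A^c$ or the composability condition. One slip: your opening sentence states the contrapositive backwards, since $A^c\Rightarrow B$ gives $\neg B\Rightarrow\neg A^c$, though you use the correct direction afterwards.
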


\begin{proof}
Consider first the branch in which $\mathcal C$ supplies an input to
$\mathcal C'$. Refinement CAVE gives
\[
A\land G\land\bar A'\Rightarrow A'.
\]
Since $A^c=A\land\bar A'$, it follows that
\begin{equation}
A^c\land G\Rightarrow A'.
\label{eq:app-comp-directional}
\end{equation}
Moreover, $A^c\Rightarrow B$: if $G$ is false, then
$A\land\neg G$ establishes $B$; if $G$ is true,
\eqref{eq:app-comp-directional} gives $A'$, so $A\land A'$ establishes $B$.
The symmetric branch is identical, while $A^c=A\land A'$ implies $B$
immediately in the remaining case. Hence
\begin{equation}
A^c\Rightarrow B.
\label{eq:app-comp-assumption}
\end{equation}

Under $A^c$, the saturated exact guarantee simplifies to
\begin{equation}
A^c\land H
\;\Longleftrightarrow\;
A^c\land G\land G'.
\label{eq:app-comp-saturation}
\end{equation}
Indeed, $A^c\Rightarrow A$, and whenever $A^c\land G$ holds,
\eqref{eq:app-comp-directional} gives $A'$.

The two relaxation calls satisfy
\[
G\land G'\Rightarrow\bar G,
\qquad
G\land G'\Rightarrow\bar G',
\]
and therefore
\[
G\land G'\Rightarrow G_0.
\]
The final relaxation step gives
\[
A^c\land G_0\Rightarrow G^c.
\]
Together with \eqref{eq:app-comp-saturation},
\[
A^c\land H\Rightarrow G^c,
\]
or equivalently
\[
H\lor\neg A^c
\Rightarrow
G^c\lor\neg A^c.
\]
Combining this implication with
\eqref{eq:app-comp-assumption} gives the claimed contract refinement. \qedbox
\end{proof}

The relaxation need not be an equivalence. In particular, separate structured
eliminations may discard correlations that would be retained by unrestricted
temporal quantifier elimination.

\subsection{GR(1) Quotient}
\label{app:gr1-quotient}

For a dividend $\mathcal C$ and divisor $\mathcal C'$, the exact algebraic
quotient is
\begin{equation}
\mathcal Q_*=\mathcal C/\mathcal C'=(A_*,G_*),
\end{equation}
with
\begin{align}
A_*
&=
A\land(\neg A'\lor G'),
\\
G_*
&=
(A'\land G)\lor\neg A\lor(A'\land\neg G').
\end{align}
The guarantee can equivalently be written as
\begin{equation}
\boxed{
G_*
\equiv
A\Rightarrow
\bigl(A'\land(G'\Rightarrow G)\bigr).
}
\label{eq:app-quotient-factorization}
\end{equation}
This factorization motivates the two refinement-CAVE passes in
Algorithm~\ref{alg:app-gr1-quotient}.

\begin{algorithm}[t]
\caption{GR(1) Contract Quotient}
\label{alg:app-gr1-quotient}
\begin{algorithmic}[1]
\Input{Dividend GR(1) contract $\mathcal C=(I,O,A,G)$, divisor GR(1) contract $\mathcal C'=(I',O',A',G')$}
\Output{GR(1) refinement
        $\mathcal Q=(I^q,O^q,A^q,G^q)$ satisfying $\mathcal Q \leq \mathcal C/\mathcal C'$}

\Require $(O\setminus O')\cap I'=\emptyset$
\State $O^q\gets(O\setminus O')\cup(I'\setminus I)$
\State $I^q\gets(I\setminus I')\cup(O'\setminus O)$
\State $\bar p\gets(O\cap O')\cup(I\cap I')$
\State $F_A\gets\bar p\cup O^q$

\If{$\Call{GR1Implies}{A,A'}$}
    \State $A^q\gets
      \Call{RelaxCAVE}{G',A,F_A}$
\Else
    \State $A^q\gets
      \Call{RelaxCAVE}{\top,A,F_A}$
\EndIf

\State $G_1\gets
  \Call{RefineCAVE}{G',G,\bar p}$
\State $G_2\gets G_1\land A'$
\State $G^q\gets
  \Call{RefineCAVE}{A,G_2,\bar p}$
\Require $G^q$ contains no variable in $\bar p$
\State \Return
$\Call{ReduceGR1}{(I^q,O^q,A^q,G^q)}$
\end{algorithmic}
\end{algorithm}

The implementation treats an unsuccessful intermediate refinement elimination
conservatively: the pre-elimination target is retained, and a quotient is
returned only if no eliminated variable remains in the final guarantee. This
does not alter the implication directions used below.

\begin{proposition}[Soundness of GR(1) quotient]
\label{prop:app-quotient-soundness}
Let $\mathcal Q$ be returned by
Algorithm~\ref{alg:app-gr1-quotient}, and let $\mathcal Q_*=\mathcal C/\mathcal C'$
denote the exact quotient. Then
\[
\boxed{
\mathcal Q\leq\mathcal Q_*.
}
\]
Thus the computed GR(1) quotient is a refinement of the exact quotient.
\end{proposition}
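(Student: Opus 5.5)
We must show $\mathcal Q\le\mathcal Q_*$, that is, $A_*\Rightarrow A^q$ and $A_*\wedge G^q\Rightarrow G_*$, using the form of refinement in \eqref{eq:app-exact-refinement}. The plan treats the two obligations separately. The only facts used are the CAVE implications recalled in Appendix~\ref{app:cave-primitives}: RefineCAVE returns $\Psi$ with $\Gamma\wedge\Psi\Rightarrow\phi$, and RelaxCAVE returns $\Psi$ with $\Gamma\wedge\phi\Rightarrow\Psi$. We also use the factorization \eqref{eq:app-quotient-factorization} and the soundness of \textsc{GR1Implies} from Proposition~\ref{prop:app-refinement-soundness}. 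Variable elimination is harmless in these implications, since each statement holds over full traces; the $\bar p$-freeness of the results only matters for the interface.

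\textbf{Assumptions.} There are two branches.
\begin{enumerate}
\item If \textsc{GR1Implies}$(A,A')$ returns true, its soundness gives $A\Rightarrow A'$. Relaxation then gives $G'\wedge A\Rightarrow A^q$. From $A_*=A\wedge(\neg A'\vee G')$ and $A\Rightarrow A'$ we get $A_*\Rightarrow A\wedge G'$, hence $A_*\Rightarrow A^q$.
\item Otherwise, relaxation with context $\top$ gives $A\Rightarrow A^q$, and $A_*\Rightarrow A$ directly.
\end{enumerate}

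\textbf{Guarantees.} First, the RefineCAVE call defining $G_1$ gives $G'\wedge G_1\Rightarrow G$, so $G_1\Rightarrow(G'\Rightarrow G)$. Hence
\[
G_2=G_1\wedge A'\Rightarrow A'\wedge(G'\Rightarrow G).
\]
Second, the RefineCAVE call defining $G^q$ gives $A\wedge G^q\Rightarrow G_2$, so
\[
G^q\Rightarrow\bigl(A\Rightarrow A'\wedge(G'\Rightarrow G)\bigr)\equiv G_*
\]
by \eqref{eq:app-quotient-factorization}. This is stronger than what is needed, since it yields $G^q\Rightarrow G_*$ even without $A_*$. It also settles the guarantee half in its saturated form: $G^q\vee\neg A^q\Rightarrow G_*\vee\neg A_*$. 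Indeed, if $A^q$ fails, then $A_*$ fails, because we showed $A_*\Rightarrow A^q$.

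\textbf{Remaining checks.} The factorization \eqref{eq:app-quotient-factorization} should be checked by a quick case split on $A$, $A'$ and $G'$. When $A$ holds, the exact guarantee $(A'\wedge G)\vee(A'\wedge\neg G')$ simplifies to $A'\wedge(G\vee\neg G')$. When $A$ fails, both sides are true.

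The main obstacle is the conservative implementation behaviour. When an intermediate refinement elimination fails, the pre-elimination target is retained. In the first call this means $G_1:=G$, and $G'\wedge G\Rightarrow G$ still holds trivially. In the second call this means $G^q:=G_2$, and $A\wedge G_2\Rightarrow G_2$ holds trivially. So in each case the required implication holds and the argument goes through unchanged.

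A further detail is the \textsc{ReduceGR1} step. Because it is semantics-preserving simplification, it does not change $A^q$ or $G^q$ up to equivalence. Refinement depends only on the semantics of $A^q$ and $G^q$, so the refinement $\mathcal Q\le\mathcal Q_*$ survives this step.
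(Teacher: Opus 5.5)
Your proof is correct and follows essentially the same argument as the paper. The assumption side uses the same two branches (soundness of \textsc{GR1Implies} together with the relaxation implication, or the fallback with context $\top$), and the guarantee side chains the two refinement-CAVE calls through the factorization \eqref{eq:app-quotient-factorization}; you only differ cosmetically, by working with the trace-level implications instead of carrying the explicit $\forall\bar p$ and $\exists F_A$ quantifiers, and by folding the conservative-fallback and \textsc{ReduceGR1} remarks into the proof itself.
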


\begin{proof}
First consider the guarantee. The first refinement-CAVE call gives
\[
G_1
\Rightarrow
\forall\bar p\,(G'\Rightarrow G).
\]
Hence
\[
G_2=G_1\land A'
\Rightarrow
A'\land\forall\bar p\,(G'\Rightarrow G).
\]
The second refinement call gives
\[
A\land G^q\Rightarrow G_2.
\]
Using \eqref{eq:app-quotient-factorization}, this implies
\[
G^q
\Rightarrow
\forall\bar p\,G_*.
\]

For the assumption, suppose first that the sound refinement checker establishes
$A\Rightarrow A'$. Then
\[
A_*
=
A\land(\neg A'\lor G')
\equiv
A\land G',
\]
and relaxation CAVE yields
\[
\exists F_A\,(A\land G')
\Rightarrow A^q.
\]
Thus the existential elimination of $A_*$ implies $A^q$.

If $A\Rightarrow A'$ is not established, the algorithm instead projects $A$
under the trivial context. Since
\[
A_*=A\land(\neg A'\lor G')\Rightarrow A,
\]
we again obtain
\[
\exists F_A\,A_*\Rightarrow A^q.
\]
Since $A_*\Rightarrow\exists F_A\,A_*$ and $\forall\bar p\,G_*\Rightarrow G_*$,
we obtain $A_*\Rightarrow A^q$ and $A_*\land G^q\Rightarrow G_*$. These are
precisely the two obligations for $\mathcal Q\leq\mathcal Q_*$. \qedbox
\end{proof}

The refinement test used in the assumption branch is sound. Consequently, a
positive result justifies the first branch, whereas failure to establish the
implication merely selects the more conservative fallback. Completeness of the
refinement checker is not required for Proposition~\ref{prop:app-quotient-soundness}.

\begin{corollary}[Missing-component soundness]
\label{cor:app-missing-component}
Let $\mathcal Q$ be returned by Algorithm~\ref{alg:app-gr1-quotient}.
Then
\[
    \mathcal C' \parallel \mathcal Q \le \mathcal C,
\]
where $\parallel$ denotes the exact composition of the contract algebra.
\end{corollary}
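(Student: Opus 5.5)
The plan is to chain Proposition~\ref{prop:app-quotient-soundness} with the residual property of the quotient from Section~\ref{sec:assume-guarantee-contracts}. That property states $\mathcal C'\parallel\mathcal X\leq\mathcal C \Leftrightarrow \mathcal X\leq\mathcal C/\mathcal C'$ for every contract $\mathcal X$. Taking $\mathcal X:=\mathcal Q$, the right-hand side is exactly the conclusion $\mathcal Q\leq\mathcal Q_*$ of Proposition~\ref{prop:app-quotient-soundness}. The direction ``$\Leftarrow$'' then yields $\mathcal C'\parallel\mathcal Q\leq\mathcal C$ immediately. The body of the proof is therefore only two lines, and the real content is checking that the residual property applies in the setting of the algorithm.

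The first thing to check is interfaces. The contract $\mathcal Q$ returned by Algorithm~\ref{alg:app-gr1-quotient} has interface $(I^q,O^q)$ and has passed through \textsc{ReduceGR1} and the variable eliminations. The residual property is a semantic statement about the assume--guarantee algebra over trace sets. I would therefore view all contracts as sets of traces over the union of their variables, with $\mathcal Q$'s formulas independent of $\bar p$. Refinement and composition are defined by the Boolean formulas in Section~\ref{sec:assume-guarantee-contracts}, which make sense over any common variable set. Embedding every contract into this common variable universe lets the residual equivalence be used verbatim. I would also note that \textsc{ReduceGR1} is semantics-preserving, so $\mathcal Q$ has the same meaning as the contract analyzed in Proposition~\ref{prop:app-quotient-soundness}.

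If a self-contained argument is preferred, I would also verify the ``$\Leftarrow$'' direction directly as a sanity check. The argument uses the quotient formulas $A_*=A\land(\neg A'\lor G')$ and $G_*$ from \eqref{eq:app-quotient-factorization}. From $A_*\Rightarrow A^q$ and $A^q\land G^q\Rightarrow G_*$, one shows two things about the composition $\mathcal C'\parallel\mathcal Q$. First, its assumption $(A'\land A^q)\lor(A'\land\neg G')\lor(A^q\land\neg G^q)$ is implied by $A$, by splitting on whether $A'\land G'$ holds. Second, under $A$ its saturated guarantee $(G'\lor\neg A')\land(G^q\lor\neg A^q)$ implies $G$. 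Here the main case is $A\land A'\land G'$, where $A^q$ holds and hence $G^q$ gives $G_*$, which together with $A'\land G'$ yields $G$. The remaining cases fall outside the assumption or are trivial.

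I expect the main obstacle to be the interface bookkeeping described above, not the logic. One must make sure the variables eliminated in $\bar p$ and $F_A$ do not invalidate the comparison with $\mathcal C$. Since $\mathcal C'\parallel\mathcal Q$ is taken exactly, without elimination, this reduces to noting that the exact algebraic operations are interface-agnostic.
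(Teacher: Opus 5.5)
Your proof is correct, and it differs from the paper's only in how it uses the contract algebra.

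\textbf{Comparison with the paper.} The paper factors through the exact quotient, $\mathcal C'\parallel\mathcal Q\le\mathcal C'\parallel\mathcal Q_*\le\mathcal C$. That route needs three facts: monotonicity of exact composition (not stated in Section~\ref{sec:assume-guarantee-contracts}), the inequality $\mathcal C'\parallel(\mathcal C/\mathcal C')\le\mathcal C$, and transitivity of refinement. You apply the ``$\Leftarrow$'' direction of the residual property once, with $\mathcal X:=\mathcal Q$. This uses only the property the paper actually states. The monotonicity and the inequality for $\mathcal Q_*$ that the paper invokes are themselves consequences of that same residual equivalence.

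\textbf{Your direct check.} The verification from the explicit formulas is an addition the paper does not give. Two steps in it deserve a word.
\begin{enumerate}
\item Proposition~\ref{prop:app-quotient-soundness} gives $A_*\Rightarrow A^q$ and $A_*\land G^q\Rightarrow G_*$. Your premise $A^q\land G^q\Rightarrow G_*$ (in fact $G^q\Rightarrow G_*$) then follows because $\neg A_*=\neg A\lor(A'\land\neg G')$ already implies $G_*$.
\item In the case $A\land\neg A'$, the assumption of $\mathcal C'\parallel\mathcal Q$ holds only through the disjunct $A^q\land\neg G^q$. Establishing it needs the guarantee fact $G^q\Rightarrow G_*$ together with $G_*\equiv\bot$ under $A\land\neg A'$. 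Splitting on $A'\land G'$ alone does not close this case.
\end{enumerate}

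\textbf{Side remarks.} Your interface and \textsc{ReduceGR1} remarks are harmless but unnecessary. The proposition already concerns the returned $\mathcal Q$, and the exact operations are purely semantic.
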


\begin{proof}
By Proposition~\ref{prop:app-quotient-soundness},
$\mathcal Q\le\mathcal Q_*=\mathcal C/\mathcal C'$.
Since exact composition is monotone and the exact quotient satisfies
$\mathcal C'\parallel\mathcal Q_*\le\mathcal C$,
\[
    \mathcal C'\parallel\mathcal Q
    \le
    \mathcal C'\parallel\mathcal Q_*
    \le
    \mathcal C.
\]
\end{proof}

Thus the computed quotient is a sound missing-component specification, as in
the post-processing principle of Pacti~\cite{pacti}.

\subsection{GR(1) Merging}
\label{app:gr1-merging}

Merging requires no variable elimination. The implementation first lifts both
contracts to the compatible union interface and then merges their structured
GR(1) obligations.

\begin{algorithm}[t]
\caption{GR(1) Contract Merging}
\label{alg:app-gr1-merging}
\begin{algorithmic}[1]
\Input{GR(1) contracts $\mathcal C=(I,O,A,G)$,
       $\mathcal C'=(I',O',A',G')$}
\Output{Merged GR(1) contract $\mathcal C^m$}
\Require shared variable sorts agree
\Require
$(I\cup I')\cap(O\cup O')=\emptyset$
\State $I^m\gets I\cup I'$
\State $O^m\gets O\cup O'$
\State $A^m\gets\Call{MergeBuckets}{A,A'}$
\State $G^m\gets\Call{MergeBuckets}{G,G'}$
\State \Return $(I^m,O^m,A^m,G^m)$
\end{algorithmic}
\end{algorithm}

Here $\Call{MergeBuckets}{}$ conjoins the \textsc{Init} and \textsc{Inv}
obligations and retains all liveness cores as separate recurrence obligations.
In particular,
\[
\Always\Eventually\phi
\land
\Always\Eventually\psi
\]
is retained as two liveness obligations and is not replaced by
$\Always\Eventually(\phi\land\psi)$.

After lifting both contracts to the compatible union interface,
Algorithm~\ref{alg:app-gr1-merging} is exact: it computes
\[
A^m=A\land A',
\qquad
G^m=G\land G'.
\]
No refinement or relaxation CAVE call is required.

\section{Setting for Proofs}
\label{app:setting}
We use the temporal semantics of Section~\ref{sec:preliminaries}, extended with
quantification over typed variables. Let $V$ be a fixed set of variables, where
each $x\in V$ has domain $D_x$ (Boolean propositions are the case
$D_x=\{\bot,\top\}$). A \emph{trace} $\sigma=\sigma_0\sigma_1\cdots$ assigns
every $x\in V$ a value $\sigma_i(x)\in D_x$ at each time $i\in\mathbb N$, and
atomic predicates are evaluated under the background-theory interpretation. A
\emph{temporal valuation} of a variable $p$ is a function
$P:\mathbb N\to D_p$, and $\sigma[p\mapsto P]$ denotes the trace that agrees
with $\sigma$ on every variable other than $p$ and satisfies
$\sigma[p\mapsto P]_i(p)=P(i)$. Quantification ranges over temporal valuations:
\[
  (\sigma,i)\models\forall p.\,\psi
  \iff
  (\sigma[p\mapsto P],i)\models\psi\ \text{for every }P:\mathbb N\to D_p,
\]
and $(\sigma,i)\models\exists p.\,\psi$ iff this holds for some $P$.

\section{Auxiliary Lemmas}
\label{app:auxiliary-lemmas}

\subsection{Temporal Facts}

\begin{lemma}[Propagation of Liveness under Invariant Implication]
\label{lem:liveness-prop}
For any LTL formulas $\varphi$ and $\psi$,
\[
  \Always(\varphi \rightarrow \psi)
  \;\Longrightarrow\;
  \big(\Always \Eventually \varphi \rightarrow \Always \Eventually \psi\big).
\]
\end{lemma}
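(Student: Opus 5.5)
The plan is a direct trace-level argument from the semantics of Section~\ref{sec:preliminaries}, read as validity: for every trace $\sigma$ (and, more generally, every position $t$), if $(\sigma,t)\models\Always(\varphi\rightarrow\psi)$ and $(\sigma,t)\models\Always\Eventually\varphi$, then $(\sigma,t)\models\Always\Eventually\psi$. Since $\varphi$ and $\psi$ are arbitrary LTL formulas, nothing about their internal structure is used. We only unfold the definitions of $\Always$ and $\Eventually$ and treat satisfaction of $\varphi,\psi$ at a position as opaque.

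The first step fixes $\sigma$ and $t$ satisfying both hypotheses. The first hypothesis, $\Always(\varphi\rightarrow\psi)$, gives $(\sigma,t')\models\varphi\rightarrow\psi$ for every $t'\ge t$.

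The second step proves $\Always\Eventually\psi$. Let $t_1\ge t$ be arbitrary; we must find $t_2\ge t_1$ with $(\sigma,t_2)\models\psi$. From $\Always\Eventually\varphi$ at $t$, we get $(\sigma,t_1)\models\Eventually\varphi$, so some $t_2\ge t_1$ satisfies $(\sigma,t_2)\models\varphi$. Since $t_2\ge t_1\ge t$, the first step yields $(\sigma,t_2)\models\varphi\rightarrow\psi$. By modus ponens, $(\sigma,t_2)\models\psi$. As $t_1$ was arbitrary, $(\sigma,t)\models\Always\Eventually\psi$.

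The third step discharges the outer implication. Taking $t=0$ gives the statement as an implication between models. It can also be packaged as validity of $\Always(\varphi\rightarrow\psi)\rightarrow(\Always\Eventually\varphi\rightarrow\Always\Eventually\psi)$ at every position.

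There is no real obstacle here. The only point needing care is keeping the positions properly nested: the witness $t_2$ for $\varphi$ must lie at or after $t$, so that the invariant $\Always(\varphi\rightarrow\psi)$ actually applies there. This holds because $t_2\ge t_1\ge t$. For the typed-variable setting of Appendix~\ref{app:setting}, the argument is unchanged, since it never inspects atomic predicates.
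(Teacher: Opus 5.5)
Your proof is correct and is essentially the paper's argument. You take a later witness for $\varphi$ from $\Always\Eventually\varphi$ and apply $\Always(\varphi\rightarrow\psi)$ at that witness to obtain $\psi$. The only difference is that you work at an arbitrary base position $t$ instead of position $0$, which is a harmless generalization.
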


\begin{proof}
Let $\sigma\models\Always(\varphi\rightarrow\psi)\wedge\Always\Eventually\varphi$
and fix $i\in\mathbb N$. Some $j\ge i$ satisfies $(\sigma,j)\models\varphi$, and
$(\sigma,j)\models\varphi\rightarrow\psi$ then gives $(\sigma,j)\models\psi$.
Since $i$ was arbitrary, $\sigma\models\Always\Eventually\psi$. \qedbox
\end{proof}

\begin{lemma}[One-way commutation of $\exists p$ and $\Always$]
\label{lem:exists-always}
For every temporal formula $\varphi(p)$,
\[
  \exists p.\,\Always\,\varphi(p)
  \;\Rightarrow\;
  \Always\,\exists p.\,\varphi(p).
\]
\end{lemma}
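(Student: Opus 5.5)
The plan is a direct trace-level argument: unfold the semantics of the quantifier from Appendix~\ref{app:setting}, then push the witness for $p$ through the $\Always$ operator position by position.

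First, fix a trace $\sigma$ and a position $i$ with $(\sigma,i)\models\exists p.\,\Always\varphi(p)$. By definition there is a single temporal valuation $P:\mathbb N\to D_p$ such that $(\sigma[p\mapsto P],i)\models\Always\varphi(p)$. Unfolding $\Always$, this gives $(\sigma[p\mapsto P],t)\models\varphi(p)$ for every $t\ge i$.

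Next, to show $(\sigma,i)\models\Always\exists p.\,\varphi(p)$, fix an arbitrary $t\ge i$. We must show $(\sigma,t)\models\exists p.\,\varphi(p)$. That is, we must exhibit some valuation $P_t$ with $(\sigma[p\mapsto P_t],t)\models\varphi(p)$. The choice $P_t:=P$ works by the previous step. Since $t$ was arbitrary, the right-hand side holds at $i$; taking $i=0$ gives the implication over traces.

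The only point needing care is the convention used for the semantics. Quantification in Appendix~\ref{app:setting} ranges over whole temporal valuations $P:\mathbb N\to D_p$, and the inner $\exists p$ is re-evaluated at $(\sigma,t)$. Under this convention, reusing the globally chosen $P$ is legitimate even though $\varphi$ may refer to other positions of $p$ via $\Next$; no truncation or shifting of $P$ is required. I expect no real obstacle here.

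I would close with a remark on the one-way direction. The converse fails because the existential on the right may pick incompatible valuations at different positions. A counterexample is $\varphi(p)=(p\leftrightarrow\Next\neg p)\wedge\ldots$, or more simply any case in which per-position witnesses cannot be glued into one global valuation. This explains why only relaxation, not exactness, is claimed for the invariant part.
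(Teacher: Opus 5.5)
Your argument is correct and is the same as the paper's: the single witness valuation $P$ for $\exists p.\,\Always\varphi(p)$ is reused as the witness for $\exists p.\,\varphi(p)$ at every position (you start from an arbitrary position $i$ rather than $0$, which is a harmless generalization). The closing remark on the converse is not needed for the lemma, and its counterexample is incomplete as written, since $p\leftrightarrow\Next\neg p$ alone is satisfied globally by an alternating $p$; $\varphi(p)=\neg p\wedge\Next p$ does work, because each position has its own witness but no single $P$ can make both $\neg p$ and $\Next p$ hold everywhere.
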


\begin{proof}
Let $\sigma\models\exists p.\,\Always\,\varphi(p)$, witnessed by a temporal
valuation $P$, so that $(\sigma[p\mapsto P],i)\models\varphi(p)$ for every
$i\in\mathbb N$. The same $P$ witnesses $(\sigma,i)\models\exists p.\,\varphi(p)$
at each $i$, hence $\sigma\models\Always\,\exists p.\,\varphi(p)$. \qedbox
\end{proof}

\begin{lemma}[Removal of Redundant Liveness Terms]
\label{lem:redundant-liveness}
If $\alpha\rightarrow\beta$ is valid, then
$\Always\Eventually\alpha\Rightarrow\Always\Eventually\beta$, and consequently
$\Always\Eventually\alpha\wedge\Always\Eventually\beta
\equiv\Always\Eventually\alpha$.
\end{lemma}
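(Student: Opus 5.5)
The first claim is a direct application of Lemma~\ref{lem:liveness-prop}. Validity of $\alpha\rightarrow\beta$ means it holds at every position of every trace, so every trace satisfies $\Always(\alpha\rightarrow\beta)$. Lemma~\ref{lem:liveness-prop} with $\varphi:=\alpha$ and $\psi:=\beta$ then gives $\Always\Eventually\alpha\rightarrow\Always\Eventually\beta$ on every trace, which is the first implication.

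For the equivalence I would check the two directions at the trace level. The direction from $\Always\Eventually\alpha\wedge\Always\Eventually\beta$ to $\Always\Eventually\alpha$ is conjunction elimination. For the converse, take $\sigma\models\Always\Eventually\alpha$. The first part gives $\sigma\models\Always\Eventually\beta$, so $\sigma$ satisfies the conjunction.

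The one point to watch is the setting in which the lemma is used. In \textsc{SelectLiveness}, $\alpha$ and $\beta$ are flattened bounded-next cores, and "valid" is certified by $\mathrm{SMT\_UNSAT}(\rho_a\wedge\neg\rho_b)$ over the indexed variables. To transfer this to temporal validity I would invoke Lemma~\ref{lem:flattening-correctness} with $D:=\max\{d(\alpha),d(\beta)\}$. If $(\sigma,t)\models\alpha$, then $v_{\sigma,t,D}\models\alpha^\flat$, hence $v_{\sigma,t,D}\models\beta^\flat$ by the SMT result, and so $(\sigma,t)\models\beta$. This gives $\Always(\alpha\rightarrow\beta)$ on every trace, and the argument above applies. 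I do not expect a real obstacle; the only care needed is this bookkeeping between finite-window validity and temporal validity.
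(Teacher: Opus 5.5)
Your proof is correct and matches the paper's argument: validity of $\alpha\rightarrow\beta$ gives $\Always(\alpha\rightarrow\beta)$, Lemma~\ref{lem:liveness-prop} then gives the implication, and the equivalence follows from conjunction elimination in one direction and that implication in the other. Your extra paragraph is also sound: it uses Lemma~\ref{lem:flattening-correctness} to turn the finite check $\mathrm{SMT\_UNSAT}(\rho_a\wedge\neg\rho_b)$ into temporal validity, a step the paper leaves implicit when it applies this lemma in Lemma~\ref{lem:select-liveness-exact-reduction}.
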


\begin{proof}
Validity of $\alpha\rightarrow\beta$ gives $\Always(\alpha\rightarrow\beta)$, so
the implication follows from Lemma~\ref{lem:liveness-prop}. The equivalence
holds because $\Always\Eventually\alpha$ implies the second conjunct, while the
conjunction trivially implies $\Always\Eventually\alpha$. \qedbox
\end{proof}

\subsection{Bounded-Next Elimination}
\label{app:bounded-next}

\begin{lemma}[Shifted Consequences of an Invariant]
\label{lem:shifted-invariant}
For any LTL formula \(\rho\) and any \(h\in\mathbb N\),
\[
  \Always \rho
  \;\Longrightarrow\;
  \Always\left(\bigwedge_{k=0}^{h}\Next^k\rho\right).
\]
\end{lemma}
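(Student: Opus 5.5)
The plan is to unfold the semantics of $\Always$ and $\Next$ directly at the trace level, with no appeal to the flattening machinery. Fix a trace $\sigma$ with $\sigma\models\Always\rho$, so that $(\sigma,t)\models\rho$ for every $t\in\mathbb N$. The goal is $(\sigma,t)\models\bigwedge_{k=0}^{h}\Next^k\rho$ for every $t$, which by the semantics of conjunction amounts to showing $(\sigma,t)\models\Next^k\rho$ for each $t\in\mathbb N$ and each $0\le k\le h$.

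The key step is a small auxiliary fact, proved by induction on $k$: for all $t$, $(\sigma,t)\models\Next^k\rho$ iff $(\sigma,t+k)\models\rho$.
\begin{itemize}
\item The case $k=0$ is trivial.
\item For the inductive step, $(\sigma,t)\models\Next^{k+1}\rho=\Next(\Next^k\rho)$ iff $(\sigma,t+1)\models\Next^k\rho$ by the semantics of $\Next$.
\item By the induction hypothesis applied at position $t+1$, this holds iff $(\sigma,t+1+k)\models\rho$.
\end{itemize}

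Combining this with the hypothesis gives the claim. Since $t+k\in\mathbb N$, the assumption $\sigma\models\Always\rho$ yields $(\sigma,t+k)\models\rho$, hence $(\sigma,t)\models\Next^k\rho$ for every $k\le h$. Because $t$ was arbitrary, $\sigma\models\Always\bigwedge_{k=0}^{h}\Next^k\rho$.

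I do not expect a real obstacle here. The only point needing care is to quantify the induction hypothesis over all positions $t$, so that it can be applied at $t+1$, rather than fixing $t$ before inducting. Nothing depends on $\rho$ being bounded-next or on the background theory, so the argument holds for arbitrary LTL formulas $\rho$, as stated.
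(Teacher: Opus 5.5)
Your proof is correct and follows the paper's own argument: fix $t$, use $\Always\rho$ to obtain $\rho$ at every position $t+k$ with $k\le h$, and read this as $(\sigma,t)\models\Next^k\rho$. The only difference is that you prove the identity $(\sigma,t)\models\Next^k\rho \iff (\sigma,t+k)\models\rho$ explicitly by induction on $k$, quantified over all positions, whereas the paper uses it without proof.
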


\begin{proof}
Assume \(\sigma\models \Always\rho\) and fix a time \(t\). For every
\(k\in\{0,\dots,h\}\) we have \((\sigma,t+k)\models \rho\), that is,
\((\sigma,t)\models \Next^k\rho\). Hence
\((\sigma,t)\models \bigwedge_{k=0}^{h}\Next^k\rho\) for every \(t\). \qedbox
\end{proof}

\begin{lemma}[Bounded-Next Invariant Guard]
\label{lem:bounded-next-invariant-guard}
Let $B$, $h_B$, $D_B$, and $\Gamma_{\mathrm{inv}}^{[0:h_B]}$ be as in
Section~\ref{sec:finite-window-cave}. If $\sigma\models\Always\gamma_{\mathrm{inv}}$,
then for every time $t$,
\[
  v_{\sigma,t,D_B}
  \models
  \big(\Gamma_{\mathrm{inv}}^{[0:h_B]}\big)^\flat.
\]
\end{lemma}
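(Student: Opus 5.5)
The plan is to combine Lemma~\ref{lem:shifted-invariant} with Lemma~\ref{lem:flattening-correctness}. Assume $\sigma\models\Always\gamma_{\mathrm{inv}}$ and fix a time $t$. Applying Lemma~\ref{lem:shifted-invariant} with $\rho:=\gamma_{\mathrm{inv}}$ and $h:=h_B$ gives
\[
(\sigma,t)\models\Gamma_{\mathrm{inv}}^{[0:h_B]}=\bigwedge_{k=0}^{h_B}\Next^k\gamma_{\mathrm{inv}} .
\]
This is the temporal version of the claim at position $t$.

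Next we check the next-depth hypothesis of Lemma~\ref{lem:flattening-correctness}. The formula $\Gamma_{\mathrm{inv}}^{[0:h_B]}$ is a bounded-next core, since it is a finite conjunction of shifted bounded-next cores. In the conjunct $\Next^k\gamma_{\mathrm{inv}}$, every variable occurrence lies under at most $k+d(\gamma_{\mathrm{inv}})$ next operators. Since $k\le h_B$, we obtain
\[
d\big(\Gamma_{\mathrm{inv}}^{[0:h_B]}\big)\le h_B+d(\gamma_{\mathrm{inv}})=D_B .
\]
If one wants this formally, it is a one-line induction on $k$ using $d(\Next\eta)=d(\eta)+1$ and $d(\eta_1\wedge\eta_2)=\max\{d(\eta_1),d(\eta_2)\}$, both immediate from the definition of next-depth.

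Now apply Lemma~\ref{lem:flattening-correctness} with $\chi:=\Gamma_{\mathrm{inv}}^{[0:h_B]}$ and $D:=D_B$. It turns $(\sigma,t)\models\chi$ into $v_{\sigma,t,D_B}\models\chi^\flat$, which is the claim.

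The only delicate point is that the valuation is indexed up to $D_B$ rather than exactly up to $d(\chi)$. Lemma~\ref{lem:flattening-correctness} is already stated for any $D\ge d(\chi)$, and indexed variables $x_k$ with $k>d(\chi)$ do not occur in $\chi^\flat$, so they are irrelevant. I also need that flattening commutes with the conjunction and the shifts, i.e.\ that $(\Next^k\gamma_{\mathrm{inv}})^\flat$ is $\gamma_{\mathrm{inv}}^\flat$ with every index raised by $k$. This holds by the definition of flattening, since $\Next^k$ applied to $\Next^m x$ is $\Next^{k+m}x$, which maps to $x_{k+m}$. I expect no real obstacle beyond stating this bookkeeping cleanly.
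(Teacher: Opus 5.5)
Your proposal is correct and follows the paper's proof essentially verbatim: invoke Lemma~\ref{lem:shifted-invariant} to get $\Gamma_{\mathrm{inv}}^{[0:h_B]}$ at time $t$, bound the next-depth of each conjunct $\Next^k\gamma_{\mathrm{inv}}$ by $k+d(\gamma_{\mathrm{inv}})\le D_B$, and conclude with Lemma~\ref{lem:flattening-correctness}. Your extra remarks on indices beyond $d(\chi)$ and on flattening commuting with shifts are just explicit bookkeeping that the paper leaves implicit.
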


\begin{proof}
By Lemma~\ref{lem:shifted-invariant}, $\Gamma_{\mathrm{inv}}^{[0:h_B]}$ holds at
every time. Each conjunct $\Next^k\gamma_{\mathrm{inv}}$ with $k\le h_B$ has
next-depth $k+d(\gamma_{\mathrm{inv}})\le D_B$, so
Lemma~\ref{lem:flattening-correctness} gives the claim. \qedbox
\end{proof}

\begin{lemma}[Fixed vs.\ Time-Varying Support]
\label{lem:fixed-liveness-support}
For a fixed target liveness core $\phi_i$, let $\psi^{ref}_{i,j}$, $j\in J$, be as in
Algorithm~\ref{alg:gr1-refinement}. Then
\[
  \Bigl(\bigvee_{j\in J}\Always\psi^{ref}_{i,j}\Bigr)
  \wedge \Always\gamma_{\mathrm{inv}}
  \wedge \bigwedge_{j\in J}\Always\Eventually\gamma_{\mathrm{liv},j}
  \;\Longrightarrow\;
  \Always\Eventually\phi_i .
\]
In contrast, the time-varying condition
\[
  \Always\Bigl(\bigvee_{j\in J}\psi^{ref}_{i,j}\Bigr)
  \wedge \Always\gamma_{\mathrm{inv}}
  \wedge \bigwedge_{j\in J}\Always\Eventually\gamma_{\mathrm{liv},j}
\]
does not, in general, imply $\Always\Eventually\phi_i$, nor
$\Always\Eventually\bigvee_{j}(\gamma_{\mathrm{liv},j}\wedge\psi^{ref}_{i,j})$.
\end{lemma}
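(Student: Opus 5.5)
The positive implication follows by picking one fixed support $j^\ast$ and reading its certificate back to the trace, using the flattening lemma, the invariant guard, and the recurrence of $\gamma_{\mathrm{liv},j^\ast}$. The negative part is shown by a small Boolean counterexample in which two support certificates alternate.

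\emph{Positive part.} Let $\sigma$ satisfy the left-hand side. Fix $j^\ast\in J$ with $\sigma\models\Always\psi^{ref}_{i,j^\ast}$, and set $h:=h_{i,j^\ast}$ and $D:=h+d(\gamma_{\mathrm{inv}})$.
\begin{enumerate}
\item \emph{Reading the certificate at time $t$.} The formula $\psi^{ref}_{i,j^\ast}$ is the $\bar p$-free flattened formula $\forall\bar p.\bigl((\Gamma_{\mathrm{inv}}^{[0:h]})^\flat\wedge\gamma_{\mathrm{liv},j^\ast}^\flat\Rightarrow\phi_i^\flat\bigr)$, lifted back to a bounded-next core. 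By Lemma~\ref{lem:flattening-correctness}, $(\sigma,t)\models\psi^{ref}_{i,j^\ast}$ means $v_{\sigma,t,D}\models\psi^{ref,\flat}_{i,j^\ast}$ for every $t$.
\item \emph{Instantiating the quantifier.} The universal quantifier ranges over all copies $\bar p_0,\ldots,\bar p_D$. We instantiate it with the actual values $\sigma_{t+k}(\bar p)$, which is exactly the valuation $v_{\sigma,t,D}$. This yields $v_{\sigma,t,D}\models(\Gamma_{\mathrm{inv}}^{[0:h]})^\flat\wedge\gamma_{\mathrm{liv},j^\ast}^\flat\Rightarrow\phi_i^\flat$.
\item \emph{Discharging the invariant.} Lemma~\ref{lem:bounded-next-invariant-guard}, applied with $\sigma\models\Always\gamma_{\mathrm{inv}}$, discharges the invariant antecedent at every $t$. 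Translating back with Lemma~\ref{lem:flattening-correctness}, we get $(\sigma,t)\models\gamma_{\mathrm{liv},j^\ast}\rightarrow\phi_i$ for all $t$.
\item \emph{Concluding.} This gives $\sigma\models\Always(\gamma_{\mathrm{liv},j^\ast}\rightarrow\phi_i)$. Since $\sigma\models\Always\Eventually\gamma_{\mathrm{liv},j^\ast}$, Lemma~\ref{lem:liveness-prop} gives $\sigma\models\Always\Eventually\phi_i$.
\end{enumerate}
The one delicate point is step 2. The flattened $\bar p$-free formula must be evaluated under $v_{\sigma,t,D}$ restricted to the $\bar q$-copies, and the universal quantifier may then be specialised to the true $\bar p$-copies. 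This is immediate from the semantics of $\forall$ over the finite window.

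\emph{Negative part.} We give a Boolean counterexample with two supports, $J=\{1,2\}$.
\begin{enumerate}
\item \emph{The instance.} Take relevant propositions $a,b,c$ and a dummy eliminated variable $p$ occurring nowhere. Set $\gamma_{\mathrm{inv}}:=\top$, $\gamma_{\mathrm{liv},1}:=a$, $\gamma_{\mathrm{liv},2}:=b$ and $\phi_i:=c$.
\item \emph{The certificates.} All cores have depth $0$, so the eliminations are trivial: $\psi^{ref}_{i,1}\equiv a\rightarrow c$ and $\psi^{ref}_{i,2}\equiv b\rightarrow c$.
\item \emph{The trace.} Let $\sigma$ alternate $a\wedge\neg b$ at even times and $\neg a\wedge b$ at odd times, with $c$ always false.
\item \emph{Checking the hypotheses.} Both $\Always\Eventually a$ and $\Always\Eventually b$ hold, and $\Always\gamma_{\mathrm{inv}}$ holds trivially. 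At even times $b\rightarrow c$ holds, and at odd times $a\rightarrow c$ holds. Hence $\Always\bigl(\psi^{ref}_{i,1}\vee\psi^{ref}_{i,2}\bigr)$ holds.
\item \emph{The conclusions fail.} $\Always\Eventually c$ is false. Moreover, $\bigvee_j(\gamma_{\mathrm{liv},j}\wedge\psi^{ref}_{i,j})\equiv(a\wedge c)\vee(b\wedge c)$ is false at every position. So $\Always\Eventually$ of it also fails.
\end{enumerate}
The positions where the disjunction is witnessed by support $j$ are exactly those where $\gamma_{\mathrm{liv},j}$ is false, so the recurrences never synchronise with the certificates that would use them.
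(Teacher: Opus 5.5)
Your proposal is correct and follows the paper's proof. For the positive part, the paper simply picks the fixed $j^\star$ and cites Lemma~\ref{lem:fixed-support-liveness-refinement}; you reproduce that argument inline (invariant guard, exact flattening, instantiating the universal elimination at the actual $\bar p$-values, then $\Always(\gamma_{\mathrm{liv},j^\star}\rightarrow\phi_i)$ with Lemma~\ref{lem:liveness-prop}). Your counterexample is a harmless variant of the paper's, which takes $\bar p=\emptyset$, $\phi_i=\bot$, and supports $a$ and $\neg a$ on a trace alternating $a$, so that $\psi^{ref}_{i,1}\vee\psi^{ref}_{i,2}\equiv\top$ holds identically.
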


\begin{proof}
First claim: some fixed $j^\star\in J$ satisfies $\Always\psi^{ref}_{i,j^\star}$.
Together with $\Always\gamma_{\mathrm{inv}}$ and
$\Always\Eventually\gamma_{\mathrm{liv},j^\star}$, and with
$h_{i,j^\star}=\max\{d(\phi_i),d(\gamma_{\mathrm{liv},j^\star})\}$,
Lemma~\ref{lem:fixed-support-liveness-refinement}
yields $\Always\Eventually\phi_i$.

Second claim: let $\bar p=\emptyset$, $\gamma_{\mathrm{inv}}=\top$, $\phi_i=\bot$,
$J=\{1,2\}$, $\gamma_{\mathrm{liv},1}=a$, $\gamma_{\mathrm{liv},2}=\neg a$ for a
Boolean variable $a$. Then $\psi^{ref}_{i,j}=(\gamma_{\mathrm{liv},j}\Rightarrow\bot)$,
that is, $\psi^{ref}_{i,1}=\neg a$ and $\psi^{ref}_{i,2}=a$.
On the trace where $a$ holds exactly at even times, both
$\Always\Eventually\gamma_{\mathrm{liv},j}$ hold and
$\Always(\psi^{ref}_{i,1}\vee\psi^{ref}_{i,2})\equiv\Always\top$ holds, yet
$\Always\Eventually\phi_i=\Always\Eventually\bot$ fails, and
$\gamma_{\mathrm{liv},j}\wedge\psi^{ref}_{i,j}\equiv\bot$ for each $j$. \qedbox
\end{proof}

\begin{lemma}[Bounded-Next Existential Relaxation]
\label{lem:bounded-next-existential-relaxation}
Let $B=\{\chi\}$, so that $h_B=d(\chi)$ and $D_B=h_B+d(\gamma_{\mathrm{inv}})$ as in
Section~\ref{sec:finite-window-cave}, and define
\[
  \psi(\bar q_0,\dots,\bar q_{D_B})
  :=
  \exists\bar p_0,\dots,\bar p_{D_B}.
  \left[
    \big(\Gamma_{\mathrm{inv}}^{[0:h_B]}\big)^\flat
    \wedge
    \chi^\flat
  \right].
\]
Then
\[
  \exists\bar p.\bigl(\Always\gamma_{\mathrm{inv}}(\bar p,\bar q)\wedge\Always\Eventually\chi(\bar p,\bar q)\bigr)
  \;\Longrightarrow\;
  \Always\Eventually\psi(\bar q_0,\dots,\bar q_{D_B}).
\]
\end{lemma}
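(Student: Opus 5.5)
We prove the implication trace by trace, fixing a witness for the existential and checking the recurrence directly. Let $\sigma$ be a trace over $\bar q$ with $\sigma\models\exists\bar p.\bigl(\Always\gamma_{\mathrm{inv}}\wedge\Always\Eventually\chi\bigr)$. By the semantics of quantification in Appendix~\ref{app:setting}, there is a temporal valuation $\bar P$ of $\bar p$ such that $\sigma':=\sigma[\bar p\mapsto\bar P]$ satisfies both $\Always\gamma_{\mathrm{inv}}$ and $\Always\Eventually\chi$. To show $\sigma\models\Always\Eventually\psi$, where $\psi$ is read temporally by substituting $\Next^k\bar q$ for $\bar q_k$, we fix an arbitrary time $i$. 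We then exhibit some $t\ge i$ with $v_{\sigma,t,D_B}\models\psi$.

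\textbf{Choosing the position.} Since $\sigma'\models\Always\Eventually\chi$, there is $t\ge i$ with $(\sigma',t)\models\chi$. Because $d(\chi)=h_B\le D_B$, Lemma~\ref{lem:flattening-correctness} gives $v_{\sigma',t,D_B}\models\chi^\flat$. Because $\sigma'\models\Always\gamma_{\mathrm{inv}}$, Lemma~\ref{lem:bounded-next-invariant-guard} gives $v_{\sigma',t,D_B}\models(\Gamma_{\mathrm{inv}}^{[0:h_B]})^\flat$ at this same $t$. This works because the invariant guard holds at every position, so no synchronization between the two conditions is needed.

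\textbf{Discharging the existential.} The valuation $v_{\sigma',t,D_B}$ agrees with $v_{\sigma,t,D_B}$ on all $\bar q_k$. It assigns $\bar p_k\mapsto\bar P(t+k)$ for $0\le k\le D_B$. These finitely many values witness the finite existential $\exists\bar p_0,\dots,\bar p_{D_B}$ in $\psi$, so $v_{\sigma,t,D_B}\models\psi$. Since $\psi$ is a flattened $\bar p$-free core of depth at most $D_B$, a second application of Lemma~\ref{lem:flattening-correctness} translates this back to $(\sigma,t)\models\psi$. As $i$ was arbitrary, $\sigma\models\Always\Eventually\psi$.

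\textbf{Main obstacle.} The only delicate step is bookkeeping. The flattened conjunction $(\Gamma_{\mathrm{inv}}^{[0:h_B]})^\flat\wedge\chi^\flat$ must mention only indices $\le D_B$, so that the witness is a finite tuple read off from $\bar P$. It must also be legitimate to interpret $\psi$ back as a temporal core over $\bar q$. Both follow from the depth bound $k+d(\gamma_{\mathrm{inv}})\le D_B$ already used in Lemma~\ref{lem:bounded-next-invariant-guard}. One further point needs care: when $\bar P$ is fixed before $t$ is chosen, a single global witness serves all positions. This is what makes the argument go through, and it mirrors Lemma~\ref{lem:exists-always}.
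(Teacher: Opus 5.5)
Your proof is correct and follows the paper's argument essentially step for step. You fix the global witness $\bar P$, pick a position $t\ge i$ where $\chi$ recurs, apply the invariant-guard and flattening lemmas at that same $t$, and let the $\bar p$-copies of $v_{\sigma',t,D_B}$ witness the finite existential via agreement on the $\bar q$-copies. One small overstatement: your second appeal to Lemma~\ref{lem:flattening-correctness} is not literally covered, since $\psi$ contains a quantifier and is not a core; reading $(\sigma,t)\models\psi$ as $v_{\sigma,t,D_B}\models\psi$, or applying the lemma to a quantifier-free equivalent, is what the paper implicitly does too.
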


\begin{proof}
Let $P$ be a temporal valuation of $\bar p$ such that
$\sigma':=\sigma[\bar p\mapsto P]$ satisfies
$\Always\gamma_{\mathrm{inv}}\wedge\Always\Eventually\chi$, and let
$t_0\in\mathbb N$. Some $t\ge t_0$ has $(\sigma',t)\models\chi$. By
Lemma~\ref{lem:bounded-next-invariant-guard},
$v_{\sigma',t,D_B}\models(\Gamma_{\mathrm{inv}}^{[0:h_B]})^\flat$, and by
Lemma~\ref{lem:flattening-correctness}, $v_{\sigma',t,D_B}\models\chi^\flat$.
The $\bar p$-copies of this valuation therefore witness the existential
quantifier in $\psi$, and since $v_{\sigma',t,D_B}$ agrees with
$v_{\sigma,t,D_B}$ on the copies of $\bar q$, we obtain $(\sigma,t)\models\psi$.
As $t_0$ was arbitrary, $\sigma\models\Always\Eventually\psi$. \qedbox
\end{proof}

\begin{lemma}[Optimality of Universal Elimination]
\label{lem:universal-projection-optimal}
Let \(F(\bar p,\bar q)\) be a propositional or SMT formula and define
\(\psi(\bar q):=\forall \bar p.\,F(\bar p,\bar q)\). Then
\begin{enumerate}
  \item \(\psi\Rightarrow F\) for every valuation of \(\bar p\);
  \item for every \(\bar p\)-free formula \(\eta(\bar q)\) with
  \(\eta(\bar q)\Rightarrow F(\bar p,\bar q)\) for all \(\bar p\), we have
  \(\eta\Rightarrow\psi\).
\end{enumerate}
Hence \(\psi\) is the weakest \(\bar p\)-free condition sufficient for \(F\)
uniformly over the eliminated variables.
\end{lemma}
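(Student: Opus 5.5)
The proof unfolds the semantics of the universal quantifier over the finite set of eliminated variables and uses no temporal reasoning. Since $F$ is a finite propositional or SMT formula over the indexed copies, $\bar p$ ranges over tuples of values in the respective domains $D_p$. By the standard semantics, a valuation $v$ of $\bar q$ satisfies $\forall\bar p.\,F$ exactly when $F$ is satisfied by $v$ extended by every choice of values for $\bar p$. Both items will be read off from this characterization.

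For item~1, fix any valuation of $\bar q$ satisfying $\psi$ and any valuation of $\bar p$. The characterization applied to this particular $\bar p$-valuation gives that the combined valuation satisfies $F$. Since the choice of $\bar p$ was arbitrary, $\psi(\bar q)\Rightarrow F(\bar p,\bar q)$ holds for every valuation of $\bar p$.

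For item~2, let $\eta(\bar q)$ be $\bar p$-free with $\eta\Rightarrow F$ for all values of $\bar p$. Fix a valuation $v$ of $\bar q$ with $v\models\eta$. Because $\eta$ does not depend on $\bar p$, every extension of $v$ by values for $\bar p$ still satisfies $\eta$, so by hypothesis it satisfies $F$. The characterization then yields $v\models\forall\bar p.\,F=\psi$, hence $\eta\Rightarrow\psi$. Items~1 and~2 together say that $\psi$ is itself a $\bar p$-free sufficient condition and is implied by every other one, which is the concluding ``weakest'' statement.

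The only delicate point is that ``$\bar p$-free'' in the paper means semantic independence rather than syntactic absence. Item~2 therefore needs exactly the fact that changing the $\bar p$-values does not change the truth of $\eta$, which I will state explicitly. I will also note that $\psi$ is $\bar p$-free by construction, since its truth value depends only on $\bar q$.
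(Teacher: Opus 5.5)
Your proposal is correct and follows the paper's proof: item~1 is read directly off the semantics of $\forall\bar p$, and item~2 is obtained by fixing a $\bar q$-valuation satisfying $\eta$ and observing that the hypothesis then gives $F$ for every valuation of $\bar p$, i.e.\ $\forall\bar p.\,F=\psi$. Your explicit remark that $\bar p$-freeness means semantic independence, so that $\eta$ remains true under every $\bar p$-extension, spells out a step the paper leaves implicit.
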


\begin{proof}
Claim~1 is the semantics of \(\forall\bar p\). For Claim~2, suppose
\(\eta(\bar q)\) holds. By assumption \(F(\bar p,\bar q)\) holds for every
valuation of \(\bar p\), that is, \(\forall\bar p.\,F(\bar p,\bar q)\), which is
\(\psi(\bar q)\). \qedbox
\end{proof}

\begin{lemma}[Optimality of Existential Elimination]
\label{lem:existential-projection-optimal}
Let \(F(\bar p,\bar q)\) be a propositional or SMT formula and define
\(\psi(\bar q):=\exists\bar p.\,F(\bar p,\bar q)\). Then
\begin{enumerate}
  \item \(F\Rightarrow\psi\);
  \item for every \(\bar p\)-free formula \(\eta(\bar q)\) with
  \(F\Rightarrow\eta(\bar q)\), we have \(\psi\Rightarrow\eta\).
\end{enumerate}
Hence \(\psi\) is the strongest \(\bar p\)-free consequence of \(F\).
\end{lemma}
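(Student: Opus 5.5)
The plan is to prove the two claims directly from the semantics of $\exists\bar p$ over valuations of the finite tuple $\bar p$, mirroring Lemma~\ref{lem:universal-projection-optimal}. Throughout, $F$ is a finite propositional or SMT formula and $\bar p$ is a finite tuple of typed variables, so quantification ranges over assignments of values in $D_x$ to each $x\in\bar p$. No temporal reasoning is needed; the flattening of Lemma~\ref{lem:flattening-correctness} has already reduced the relevant subproblems to this finite setting.

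\emph{Claim~1.} Fix any valuation $(\bar a,\bar b)$ of $(\bar p,\bar q)$ satisfying $F$. Then $\bar a$ is itself a witness for the existential quantifier, so $\bar b\models\exists\bar p.\,F(\bar p,\bar q)$, that is, $\psi(\bar b)$ holds. Since $\psi$ does not depend on $\bar p$, this is exactly $F\Rightarrow\psi$.

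\emph{Claim~2.} Let $\eta(\bar q)$ be $\bar p$-free with $F\Rightarrow\eta$, and suppose $\psi(\bar b)$ holds. By the semantics of $\exists$, some $\bar a$ satisfies $F(\bar a,\bar b)$. The hypothesis then gives $\eta$ at $(\bar a,\bar b)$. Because $\eta$ is $\bar p$-free in the paper's semantic sense (independent of every variable in $\bar p$), its truth value at $(\bar a,\bar b)$ equals its value at $\bar b$. Hence $\eta(\bar b)$ holds, and so $\psi\Rightarrow\eta$.

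The strongest-consequence statement follows by combining the two claims: $\psi$ is a $\bar p$-free consequence of $F$ by Claim~1, and it implies every other such consequence by Claim~2. The only delicate point is the appeal to semantic $\bar p$-freeness in Claim~2. There we must use the paper's definition that $\eta$'s truth value is unchanged under any reassignment of $\bar p$, not merely that $\bar p$ is absent from the syntax. Beyond that, the argument is routine.
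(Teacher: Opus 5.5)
Your proof is correct and follows the paper's argument essentially verbatim: a satisfying valuation witnesses $\exists\bar p.\,F$ for Claim~1, and for Claim~2 you pick a witness, apply $F\Rightarrow\eta$, and use the $\bar p$-freeness of $\eta$ to discard the witness. Your remark that it is semantic $\bar p$-freeness that drives the last step is exactly what the paper means by its closing sentence that the conclusion ``does not depend on the witness.''
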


\begin{proof}
For Claim~1, a valuation of \((\bar p,\bar q)\) satisfying \(F\) witnesses
\(\exists\bar p.\,F\). For Claim~2, suppose \(\psi(\bar q)\) holds. Then some
valuation of \(\bar p\) satisfies \(F(\bar p,\bar q)\), and the assumed
implication gives \(\eta(\bar q)\). Since \(\eta\) is \(\bar p\)-free, this
conclusion does not depend on the witness. \qedbox
\end{proof}

\section{Soundness Proofs}
\label{app:soundness}

\subsection{Soundness of Context-Aided Variable Elimination by Refinement}
\label{app:soundness-ref}

Recall $\Gamma$ and $\phi$ from Section~\ref{sec:gr1-cave}; the exact objective
is $\Psi^{ref}_{exact}:=\forall\bar p\,(\Gamma\Rightarrow\phi)$.
Algorithm~\ref{alg:gr1-refinement} returns
\[
\Psi^{ref}=\psi^{ref}_{\mathrm{init}}
\wedge\Always\psi^{ref}_{\mathrm{inv}}
\wedge\bigwedge_{i\in I_0}\Always\Eventually\psi^{ref}_{i,0}
\wedge\bigwedge_{i\in I_{\mathrm{sup}}}\Always\psi^{ref}_{i,j_i}.
\]
The initial and invariant implications follow exactly as in
Lemmas~\ref{lem:bounded-next-invariant-guard} and
\ref{lem:flattening-correctness}: the shifted invariant supplies the finite
window, and universal elimination makes the implication hold for every typed
valuation of the eliminated indexed variables. The liveness targets are handled
branch by branch: Lemmas~\ref{lem:synthetic-liveness-refinement} and
\ref{lem:fixed-support-liveness-refinement} show that each candidate produced by
\textsc{SelectSupport} is individually sound.

\begin{lemma}[Synthetic liveness refinement]
\label{lem:synthetic-liveness-refinement}
For $h_{i,0}=d(\phi_i)$ and
\[
\psi^{ref}_{i,0}:=\forall\bar p.\bigl(
(\Gamma_{\mathrm{inv}}^{[0:h_{i,0}]})^\flat
\Rightarrow\phi_i^\flat\bigr),
\]
\[
\Always\gamma_{\mathrm{inv}}\wedge
\Always\Eventually\psi^{ref}_{i,0}
\Rightarrow\Always\Eventually\phi_i.
\]
\end{lemma}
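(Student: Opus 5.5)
Fix a trace $\sigma$ with $\sigma\models\Always\gamma_{\mathrm{inv}}\wedge\Always\Eventually\psi^{ref}_{i,0}$ and an arbitrary $t_0\in\mathbb N$. The plan is to produce some $t\ge t_0$ with $(\sigma,t)\models\phi_i$. Since $t_0$ is arbitrary, this gives $\sigma\models\Always\Eventually\phi_i$.

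First, the recurrence hypothesis supplies some $t\ge t_0$ with $(\sigma,t)\models\psi^{ref}_{i,0}$. Here $\psi^{ref}_{i,0}$ is read as a bounded-next core over $\bar q$: each indexed copy $q_k$ is mapped back to $\Next^k q$. So by Lemma~\ref{lem:flattening-correctness} this means $v_{\sigma,t,D}\models\psi^{ref}_{i,0}$, where $D:=h_{i,0}+d(\gamma_{\mathrm{inv}})$. Second, since $\sigma\models\Always\gamma_{\mathrm{inv}}$, Lemma~\ref{lem:bounded-next-invariant-guard} with $B=\{\phi_i\}$ (so $h_B=h_{i,0}=d(\phi_i)$ and $D_B=D$) gives $v_{\sigma,t,D}\models(\Gamma_{\mathrm{inv}}^{[0:h_{i,0}]})^\flat$.

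Third, unfold the universal quantifier in $\psi^{ref}_{i,0}$. It states that for every assignment to the copies $\bar p_0,\dots,\bar p_D$, holding the $\bar q$-copies fixed, the implication $(\Gamma_{\mathrm{inv}}^{[0:h_{i,0}]})^\flat\Rightarrow\phi_i^\flat$ holds. Instantiate it with the actual values $\bar p_k:=\sigma_{t+k}(\bar p)$, which is exactly the valuation $v_{\sigma,t,D}$ (claim 1 of Lemma~\ref{lem:universal-projection-optimal}). The antecedent holds by the second step, so $v_{\sigma,t,D}\models\phi_i^\flat$. Since $d(\phi_i)=h_{i,0}\le D$, Lemma~\ref{lem:flattening-correctness} then yields $(\sigma,t)\models\phi_i$, which completes the argument.

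The main obstacle is bookkeeping rather than logic. The statement lifts $\psi^{ref}_{i,0}$, a finite formula over indexed variables $\bar q_0,\dots,\bar q_D$, to a temporal core, and we must justify that its satisfaction at $(\sigma,t)$ coincides with $v_{\sigma,t,D}\models\psi^{ref}_{i,0}$. I would handle this by fixing the lifting convention $q_k\mapsto\Next^k q$ and applying Lemma~\ref{lem:flattening-correctness} in the reverse direction. I would also note that the copies $q_k$ with $k>h_{i,0}$, which are read only by the shifted invariants, stay within the window $D$.
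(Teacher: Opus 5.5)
Your proof is correct and follows the paper's argument essentially step for step. Recurrence gives a position $t\ge t_0$, and the invariant guard lemma supplies the flattened shifted invariant context at $t$. Instantiating the universal elimination at the actual $\bar p$-values then yields $\phi_i^\flat$, and hence $\phi_i$ at $t$ by exact flattening. Your explicit lifting convention $q_k\mapsto\Next^k q$, together with the window-depth check, makes precise a step the paper leaves implicit.
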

\begin{proof}
Fix a time $t_0$. Recurrence of $\psi^{ref}_{i,0}$ gives $t\ge t_0$ at which it
holds, and $\Always\gamma_{\mathrm{inv}}$ supplies
$\Gamma_{\mathrm{inv}}^{[0:h_{i,0}]}$ at the same $t$
(Lemma~\ref{lem:bounded-next-invariant-guard}). By exact flattening
(Lemma~\ref{lem:flattening-correctness}), the actual typed values on this finite
window form a valuation of the indexed variables. The universal elimination
holds for every valuation of the eliminated indexed variables, hence for these
actual values, and yields $\phi_i$ at $t$. Since such a $t$ exists after every
$t_0$, $\Always\Eventually\phi_i$ holds. \qedbox
\end{proof}

\begin{lemma}[Fixed-support liveness refinement]
\label{lem:fixed-support-liveness-refinement}
For $j\in J$, let
$h_{i,j}=\max\{d(\phi_i),d(\gamma_{\mathrm{liv},j})\}$ and
\[
\psi^{ref}_{i,j}:=\forall\bar p.\bigl(
(\Gamma_{\mathrm{inv}}^{[0:h_{i,j}]})^\flat
\wedge\gamma_{\mathrm{liv},j}^\flat
\Rightarrow\phi_i^\flat\bigr).
\]
Then
\[
\Always\gamma_{\mathrm{inv}}\wedge
\Always\Eventually\gamma_{\mathrm{liv},j}\wedge
\Always\psi^{ref}_{i,j}
\Rightarrow\Always\Eventually\phi_i.
\]
\end{lemma}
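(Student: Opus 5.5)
The proof is a trace-level argument that mirrors the one for Lemma~\ref{lem:synthetic-liveness-refinement}. The difference is where the recurring positions come from: they are supplied by the context liveness core $\gamma_{\mathrm{liv},j}$, and the invariant $\Always\psi^{ref}_{i,j}$ guarantees that the certificate is available at each of them. Fix a trace $\sigma$ satisfying $\Always\gamma_{\mathrm{inv}}\wedge\Always\Eventually\gamma_{\mathrm{liv},j}\wedge\Always\psi^{ref}_{i,j}$ and a time $t_0$. By $\Always\Eventually\gamma_{\mathrm{liv},j}$ there is $t\ge t_0$ with $(\sigma,t)\models\gamma_{\mathrm{liv},j}$. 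Because $\psi^{ref}_{i,j}$ holds at every position, it holds at this same $t$. This synchronization is exactly why the support branch is lifted with $\Always$ rather than $\Always\Eventually$.

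Next I pass to the finite window at $t$. Let $D:=h_{i,j}+d(\gamma_{\mathrm{inv}})$ and consider the valuation $v_{\sigma,t,D}$ of the indexed variables.
\begin{itemize}
\item Lemma~\ref{lem:bounded-next-invariant-guard}, applied with $B=\{\phi_i,\gamma_{\mathrm{liv},j}\}$ so that $h_B=h_{i,j}$, gives $v_{\sigma,t,D}\models(\Gamma_{\mathrm{inv}}^{[0:h_{i,j}]})^\flat$.
\item Since $d(\gamma_{\mathrm{liv},j})\le h_{i,j}\le D$, Lemma~\ref{lem:flattening-correctness} turns $(\sigma,t)\models\gamma_{\mathrm{liv},j}$ into $v_{\sigma,t,D}\models\gamma_{\mathrm{liv},j}^\flat$.
\item The same lemma gives $v_{\sigma,t,D}\models(\psi^{ref}_{i,j})^\flat$ on its $\bar q$-copies.
\end{itemize}
Since $\psi^{ref}_{i,j}$ is a universal elimination, the implication $(\Gamma_{\mathrm{inv}}^{[0:h_{i,j}]})^\flat\wedge\gamma_{\mathrm{liv},j}^\flat\Rightarrow\phi_i^\flat$ holds for every valuation of the $\bar p$-copies (Lemma~\ref{lem:universal-projection-optimal}, Claim~1). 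In particular it holds for the actual $\bar p$-values read off $\sigma$. Hence $v_{\sigma,t,D}\models\phi_i^\flat$, and one more application of Lemma~\ref{lem:flattening-correctness}, using $d(\phi_i)\le D$, gives $(\sigma,t)\models\phi_i$. Since $t_0$ was arbitrary, $\sigma\models\Always\Eventually\phi_i$.

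The step needing the most care is bookkeeping the next-depths. The certificate $\psi^{ref}_{i,j}$ is a flattened formula over $\bar q_0,\dots,\bar q_D$, and it must be read as the bounded-next core obtained by unflattening ($q_k\mapsto\Next^k q$). Only then does $\Always\psi^{ref}_{i,j}$ make sense temporally, and Lemma~\ref{lem:flattening-correctness} then applies to it with depth at most $D$. I also have to check that the window $[0:D]$ covers the invariant copies, the support core and the target simultaneously. This holds because $h_{i,j}$ is defined as the maximum of $d(\phi_i)$ and $d(\gamma_{\mathrm{liv},j})$, and because $v_{\sigma,t,D}$ agrees with the $\bar q$-values of $\sigma$ on all copies.
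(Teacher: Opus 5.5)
Your proposal is correct and follows the paper's proof essentially step for step. Recurrence of $\gamma_{\mathrm{liv},j}$ supplies a time $t\ge t_0$, and $\Always\psi^{ref}_{i,j}$ makes the certificate available at that same $t$. The invariant guard, exact flattening, and universal elimination instantiated at the actual $\bar p$-values of $\sigma$ then yield $\phi_i$ at $t$. Your explicit next-depth bookkeeping (taking $B=\{\phi_i,\gamma_{\mathrm{liv},j}\}$ so that $D_B=h_{i,j}+d(\gamma_{\mathrm{inv}})$) and your remark that $\psi^{ref}_{i,j}$ must be read as an unflattened bounded-next core make precise what the paper's proof leaves implicit.
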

\begin{proof}
After every $t_0$, recurrence supplies a time $t\ge t_0$ satisfying
$\gamma_{\mathrm{liv},j}$. The invariant supplies the shifted finite window
$\Gamma_{\mathrm{inv}}^{[0:h_{i,j}]}$ at $t$
(Lemma~\ref{lem:bounded-next-invariant-guard}), while
$\Always\psi^{ref}_{i,j}$ supplies the projected implication at that same
position. Exact flattening and universal elimination, over the typed domains of
all eliminated indexed variables, therefore yield $\phi_i$ at $t$, so
$\phi_i$ recurs. In contrast,
$\Always\Eventually\gamma_{\mathrm{liv},j}\wedge
\Always\Eventually\psi^{ref}_{i,j}$ is insufficient in general, because the two
recurrent formulas may hold at disjoint sets of positions. For instance, with
$\bar p=\emptyset$, $\gamma_{\mathrm{inv}}=\top$, $\phi_i=\bot$, and
$\gamma_{\mathrm{liv},j}=a$, we have $\psi^{ref}_{i,j}=\neg a$, and both
$\Always\Eventually a$ and $\Always\Eventually\neg a$ hold on a trace
alternating $a$ while $\Always\Eventually\bot$ fails. \qedbox
\end{proof}

\begin{theorem}[Soundness of Refinement]
\label{thm:refinement-soundness}
The formula returned by Algorithm~\ref{alg:gr1-refinement} satisfies
\[
\Psi^{ref}(\bar q)\Rightarrow
\forall\bar p\,(\Gamma(\bar p,\bar q)\Rightarrow\phi(\bar p,\bar q)).
\]
\end{theorem}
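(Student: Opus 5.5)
Since $\Psi^{ref}$ is $\bar p$-free, the claim $\Psi^{ref}\Rightarrow\forall\bar p\,(\Gamma\Rightarrow\phi)$ is equivalent to $\Gamma\wedge\Psi^{ref}\models\phi$ over all traces. Indeed, if $\sigma\models\Psi^{ref}$, then $\sigma[\bar p\mapsto P]\models\Psi^{ref}$ for every temporal valuation $P$. So I fix an arbitrary trace $\sigma$ (with $\bar p$ already assigned) satisfying $\Gamma$ and $\Psi^{ref}$. I then verify each conjunct of $\phi$ separately, namely $\phi_{\mathrm{init}}$ at time $0$, $\Always\phi_{\mathrm{inv}}$, and each $\Always\Eventually\phi_i$. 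Because $\phi$ is a conjunction, this suffices.

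For the initial conjunct with $B=\{\gamma_{\mathrm{init}},\phi_{\mathrm{init}}\}$, I evaluate at $t=0$. From $\Always\gamma_{\mathrm{inv}}$, Lemma~\ref{lem:bounded-next-invariant-guard} gives $v_{\sigma,0,D_B}\models(\Gamma_{\mathrm{inv}}^{[0:h_B]})^\flat$. Lemma~\ref{lem:flattening-correctness} gives $v_{\sigma,0,D_B}\models\gamma_{\mathrm{init}}^\flat$ and $v_{\sigma,0,D_B}\models(\psi^{ref}_{\mathrm{init}})^\flat$, where the latter depends only on the $\bar q$-copies. Claim~1 of Lemma~\ref{lem:universal-projection-optimal}, instantiated at the actual $\bar p$-copies of $v_{\sigma,0,D_B}$, then yields $\phi_{\mathrm{init}}^\flat$. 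Flattening back via Lemma~\ref{lem:flattening-correctness} gives $(\sigma,0)\models\phi_{\mathrm{init}}$. The invariant conjunct is handled identically at every time $t$, using $B=\{\phi_{\mathrm{inv}}\}$ and $\Always\psi^{ref}_{\mathrm{inv}}$, so $\phi_{\mathrm{inv}}$ holds at every $t$.

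For liveness, I split according to the branch selected by \textsc{SelectSupport}. If $i\in I_0$, then $\Psi^{ref}$ contains $\Always\Eventually\psi^{ref}_{i,0}$, and Lemma~\ref{lem:synthetic-liveness-refinement} together with $\Always\gamma_{\mathrm{inv}}$ from $\Gamma$ gives $\Always\Eventually\phi_i$. If $i\in I_{\mathrm{sup}}$, then $\Psi^{ref}$ contains $\Always\psi^{ref}_{i,j_i}$ for a single fixed $j_i\in J$, and $\Gamma$ supplies $\Always\Eventually\gamma_{\mathrm{liv},j_i}$. Lemma~\ref{lem:fixed-support-liveness-refinement} then gives $\Always\Eventually\phi_i$. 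Note that $\operatorname{Select}$ only chooses among candidates, and each candidate is sound on its own, so no argument about the selection rule is needed.

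The main point needing care is the windowing bookkeeping in the first step. I must check that the flattened formulas are all interpreted over one common window of depth $D_B$ at the same base time $t$, so that $\psi^{ref}_{\mathrm{init}}$, $\gamma_{\mathrm{init}}$ and the shifted invariants refer to the same indexed copies. I must also check that the universal quantifier ranges over all typed values of $\bar p_0,\ldots,\bar p_{D_B}$, so that it covers the actual values taken from the trace. Both follow because every core involved has next-depth at most $D_B$, so Lemma~\ref{lem:flattening-correctness} applies uniformly. Once this is checked, the remaining steps are direct.
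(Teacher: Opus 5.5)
Your proposal is correct and follows the paper's proof essentially step for step. The paper likewise obtains the initial and invariant obligations from Lemmas~\ref{lem:bounded-next-invariant-guard} and~\ref{lem:flattening-correctness} together with universal elimination over the actual indexed values, handles $I_0$ and $I_{\mathrm{sup}}$ through Lemmas~\ref{lem:synthetic-liveness-refinement} and~\ref{lem:fixed-support-liveness-refinement}, and uses only the individual soundness of the selected candidate. Your explicit reduction to $\Gamma\wedge\Psi^{ref}\models\phi$ via $\bar p$-freeness and your common-window bookkeeping spell out details the paper leaves implicit.
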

\begin{proof}
The initial and invariant finite-window arguments establish their corresponding
target obligations for every typed temporal valuation of $\bar p$.
Lemma~\ref{lem:synthetic-liveness-refinement} handles every $i\in I_0$, and
Lemma~\ref{lem:fixed-support-liveness-refinement} handles every
$i\in I_{\mathrm{sup}}$, because $\Gamma$ supplies the invariant and every
genuine support recurrence. Each candidate is individually sound, so the argument
uses only the soundness of the candidate that $\operatorname{Select}$ returns and
does not depend on how it is chosen. Conjoining these obligations proves the
claim. \qedbox
\end{proof}

\subsection{Soundness of Context-Aided Variable Elimination by Relaxation}
\label{app:soundness-rel}

The exact relaxation objective is
$\Psi^{rel}_{exact}:=\exists\bar p.\,(\Gamma\wedge\phi)$, and
Algorithm~\ref{alg:gr1-relaxation} returns
$\Psi^{rel}=\psi^{rel}_{\mathrm{init}}
\wedge\Always\psi^{rel}_{\mathrm{inv}}
\wedge\bigwedge_{\rho\in\mathcal P^\star}\Always\Eventually\rho$.
We prove $\Psi^{rel}_{exact}\Rightarrow\Psi^{rel}$ component by component,
using the cores $\psi^{rel}_{\mathrm{init}}$, $\psi^{rel}_{\mathrm{inv}}$,
$\psi^\gamma_j$, and $\psi^\phi_i$ and the set $\mathcal P^\star$ of
Algorithm~\ref{alg:gr1-relaxation}.

\paragraph{Initial relaxation.}
A valuation of $\bar p$ witnessing $\Psi^{rel}_{exact}$ satisfies
$\gamma_{\mathrm{init}}\wedge\Always\gamma_{\mathrm{inv}}\wedge\phi_{\mathrm{init}}$.
By Lemmas~\ref{lem:bounded-next-invariant-guard}
and~\ref{lem:flattening-correctness}, the indexed values it induces at time $0$
satisfy the conjunction inside the existential quantifier of
$\psi^{rel}_{\mathrm{init}}$, so
$\Psi^{rel}_{exact}\Rightarrow\psi^{rel}_{\mathrm{init}}$.

\paragraph{Invariant relaxation.}
Any valuation witnessing $\Psi^{rel}_{exact}$ satisfies
$\Always\gamma_{\mathrm{inv}}\wedge\Always\phi_{\mathrm{inv}}$. At each time,
Lemmas~\ref{lem:bounded-next-invariant-guard}
and~\ref{lem:flattening-correctness} show that the indexed values induced by
that same valuation satisfy the formula inside the existential quantifier of
$\psi^{rel}_{\mathrm{inv}}$, so
$\Psi^{rel}_{exact}\Rightarrow\Always\psi^{rel}_{\mathrm{inv}}$. This is the
one-sided direction $\exists\bar p.\,\Always F\Rightarrow\Always\exists\bar p.\,F$
of Lemma~\ref{lem:exists-always} at the level of finite windows; the converse
equivalence is not asserted.

\paragraph{Liveness relaxation.}
Let $\mathcal P=\{\psi^\gamma_j:j\in J\}\cup\{\psi^\phi_i:i\in I\}$ be the
projected liveness cores before reduction. A valuation witnessing
$\Psi^{rel}_{exact}$ satisfies $\Always\gamma_{\mathrm{inv}}$ together with
$\Always\Eventually\gamma_{\mathrm{liv},j}$ for every $j\in J$ and
$\Always\Eventually\phi_i$ for every $i\in I$. Lemma~\ref{lem:bounded-next-existential-relaxation},
applied to each $\gamma_{\mathrm{liv},j}$ and each $\phi_i$, gives
$\Psi^{rel}_{exact}\Rightarrow\bigwedge_{\rho\in\mathcal P}\Always\Eventually\rho$.

\begin{lemma}[Exactness of \textsc{SelectLiveness} reduction]
\label{lem:select-liveness-exact-reduction}
Let $\mathcal P$ be the projected liveness cores before reduction and
$\mathcal P^\star$ the set returned by \textsc{SelectLiveness}. Then
\[
\bigwedge_{\rho\in\mathcal P}\Always\Eventually\rho
\equiv
\bigwedge_{\rho\in\mathcal P^\star}\Always\Eventually\rho.
\]
\end{lemma}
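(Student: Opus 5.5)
We prove the equivalence by tracking the two reductions that \textsc{SelectLiveness} applies to $\mathcal P$, and showing that each leaves the conjunction of recurrences unchanged. The raw set $\mathcal P$ is built without the $\top$-candidates. Then comes the quotient $\mathcal P/{\equiv}$, which keeps one representative per equivalence class. Finally, the pruning loop deletes some $\rho_b$ whenever there is another $\rho_a$ with $\rho_a\wedge\neg\rho_b$ unsatisfiable and $\rho_b\wedge\neg\rho_a$ satisfiable, that is, $\rho_a$ strictly implies $\rho_b$. The algorithm never adds candidates, so $\mathcal P^\star\subseteq\mathcal P$, and the direction $\bigwedge_{\rho\in\mathcal P}\Always\Eventually\rho\Rightarrow\bigwedge_{\rho\in\mathcal P^\star}\Always\Eventually\rho$ is immediate. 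All remaining work is the converse: every discarded or merged candidate's recurrence must be implied by the recurrences that survive.

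For the discarded $\top$-candidates, $\Always\Eventually\top$ is valid. Omitting them therefore changes nothing; this holds whether they are viewed as never inserted or as removed from the conceptual full set. For the quotient step, $\rho\equiv\rho'$ gives validity of $\rho\to\rho'$ in both directions. Lemma~\ref{lem:redundant-liveness} then yields $\Always\Eventually\rho\equiv\Always\Eventually\rho'$, so keeping one representative per class preserves the conjunction. In both cases we rely only on the SMT checks being sound, so that a claimed validity or unsatisfiability is genuine. Per the convention used for \textsc{Implies}, the procedure prunes only on an established UNSAT answer.

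The main obstacle is the pruning loop. It removes elements from $\mathcal P^\star$ while its witness $\rho_a$ ranges over the unpruned set $\mathcal P$. The witness $\rho_a$ for a removed $\rho_b$ may therefore itself have been removed, so we cannot argue that each removed element has a surviving witness in a single step. I would resolve this with a well-foundedness argument.

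First, after the quotient step the elements of $\mathcal P$ are pairwise inequivalent. Strict implication $\rho_a\prec\rho_b$ (meaning $\rho_a\models\rho_b$ and $\rho_b\not\models\rho_a$) is then a strict partial order on the finite set $\mathcal P$. The set $\mathcal P^\star$ is exactly the set of $\prec$-minimal elements, since an element is removed iff some strictly stronger element exists.

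Given any $\rho_b\in\mathcal P\setminus\mathcal P^\star$, we follow a descending $\prec$-chain from $\rho_b$. By finiteness and irreflexivity, the chain terminates at a minimal $\rho^\ast\in\mathcal P^\star$, and by transitivity $\rho^\ast\models\rho_b$. Lemma~\ref{lem:redundant-liveness} then gives $\Always\Eventually\rho^\ast\Rightarrow\Always\Eventually\rho_b$. Hence every removed recurrence is implied by a retained one, which gives the converse inclusion and the claimed equivalence.
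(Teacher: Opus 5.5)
Your proposal is correct and follows the same route as the paper: drop $\top$-cores because $\Always\Eventually\top$ is valid, merge equivalent cores, and justify each strict-implication removal via Lemma~\ref{lem:redundant-liveness}. Your descending-chain argument also makes explicit a point that the paper's ``applying this argument to every removal'' glosses over, namely that the witness $\rho_a$ ranges over the unpruned set and may itself have been removed. One small caveat: if the SMT checks are sound but possibly incomplete, you only know that $\mathcal P^\star$ contains every $\prec$-minimal element, not that it equals that set, but this is all your argument actually uses.
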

\begin{proof}
Removing a core equivalent to $\top$ (as $\Always\Eventually\top$ is valid) or
a duplicate of an equivalent core is immediate. If $\rho_b$ is removed because
$\rho_a\Rightarrow\rho_b$, then $\Always\Eventually\rho_a\Rightarrow\Always\Eventually\rho_b$
by Lemma~\ref{lem:redundant-liveness}, so the conjunction is unchanged.
Applying this argument to every removal proves the claim. The implication test
is not complete for temporal redundancy: for example,
$\Always\Eventually p\equiv\Always\Eventually\Next p$ although
$p\not\Rightarrow\Next p$. This affects only simplification, not the
equivalence of the returned conjunction. \qedbox
\end{proof}

By Lemma~\ref{lem:select-liveness-exact-reduction}, the liveness consequence
established above holds with $\mathcal P^\star$ in place of $\mathcal P$.

\begin{theorem}[Soundness of Relaxation]
\label{thm:relaxation-soundness}
The formula returned by the relaxation CAVE algorithm,
\[
  \Psi^{rel}
  :=
  \psi^{rel}_{\mathrm{init}}
  \wedge
  \Always\psi^{rel}_{\mathrm{inv}}
  \wedge
  \bigwedge_{\rho\in\mathcal P^\star}
  \Always\Eventually\rho
\]
satisfies
\[
  \exists\bar p.\,
  \bigl(
    \Gamma(\bar p,\bar q)
    \wedge
    \phi(\bar p,\bar q)
  \bigr)
  \Rightarrow
  \Psi^{rel}(\bar q).
\]
\end{theorem}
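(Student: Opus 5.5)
The plan is to fix a trace $\sigma$ over $\bar q$ with $\sigma\models\exists\bar p.\,(\Gamma\wedge\phi)$. I pick a witnessing temporal valuation $P$ of $\bar p$ and set $\sigma':=\sigma[\bar p\mapsto P]$, so that $\sigma'\models\Gamma\wedge\phi$. Since $\Psi^{rel}$ is a conjunction, I show that $\sigma$ satisfies each conjunct separately. In every case I use the same $\sigma'$: each conjunct is $\bar p$-free, so its truth on $\sigma$ coincides with its truth on $\sigma'$.

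For the initial conjunct, note that $\sigma'\models\gamma_{\mathrm{init}}\wedge\Always\gamma_{\mathrm{inv}}\wedge\phi_{\mathrm{init}}$. Lemma~\ref{lem:bounded-next-invariant-guard} with $B=\{\gamma_{\mathrm{init}},\phi_{\mathrm{init}}\}$ gives $v_{\sigma',0,D_B}\models(\Gamma_{\mathrm{inv}}^{[0:h_B]})^\flat$. Lemma~\ref{lem:flattening-correctness} gives $v_{\sigma',0,D_B}\models\gamma_{\mathrm{init}}^\flat\wedge\phi_{\mathrm{init}}^\flat$, since both cores have depth at most $D_B$. The $\bar p$-copies of this valuation then witness the existential in $\psi^{rel}_{\mathrm{init}}$. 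Because the $\bar q$-copies agree with $v_{\sigma,0,D_B}$, flattening once more yields $(\sigma,0)\models\psi^{rel}_{\mathrm{init}}$.

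For the invariant conjunct, I repeat the same argument at an arbitrary time $t$, using $\sigma'\models\Always\phi_{\mathrm{inv}}$ and $B=\{\phi_{\mathrm{inv}}\}$. This gives $(\sigma,t)\models\psi^{rel}_{\mathrm{inv}}$ for all $t$. This is just the finite-window form of Lemma~\ref{lem:exists-always}: one fixed $P$ serves every position.

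For liveness, I apply Lemma~\ref{lem:bounded-next-existential-relaxation} once with $\chi=\gamma_{\mathrm{liv},j}$ for each $j\in J$, and once with $\chi=\phi_i$ for each $i\in I$. In each case $\sigma'$ satisfies $\Always\gamma_{\mathrm{inv}}\wedge\Always\Eventually\chi$, so $\sigma\models\Always\Eventually\rho$ for every $\rho\in\mathcal P$. Before the reduction step, $\mathcal P$ contains no cores equivalent to $\top$, but omitting such cores is harmless. Lemma~\ref{lem:select-liveness-exact-reduction} then transfers the conjunction from $\mathcal P$ to $\mathcal P^\star$.

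The main point needing care is the bookkeeping of windows. The shifted invariant context in each subproblem is $\Gamma_{\mathrm{inv}}^{[0:h]}$, where $h$ depends on the subproblem. One has to check that the window depth used in each existential equals the $D_B$ required by Lemma~\ref{lem:bounded-next-invariant-guard}. One also has to check that flattening is used consistently on both $\sigma'$ and $\sigma$. Beyond that, the argument is a direct conjunction of the three component arguments.
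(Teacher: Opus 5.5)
Your proposal is correct and follows the paper's proof essentially step for step. You fix one witnessing valuation of $\bar p$, discharge the initial conjunct at time $0$ and the invariant conjunct at every $t$ via Lemmas~\ref{lem:bounded-next-invariant-guard} and~\ref{lem:flattening-correctness}, obtain the liveness conjuncts core by core from Lemma~\ref{lem:bounded-next-existential-relaxation}, and transfer from $\mathcal P$ to $\mathcal P^\star$ with Lemma~\ref{lem:select-liveness-exact-reduction}; your remarks on $\bar p$-freeness and on omitted $\top$-cores only make explicit what the paper leaves implicit.
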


\begin{proof}
The initial, invariant, and liveness implications were established above, the
latter with $\mathcal P^\star$ by
Lemma~\ref{lem:select-liveness-exact-reduction}. Conjoining them gives
$\exists\bar p.\,(\Gamma\wedge\phi)\Rightarrow\Psi^{rel}$. \qedbox
\end{proof}

\section{Optimality Proofs}
\label{app:optimality}

The following results use the admissible finite-subproblem classes of
Definitions~\ref{def:admissible-gr1-cave} and~\ref{def:gr1-optimality}. They do
not assert unrestricted temporal quantifier elimination.

\begin{lemma}[Refinement finite-subproblem optimality]
\label{lem:gr1-optimal-ref-subproblems}
The initial elimination $\psi^{ref}_{\mathrm{init}}$ and invariant elimination
$\Always\psi^{ref}_{\mathrm{inv}}$ are weakest within their corresponding
fixed finite refinement subproblems. For each liveness target $i$,
$\Always\Eventually\psi^{ref}_{i,0}$ is weakest in the synthetic recurrent
branch, and for every fixed $j\in J$,
$\Always\psi^{ref}_{i,j}$ is weakest in the persistent branch using
$\gamma_{\mathrm{liv},j}$.
\end{lemma}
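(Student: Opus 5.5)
The plan is to reduce every claim to Lemma~\ref{lem:universal-projection-optimal}, applied to the appropriate flattened finite formula. Each admissible refinement subproblem in Definition~\ref{def:admissible-gr1-cave} has an obligation of the form $\eta\models\Theta\Rightarrow\phi_\star$, where $\Theta$ is $\gamma_{\mathrm{init}}\wedge\Gamma_{\mathrm{inv}}^{[0:h_B]}$, $\Gamma_{\mathrm{inv}}^{[0:h_B]}$, $\Gamma_{\mathrm{inv}}^{[0:h_{i,0}]}$, or $\Gamma_{\mathrm{inv}}^{[0:h_{i,j}]}\wedge\gamma_{\mathrm{liv},j}$, and $\phi_\star$ is the matching target core. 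I would first note that every core involved has next-depth at most the window depth $D$ of that subproblem. Lemma~\ref{lem:flattening-correctness} then turns the obligation into a statement about finite valuations: for every trace $\sigma$ and time $t$, $(\sigma,t)\models\eta\wedge\Theta$ implies $(\sigma,t)\models\phi_\star$.

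The key step is converting this temporal obligation into the finite hypothesis of Lemma~\ref{lem:universal-projection-optimal}, namely $\eta^\flat(\bar q)\Rightarrow F(\bar p,\bar q)$ for all indexed valuations, with $F:=\Theta^\flat\Rightarrow\phi_\star^\flat$. Take an arbitrary valuation of the indexed variables $\bar p_0,\dots,\bar p_D,\bar q_0,\dots,\bar q_D$. Build a trace $\sigma$ whose positions $0,\dots,D$ carry exactly these values, with arbitrary values of the correct sorts elsewhere. Then $v_{\sigma,0,D}$ is this valuation. Since $\eta$ is $\bar p$-free and bounded-next, its flattening is a formula over the $\bar q_k$. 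If this valuation satisfies $\eta^\flat$ and $\Theta^\flat$, flattening correctness gives $(\sigma,0)\models\eta\wedge\Theta$, so the obligation yields $\phi_\star$ at time $0$, and flattening again gives $\phi_\star^\flat$. Here the window obligation has to be read pointwise, as a local implication at each position, and not as a global temporal entailment. I would state this reading explicitly, since it is exactly what the definition says. The lemma then gives $\eta^\flat\Rightarrow\psi^{ref}_\star$, where $\psi^{ref}_\star=\forall\bar p.\,F$ is the quantity computed by Algorithm~\ref{alg:gr1-refinement} for that subproblem or branch.

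The final step lifts this to the temporal conjuncts. The core implication $\eta\Rightarrow\psi^{ref}_\star$ holds at every position, since the argument above was position independent; equivalently, $\Always(\eta\Rightarrow\psi^{ref}_\star)$ is valid. This gives $\eta_{\mathrm{init}}\Rightarrow\psi^{ref}_{\mathrm{init}}$ directly. It gives $\Always\eta_{\mathrm{inv}}\Rightarrow\Always\psi^{ref}_{\mathrm{inv}}$ by monotonicity of $\Always$, and $\Always\eta_{i,j}\Rightarrow\Always\psi^{ref}_{i,j}$ for each fixed $j$ in the same way. For the synthetic branch, Lemma~\ref{lem:liveness-prop} gives $\Always\Eventually\eta_{i,0}\Rightarrow\Always\Eventually\psi^{ref}_{i,0}$.

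Finally, I would check that $\psi^{ref}_\star$ is itself admissible, so that it is the weakest element of the class rather than merely an upper bound. It is bounded-next and $\bar p$-free by construction, and it satisfies its own obligation by claim~1 of Lemma~\ref{lem:universal-projection-optimal} together with flattening. I expect the main obstacle to be the trace-construction step. One must confirm that an arbitrary finite indexed valuation is realizable by some trace, which holds because traces are unconstrained sequences over the typed domains. One must also confirm that the obligation is pointwise and does not hold merely on traces satisfying $\Always\gamma_{\mathrm{inv}}$; otherwise the finite implication might fail on valuations that no such trace realizes.
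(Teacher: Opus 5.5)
Your proposal is correct and follows essentially the same route as the paper: exact flattening turns each admissible obligation into the finite hypothesis of Lemma~\ref{lem:universal-projection-optimal}, which yields $\eta\Rightarrow\psi_b$. The order is then lifted through $\Always$ by monotonicity and through $\Always\Eventually$ by Lemma~\ref{lem:liveness-prop}. Your explicit trace realization of arbitrary indexed valuations spells out what the paper compresses into ``flattening is exact,'' though the valuations should range over indices up to $\max\{D,d(\eta)\}$, since admissibility does not bound $d(\eta)$; your closing check that $\psi^{ref}_\star$ is itself admissible is a small addition the paper omits.
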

\begin{proof}
Fix a subproblem or certificate branch $b$ of
Definition~\ref{def:admissible-gr1-cave}. Let $F^{temp}_b$ be its finite
obligation (the implication in the refinement table), let $F_b$ be its
flattening, and let $\psi_b=\forall\bar p.\,F_b$ be the core that
Algorithm~\ref{alg:gr1-refinement} computes for $b$. For any admissible
$\bar p$-free core $\eta$ of $b$, admissibility gives $\eta\models F^{temp}_b$.
Flattening is exact (Lemma~\ref{lem:flattening-correctness}), so
$\eta^\flat\Rightarrow F_b$ for every valuation of the eliminated indexed
variables. Since $\eta^\flat$ is $\bar p$-free,
Lemma~\ref{lem:universal-projection-optimal} gives
$\eta^\flat\Rightarrow\forall\bar p.\,F_b$, and exactness of flattening
transfers this back to $\eta\Rightarrow\psi_b$. This yields the claim for
$\psi^{ref}_{\mathrm{init}}$ and for the cores of the INV, synthetic, and
fixed-support branches. For the branches whose certificate carries a temporal
wrapper, the order is preserved under it: $\eta\Rightarrow\psi_b$ gives
$\Always\eta\Rightarrow\Always\psi_b$ for the INV and fixed-support branches, and
$\Always\Eventually\eta\Rightarrow\Always\Eventually\psi_b$ for the synthetic
branch by Lemma~\ref{lem:liveness-prop}. \qedbox
\end{proof}

\paragraph{$\bar p$-free targets.}
If $\phi_i$ is $\bar p$-free, then $\phi_i\Rightarrow\psi^{ref}_{i,0}$, because
$\phi_i$ implies $(\Gamma_{\mathrm{inv}}^{[0:h_{i,0}]})^\flat\Rightarrow\phi_i^\flat$
for every eliminated-variable valuation. Hence
$\Always\Eventually\phi_i\Rightarrow\Always\Eventually\psi^{ref}_{i,0}$, so no
separate retention case is needed. No claim is made that finite core
implication computes exact temporal dominance between different liveness
branches; bounded-next windows can overlap across successive positions.

\begin{lemma}[Relaxation finite-subproblem optimality]
\label{lem:gr1-optimal-rel-subproblems}
For every fixed finite relaxation subproblem with flattened formula $F$, the
elimination $\exists\bar p.F$ is its strongest $\bar p$-free consequence.
Consequently, $\psi^{rel}_{\mathrm{init}}$ is strongest for INIT,
$\Always\psi^{rel}_{\mathrm{inv}}$ is strongest for the admissible INV branch,
and each $\Always\Eventually\rho$ generated before
\textsc{SelectLiveness} is strongest for its corresponding liveness
subproblem.
\end{lemma}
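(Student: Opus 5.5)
The argument mirrors the proof of Lemma~\ref{lem:gr1-optimal-ref-subproblems} with the universal projection replaced by the existential one. Fix a relaxation subproblem $b$ from the relaxation table of Definition~\ref{def:admissible-gr1-cave}: INIT, INV, a context liveness core $\gamma_{\mathrm{liv},j}$, or a target liveness core $\phi_i$. Let $F^{temp}_b$ be the conjunction on the left of its obligation and $F_b:=(F^{temp}_b)^\flat$ its flattening over the window $0,\ldots,D_B$. The core that Algorithm~\ref{alg:gr1-relaxation} computes for $b$ is exactly $\psi_b=\exists\bar p.\,F_b$. Concretely, $\psi_b$ is $\psi^{rel}_{\mathrm{init}}$, $\psi^{rel}_{\mathrm{inv}}$, $\psi^\gamma_j$, or $\psi^\phi_i$, before any reduction by \textsc{SelectLiveness}.

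The first claim, that $\exists\bar p.\,F$ is the strongest $\bar p$-free consequence of $F$, is literally Lemma~\ref{lem:existential-projection-optimal} applied to the finite formula $F$. There is nothing further to prove here.

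For the consequences, take any admissible certificate $\rho$ for $b$. By definition it is a bounded-next, $\bar p$-free core with $F^{temp}_b\models\rho$. The first step transfers this to the finite level. By Lemma~\ref{lem:flattening-correctness}, applied at window depth at least $\max(D_B,d(\rho))$, we get $F_b\Rightarrow\rho^\flat$ as finite formulas. One point needs care. Every finite valuation of the indexed variables must arise as $v_{\sigma,0,D}$ for some trace $\sigma$; it does, by extending the valuation arbitrarily beyond the window. This is what makes the temporal entailment and the finite entailment coincide.

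Since $\rho^\flat$ mentions no $\bar p$-copies, clause~2 of Lemma~\ref{lem:existential-projection-optimal} gives $\psi_b\Rightarrow\rho^\flat$. Flattening is exact in the reverse direction as well, so this yields the core-level implication $\psi_b\Rightarrow\rho$ at every position.

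It remains to lift this implication through the temporal wrapper of the lifted conjunct. For INIT the wrapper is trivial. For INV, $\psi_b\Rightarrow\rho$ gives $\Always\psi_b\Rightarrow\Always\rho$ immediately. For the two liveness families, Lemma~\ref{lem:redundant-liveness} (equivalently Lemma~\ref{lem:liveness-prop} with $\Always(\psi_b\to\rho)$ valid) gives $\Always\Eventually\psi_b\Rightarrow\Always\Eventually\rho$. In each case the lifted generated conjunct implies the lifted certificate, which is strongest in the sense of Definition~\ref{def:gr1-optimality}.

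The main obstacle I expect is the window bookkeeping in the transfer step. The certificate $\rho$ may read positions beyond $D_B$, and the liveness subproblems use windows $[0:d(\gamma_{\mathrm{liv},j})]$ and $[0:d(\phi_i)]$ rather than a common $h_B$. I would handle this by flattening both sides over the maximum depth and quantifying only the $\bar p$-copies. The extra copies of $\bar q$ are unconstrained in $F_b$, so they do not affect $\exists\bar p.\,F_b$, while the $\bar q$-copies that $\rho$ reads beyond $D_B$ are universally free in the implication.
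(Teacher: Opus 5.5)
Your proposal is correct and follows the paper's proof essentially step for step. Like the paper, you flatten the admissible entailment $F^{temp}\models\rho$ to $F\Rightarrow\rho^\flat$, apply clause~2 of Lemma~\ref{lem:existential-projection-optimal} to get $\exists\bar p.F\Rightarrow\rho^\flat$, transfer this back to the cores, and lift through $\Always$ and $\Always\Eventually$ via Lemma~\ref{lem:liveness-prop}. Your additional remarks are sound and make explicit what the paper compresses into ``exact flattening'': every finite valuation is realized by some trace, and you flatten over the larger of the two window depths.
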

\begin{proof}
Let $F^{temp}$ be the finite obligation of the subproblem (the entailment in
the relaxation table of Definition~\ref{def:admissible-gr1-cave}) and $F$ its
flattening, so that Algorithm~\ref{alg:gr1-relaxation} sets the corresponding
core to $\exists\bar p.F$. For any admissible $\bar p$-free consequence
$\eta$, admissibility gives $F^{temp}\models\eta$, and exact flattening
(Lemma~\ref{lem:flattening-correctness}) gives $F\Rightarrow\eta^\flat$. Then
Lemma~\ref{lem:existential-projection-optimal} gives
$\exists\bar p.F\Rightarrow\eta^\flat$, which transfers back to
$\psi^{rel}_{\mathrm{init}}\Rightarrow\eta$ for INIT,
$\psi^{rel}_{\mathrm{inv}}\Rightarrow\eta$ for INV,
$\psi^\gamma_j\Rightarrow\eta$ for context LIV, and
$\psi^\phi_i\Rightarrow\eta$ for target LIV. Lifting under $\Always$ for INV and
under $\Always\Eventually$ for LIV (Lemma~\ref{lem:liveness-prop}) preserves
the order. \qedbox
\end{proof}

\begin{lemma}[Optimality of the complete decomposed relaxation]
\label{lem:gr1-optimal-complete-relaxation}
Within the decomposed finite-subproblem relaxation class of
Definition~\ref{def:admissible-gr1-cave}, the formula $\Psi^{rel}$ returned by
Algorithm~\ref{alg:gr1-relaxation} is the strongest admissible complete
relaxation consequence.
\end{lemma}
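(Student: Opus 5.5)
The plan is to reduce the claim to the per-subproblem optimality of Lemma~\ref{lem:gr1-optimal-rel-subproblems} and then check that the lifted conjuncts combine monotonically. Fix an arbitrary complete admissible relaxation in the decomposed form of Definition~\ref{def:admissible-gr1-cave}:
\[
R:=\eta_{\mathrm{init}}\wedge\Always\eta_{\mathrm{inv}}\wedge\bigwedge_{j\in J}\Always\Eventually\eta_j^\gamma\wedge\bigwedge_{i\in I}\Always\Eventually\eta_i^\phi .
\]
Here each certificate satisfies the obligation of its own finite subproblem, independently of the others. The goal is $\Psi^{rel}\Rightarrow R$. Since $R$ is a conjunction, it suffices to derive each conjunct of $R$ from $\Psi^{rel}$.

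For the initial and invariant conjuncts, I would invoke Lemma~\ref{lem:gr1-optimal-rel-subproblems} directly. It gives $\psi^{rel}_{\mathrm{init}}\Rightarrow\eta_{\mathrm{init}}$ and $\psi^{rel}_{\mathrm{inv}}\Rightarrow\eta_{\mathrm{inv}}$ as validities of cores. The latter lifts to $\Always\psi^{rel}_{\mathrm{inv}}\Rightarrow\Always\eta_{\mathrm{inv}}$ by pointwise application at every position. For each liveness conjunct, the same lemma gives $\psi^\gamma_j\Rightarrow\eta_j^\gamma$ and $\psi^\phi_i\Rightarrow\eta_i^\phi$. Lemma~\ref{lem:redundant-liveness} (or Lemma~\ref{lem:liveness-prop}) then lifts these to $\Always\Eventually\psi^\gamma_j\Rightarrow\Always\Eventually\eta_j^\gamma$ and $\Always\Eventually\psi^\phi_i\Rightarrow\Always\Eventually\eta_i^\phi$. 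If some $\psi^\gamma_j$ or $\psi^\phi_i$ is equivalent to $\top$, the same implication forces $\eta$ to be valid, so that conjunct of $R$ is trivially implied.

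The remaining step, and the only real obstacle, is that $\Psi^{rel}$ does not contain all the cores $\Always\Eventually\psi^\gamma_j$ and $\Always\Eventually\psi^\phi_i$. It contains only those in $\mathcal P^\star$, after \textsc{SelectLiveness} has discarded trivial cores, equivalent duplicates and strictly implied cores. Here I would use Lemma~\ref{lem:select-liveness-exact-reduction}: the conjunction over $\mathcal P^\star$ is equivalent to the conjunction over the full unreduced $\mathcal P=\{\psi^\gamma_j\}_{j\in J}\cup\{\psi^\phi_i\}_{i\in I}$ (those equivalent to $\top$ contribute $\Always\Eventually\top$, which is valid). Hence $\Psi^{rel}$ implies every $\Always\Eventually\psi^\gamma_j$ and $\Always\Eventually\psi^\phi_i$. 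Care is needed in the case where several candidates in an equivalence class are collapsed to one representative. The equivalence of cores makes their $\Always\Eventually$ liftings equivalent, so the retained representative still yields each original conjunct.

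Conjoining the four families of implications gives $\Psi^{rel}\Rightarrow R$ for an arbitrary admissible decomposed $R$. Since $\Psi^{rel}$ itself belongs to this class, soundness of each certificate being the first claims of Lemma~\ref{lem:existential-projection-optimal} transferred through flattening, it is the strongest element of the class.
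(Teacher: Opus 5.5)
Your proof is correct and matches the paper's argument. Lemma~\ref{lem:gr1-optimal-rel-subproblems} yields each conjunct of your arbitrary admissible $R$ from the corresponding unreduced CAVE conjunct, and Lemma~\ref{lem:select-liveness-exact-reduction} shows that replacing $\mathcal P$ by $\mathcal P^\star$ leaves the liveness conjunction unchanged, so $\Psi^{rel}\Rightarrow R$. Your closing membership remark is an addition the paper leaves implicit; strictly, $\Psi^{rel}$ lies in the decomposed class only up to equivalence with the unreduced conjunction over $\mathcal P$, and that equivalence is again supplied by the reduction lemma.
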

\begin{proof}
Let $E=\bigwedge_s E_s$ be any complete admissible relaxation, with
independently admissible INIT, INV, context-LIV, and target-LIV terms
$E_s$, and let $R_s$ be the corresponding CAVE results before liveness
reduction. By Lemma~\ref{lem:gr1-optimal-rel-subproblems}, $R_s\Rightarrow E_s$
for every $s$, so $\bigwedge_s R_s\Rightarrow E$. By
Lemma~\ref{lem:select-liveness-exact-reduction}, \textsc{SelectLiveness} does
not change this conjunction, hence $\Psi^{rel}\Rightarrow E$. This proves
strongest-consequence optimality within the class; it does not cover joint
consequences that use dependencies between distinct subproblems, which
Definition~\ref{def:admissible-gr1-cave} excludes. \qedbox
\end{proof}

There is no analogous complete-refinement optimality conclusion: different
liveness certificate branches can be incomparable, and independent branch
selection need not yield an undominated final conjunction. Every candidate is
individually sound, and no optimality is claimed across mixed $\Always$ and
$\Always\Eventually$ branches, so
Theorem~\ref{thm:finite-window-gr1-optimality} states refinement optimality only
subproblem-wise.

\section{Computational Evaluation}
\label{app:scalability}
\renewcommand{\topfraction}{0.95}\renewcommand{\bottomfraction}{0.95}\renewcommand{\textfraction}{0.05}\renewcommand{\floatpagefraction}{0.75}

This appendix reports all measurements behind the empirical statements in the
paper. It can be read independently of the proofs. We first fix the
terminology and the measurement protocol (Sections~\ref{app:eval-terms}
and~\ref{app:eval-protocol}) and describe the benchmark families
(Section~\ref{app:eval-families}). The experiments are numbered E1 to E8 and,
for the stress test against automata, E6b. Each experiment answers one question.
\begin{itemize}
  \item E1 and E2 ask how the cost of one CAVE call depends on the number of
  eliminated variables, the depth of the finite window, and the number of liveness
  conjuncts (Section~\ref{app:eval-scaling}).
  \item E3 compares CAVE with eliminating the same variables after unrolling the
  specifications over a finite horizon (Section~\ref{app:eval-unrolling}).
  \item E4 tests, against an independent explicit-state oracle, whether the
  returned formulas are sound in practice and how often they are exact
  (Section~\ref{app:eval-precision}).
  \item E5 measures chains of compositions (Section~\ref{app:eval-chains}).
  \item E6 and E6b compare CAVE with exact elimination through B\"uchi automata
  (Section~\ref{app:eval-automata}).
  \item E7 isolates the effect of individual implementation choices
  (Section~\ref{app:eval-ablations}).
  \item E8 measures the composition and quotient computations on the automated
  emergency-braking (AEB) example (Section~\ref{app:scalability-aeb}).
\end{itemize}
Every experiment except E4 measures running time, and E4 makes no runtime claim.
All numbers describe the implementation over the tested ranges. They are not
asymptotic bounds, and the synthetic families are built to isolate individual
cost drivers, not to represent GR(1) specifications in general.

\paragraph{Summary of findings.}
\begin{itemize}
  \item The cost of one CAVE call is governed by the number $n_p(D_B+1)$ of
  eliminated indexed variables. On Boolean families it grows steeply and quantifier
  elimination (QE) dominates, so that eight irrelevant variables exceed the time limit.
  On linear real arithmetic it stays below $0.8\,\mathrm{s}$ over the tested range
  (E1, E2).
  \item The cost grows linearly with the number of liveness conjuncts, at about
  $9$ to $10\,\mathrm{ms}$ per liveness candidate in the substitution family (E1, E2).
  \item On the AEB composition, a finite-horizon unrolling baseline that ignores
  liveness is faster than CAVE for horizons $T\le 2$, slower from $T=3$, and
  exceeds the time limit at $T=6$ when it eliminates the production interface
  (E3).
  \item On $16000$ random classifications against an explicit-state oracle on
  bounded lasso words, no unsound result was observed. The refinement result was
  exact on $69.4\%$ and the relaxation result on $90.6\%$ of the instances (E4).
  \item Chains of up to eight stages are composed in $2.8\,\mathrm{s}$ or less
  (E5).
  \item Against exact Spot-based elimination, CAVE is slower when many variables are
  irrelevant or the window is deep, and faster when the number of liveness conjuncts
  grows. The CAVE results were sound but not exact on every instance that could be
  checked (E6, E6b).
  \item Computing the AEB contracts $\mathcal C_{SVD}$ and $\mathcal Q$ takes $0.93\,\mathrm{s}$, of which $0.48\,\mathrm{s}$ is spent inside CAVE (E8).
\end{itemize}

\subsection{Terminology and Measured Quantities}
\label{app:eval-terms}

A GR(1) specification is a conjunction of \emph{cores}, namely initial cores,
invariant cores (under $\Always$), and liveness cores (under
$\Always\Eventually$), as in Section~\ref{sec:preliminaries}.
A CAVE call eliminates a set $\bar p$ of \emph{irrelevant} variables from a
\emph{target} $\varphi$ in the presence of a \emph{context} $\Gamma$. The
\emph{refinement} call returns a $\bar p$-free GR(1) formula that implies
$\forall\bar p.(\Gamma\Rightarrow\varphi)$, and the \emph{relaxation} call returns a
$\bar p$-free GR(1) formula implied by $\exists\bar p.(\Gamma\wedge\varphi)$
(Section~\ref{sec:gr1-cave-bounds}). The following quantities appear throughout.
\begin{itemize}
  \item $n_p=|\bar p|$ is the number of irrelevant variables, and $n_I$ and $n_J$ are
  the numbers of liveness cores of the target and of the context.
  \item $D$ is the nesting depth of $\Next$ in the generated invariants, and
  $D_B$ is the depth of the finite window that CAVE actually uses
  (Section~\ref{sec:finite-window-cave}). A call flattens the problem over
  $D_B+1$ time points, so each irrelevant variable becomes $D_B+1$ indexed variables.
  We call $q:=|\bar p|(D_B+1)$ the \emph{quantified dimension}.
  \item Quantifier elimination (QE) is performed by Z3. Unless stated otherwise
  the tactic is \texttt{qe2}, and the \emph{QE share} is the fraction of the
  runtime spent inside the QE tactic.
  \item A \emph{liveness candidate} is one liveness core that a call must project.
  As in Section~\ref{sec:gr1-cave-bounds}, refinement builds one candidate per branch
  for each target liveness core, namely one for the synthetic branch and one for each
  support branch, which gives $n_I(1+n_J)$ candidates. Relaxation projects every context
  and target liveness core once, which gives $n_I+n_J$ candidates.
  \item \emph{Simplify} is the post-processing step of Pacti that removes redundant
  cores from a result. It is on by default.
  \item The size of a result is measured by its number of \emph{cores} and by the
  number of nodes of its Z3 term DAG (\emph{DAG nodes}).
  \item A \emph{Büchi automaton} accepts an infinite word if it visits an
  accepting state infinitely often, and every LTL formula can be translated to one.
  Existential quantification of a variable projects the variable out of the edge
  labels. Universal quantification needs complementation, which is exponential in
  the worst case~\cite{sistla1987complementation}.
\end{itemize}

\subsection{Measurement Protocol}
\label{app:eval-protocol}

\paragraph{Environment.}
All runs were made on one ARM64 Mac (Apple M5 Pro, 48\,GB of memory) under
macOS~26.7.1 with Python~3.11.16 and Z3~5.1.0. The comparisons with automata use
the Spot library~\cite{duret2022spot}, version~2.16 (\url{https://spot.lre.epita.fr}). The source
tree was clean, and its revision, the package versions, and the digests of the
benchmark scripts are recorded together with the raw per-run data.

\paragraph{Cold and warm calls.}
Every repetition runs in a freshly spawned worker process. The worker builds the
instance and then executes the measured operation twice. The first call (\emph{cold})
includes one-time costs of the process, and the second call (\emph{warm}) is the one
we report. Construction of the instance and all diagnostics, such as result sizes and
equivalence checks, lie outside the timed region. For every configuration with a warm
time of at least $0.05\,\mathrm{s}$ in E1 to E3, E5, E7, and E8, the cold and warm
medians differ by less than $7\%$, so the choice does not affect any conclusion.

\paragraph{Repetitions, limits, and statistics.}
We report medians over successful repetitions, with $10$ repetitions in E1, E3, E6,
E7, and E8, $5$ in E2 and E6b, and $3$ in E5. Each worker has a hard limit that
covers imports, preparation, both calls, and checks. It is $600\,\mathrm{s}$ in E1 to E3, E6, E7, and E8, and $300\,\mathrm{s}$ in E5 and E6b. A configuration that
exceeds the limit is a \emph{timeout} (T/O in the tables). Because the limit covers
both calls, a timeout means that the two calls together need more than the limit,
which for similar cold and warm times is more than about half of the limit per call.
Within one sweep the values ascend, and once all repetitions at a value (the first
three in E6b) time out, larger values are skipped, which is shown as ``--''.
Timeouts are censored observations and are never averaged into medians. The order of
the sweeps is shuffled with a fixed seed. E4 has no repetitions and no timing.

\subsection{Benchmark Families}
\label{app:eval-families}

Experiments E1 to E3 and E5 to E8 use the following instances. In all families the
parameters not being varied are fixed to $n_p=n_I=n_J=2$ and $D=1$, and every sweep
varies one parameter at a time.

\paragraph{Substitution family (E1).}
The irrelevant variables $p_0,\ldots,p_{n_p-1}$ are Boolean, and each is tied to a retained
Boolean $a_k$ over the window. Writing $P:=\bigwedge_{k=0}^{n_p-1}p_k$, the context is
\[
  \Gamma \;=\; \Always\!\Bigl(\bigwedge_{k=0}^{n_p-1}\bigwedge_{r=0}^{D}
  (\Next^{r}p_k\leftrightarrow\Next^{r}a_k)\Bigr)\;\wedge\;
  \bigwedge_{j=0}^{n_J-1}\Always\Eventually(P\leftrightarrow s_j),
\]
and the target is $\varphi=\bigwedge_{i=0}^{n_I-1}\Always\Eventually(P\leftrightarrow t_i)$,
with retained Booleans $s_j$ and $t_i$. Elimination is by substitution, so the family
isolates the cost of QE and of candidate construction. The sweeps are
$n_p\in\{1,2,4,6,8\}$, $D\in\{0,\ldots,5\}$, and $n_I,n_J\in\{1,2,4,6,8\}$.

\paragraph{Boolean-bound family (E2, E6, E6b, E7).}
Here each irrelevant variable $p_k$ is only bounded by retained Booleans $a_k$ and $b_k$.
For $k=0,\ldots,n_p-1$ and $r=0,\ldots,D$ the context has the invariant core
\[
  \Always\,\Next^{r}\bigl(((a_k\wedge\Next\neg a_k)\rightarrow p_k)\wedge
  (p_k\rightarrow(a_k\vee b_k))\bigr).
\]
The context liveness cores are $\Always\Eventually(\bigwedge_{k\in C_j}p_k\leftrightarrow s_j)$ and
the target liveness cores are $\Always\Eventually(\bigwedge_{k\in T_i}p_k\leftrightarrow t_i)$.
Here $T_i=\{(i+o)\bmod n_p : 0\le o<w_T\}$ and
$C_j=\{(j+1+o)\bmod n_p : 0\le o<w_C\}$ with $w_T=\max(1,\lfloor n_p/2\rfloor)$ and
$w_C=\min(n_p,w_T+1)$, so that distinct cores constrain different subsets of
$\bar p$. The target has no invariant. Since the invariant
contains one $\Next$, the window depth is $D_B=D+1$.

\paragraph{Linear-real family (E2).}
For $k=0,\ldots,n_p-1$ the real variables $u_k,d_k,v_k$ satisfy the context invariant
$\Always\,\Next^r\bigl(d_k/10\le u_k\le d_k/5\;\wedge\;
u_k-\tfrac12\le\Next v_k\le u_k+\tfrac12\bigr)$ for $r=0,\ldots,D$. Context liveness
core $j$ is $\Always\Eventually(u_{j \bmod n_p}\le(3+j)/2\wedge ms_j)$ and target liveness core $i$
is $\Always\Eventually(u_{i \bmod n_p}\ge(i\bmod 3)/2\wedge mt_i)$ with retained Booleans
$ms_j$ and $mt_i$. The target invariant is $\Always(\Next v_k\le 5+k)$ for each $k$.
Relaxation eliminates the $u_k$, and refinement eliminates the $u_k$ and the $v_k$, so
that the quantified dimension of refinement is twice as large.

\paragraph{Remark on the depth sweeps.}
In all three families the copies of an invariant shifted by $r\ge1$ are implied by
the copy with $r=0$ under $\Always$. The depth sweeps therefore measure how the
implementation reacts to the syntactic depth of the window, and not a change of the
logical content. We did not add a preprocessing step that removes such copies.

\paragraph{AEB and chain instances.}
E3, E7, and E8 use the AEB contracts of Section~\ref{sec:aeb-running-example}, and E5
uses the chain of Section~\ref{app:eval-chains}.

\subsection{Scaling with the Problem Parameters (E1 and E2)}
\label{app:eval-scaling}

Table~\ref{tab:e12-sweeps} lists the warm medians of all one-at-a-time sweeps for the
three families and both directions, and Figure~\ref{fig:e12-runtime} plots them.
Table~\ref{tab:e2-grid} gives the joint grid over $n_p$ and $D$ for the Boolean-bound
family.

\begin{table}[!tbp]
\centering
\caption{Warm median runtime in seconds of one CAVE call when one parameter is
varied and the others are fixed to $n_p=n_I=n_J=2$ and $D=1$. Ref.\ is refinement
and Rel.\ is relaxation. T/O marks a timeout.}
\label{tab:e12-sweeps}
\small
\setlength{\tabcolsep}{5pt}
\begin{tabular}{@{}llrrrrrr@{}}
\toprule
 & & \multicolumn{2}{c}{Substitution} & \multicolumn{2}{c}{Boolean-bound} & \multicolumn{2}{c}{Linear real} \\
\cmidrule(lr){3-4}\cmidrule(lr){5-6}\cmidrule(lr){7-8}
Parameter & Value & Ref. & Rel. & Ref. & Rel. & Ref. & Rel. \\
\midrule
$n_p$ & 1 & 0.0257 & 0.0342 & 0.0449 & 0.0419 & 0.06 & 0.0922 \\
 & 2 & 0.0573 & 0.0542 & 0.151 & 0.0669 & 0.0946 & 0.155 \\
 & 4 & 0.705 & 0.565 & 2.22 & 0.463 & 0.171 & 0.318 \\
 & 6 & 26.9 & 38.7 & 71.4 & 17.6 & 0.258 & 0.496 \\
 & 8 & T/O & T/O & T/O & T/O & 0.358 & 0.708 \\
\midrule
$D$ & 0 & 0.0263 & 0.0382 & 0.0622 & 0.0499 & 0.0635 & 0.093 \\
 & 1 & 0.0562 & 0.0541 & 0.152 & 0.0664 & 0.0946 & 0.149 \\
 & 2 & 0.168 & 0.131 & 0.318 & 0.116 & 0.126 & 0.213 \\
 & 3 & 0.703 & 0.571 & 0.75 & 0.28 & 0.158 & 0.279 \\
 & 4 & 3.7 & 3.81 & 2.36 & 0.974 & 0.191 & 0.356 \\
 & 5 & 26.9 & 39.1 & 8.5 & 4.85 & 0.225 & 0.439 \\
\midrule
$n_I$ & 1 & 0.0294 & 0.0399 & 0.0753 & 0.0523 & 0.0543 & 0.118 \\
 & 2 & 0.0575 & 0.0542 & 0.151 & 0.0681 & 0.0949 & 0.149 \\
 & 4 & 0.113 & 0.0903 & 0.294 & 0.108 & 0.171 & 0.218 \\
 & 6 & 0.166 & 0.135 & 0.44 & 0.154 & 0.253 & 0.298 \\
 & 8 & 0.221 & 0.186 & 0.584 & 0.211 & 0.329 & 0.388 \\
\midrule
$n_J$ & 1 & 0.0375 & 0.0396 & 0.0928 & 0.0485 & 0.0653 & 0.122 \\
 & 2 & 0.0568 & 0.0538 & 0.149 & 0.0683 & 0.0927 & 0.152 \\
 & 4 & 0.0947 & 0.0893 & 0.265 & 0.116 & 0.149 & 0.223 \\
 & 6 & 0.132 & 0.133 & 0.375 & 0.171 & 0.204 & 0.307 \\
 & 8 & 0.168 & 0.187 & 0.487 & 0.238 & 0.258 & 0.399 \\
\bottomrule
\end{tabular}
\end{table}

\begin{table}[!tbp]
\centering
\caption{Warm median runtime in seconds for the Boolean-bound family over the grid of
irrelevant variables $n_p$ (rows) and depth $D$ (columns), with $n_I=n_J=2$. T/O marks a
timeout, and ``--'' a configuration skipped after a smaller one timed out.}
\label{tab:e2-grid}
\footnotesize
\setlength{\tabcolsep}{3.5pt}
\begin{tabular}{@{}lrrrrr@{\hspace{1.2em}}rrrrr@{}}
\toprule
 & \multicolumn{5}{c}{Refinement, depth $D$} & \multicolumn{5}{c}{Relaxation, depth $D$} \\
\cmidrule(r){2-6}\cmidrule(l){7-11}
$n_p$ & 0 & 1 & 2 & 3 & 4 & 0 & 1 & 2 & 3 & 4 \\
\midrule
1 & 0.0326 & 0.0448 & 0.0668 & 0.0842 & 0.147 & 0.0377 & 0.0419 & 0.0466 & 0.0553 & 0.0702 \\
2 & 0.0625 & 0.151 & 0.314 & 0.74 & 2.37 & 0.0498 & 0.0667 & 0.116 & 0.28 & 0.976 \\
4 & 0.289 & 2.2 & 23.4 & T/O & -- & 0.0951 & 0.462 & 8.03 & T/O & -- \\
6 & 1.38 & 70.1 & T/O & -- & -- & 0.261 & 17 & T/O & -- & -- \\
\bottomrule
\end{tabular}
\end{table}

\begin{figure}[!tbp]
    \centering
    \begin{subfigure}[t]{0.48\linewidth}
        \centering
        \includegraphics[width=\linewidth]{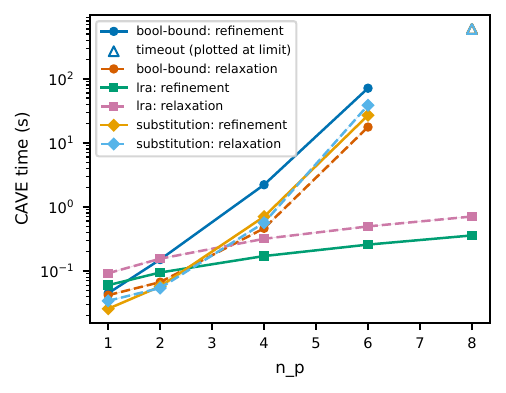}
        \caption{Irrelevant variables $n_p$.}
        \label{fig:e12-np}
    \end{subfigure}
    \hfill
    \begin{subfigure}[t]{0.48\linewidth}
        \centering
        \includegraphics[width=\linewidth]{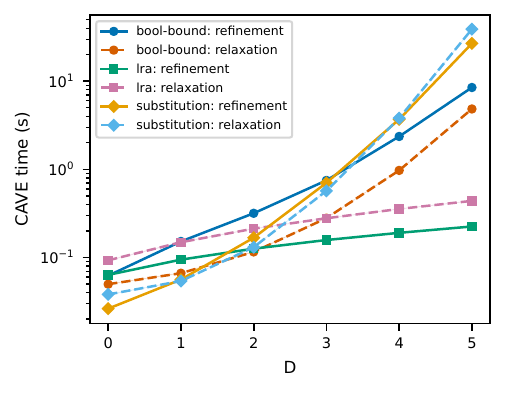}
        \caption{Depth $D$.}
        \label{fig:e12-D}
    \end{subfigure}

    \vspace{0.4em}

    \begin{subfigure}[t]{0.48\linewidth}
        \centering
        \includegraphics[width=\linewidth]{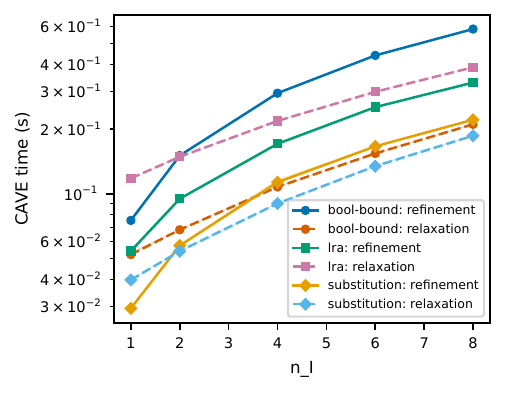}
        \caption{Target liveness cores $n_I$.}
        \label{fig:e12-nI}
    \end{subfigure}
    \hfill
    \begin{subfigure}[t]{0.48\linewidth}
        \centering
        \includegraphics[width=\linewidth]{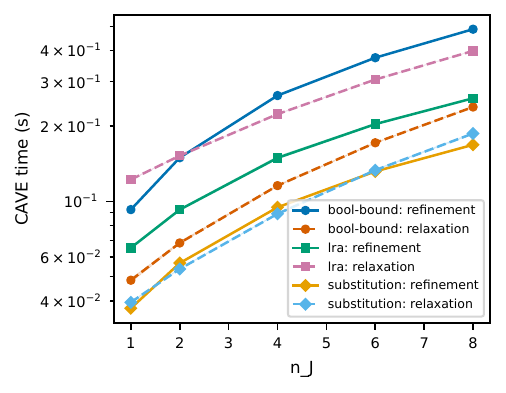}
        \caption{Context liveness cores $n_J$.}
        \label{fig:e12-nJ}
    \end{subfigure}
    \caption{Warm median CAVE time for the three families in both directions under
    one-at-a-time variation of each parameter, on logarithmic vertical axes.
    Triangles mark timeouts and are drawn at the limit.}
    \label{fig:e12-runtime}
\end{figure}

\paragraph{Irrelevant variables and depth.}
On the Boolean families the runtime grows steeply with $n_p$ and with $D$. In the
substitution family refinement takes $0.026\,\mathrm{s}$ for $n_p=1$,
$0.057\,\mathrm{s}$ for $n_p=2$, $0.70\,\mathrm{s}$ for $n_p=4$, and
$26.9\,\mathrm{s}$ for $n_p=6$, and for $n_p=8$ all ten repetitions time out.
Relaxation behaves alike ($0.034$, $0.054$, $0.565$, and $38.7\,\mathrm{s}$).
The depth sweep shows the same pattern, from $0.026\,\mathrm{s}$ at $D=0$ to
$26.9\,\mathrm{s}$ at $D=5$ for refinement and from $0.038\,\mathrm{s}$ to
$39.1\,\mathrm{s}$ for relaxation. The Boolean-bound family is more expensive
(refinement $2.2\,\mathrm{s}$ at $n_p=4$ and $71.4\,\mathrm{s}$ at $n_p=6$), and
Table~\ref{tab:e2-grid} shows where it ends. Refinement finishes up to
$(n_p,D)=(4,2)$ in $23.4\,\mathrm{s}$ and $(6,1)$ in $70.1\,\mathrm{s}$, while
$(4,3)$ and $(6,2)$ time out. Relaxation finishes $(4,2)$ in $8.0\,\mathrm{s}$ and
$(6,1)$ in $17.0\,\mathrm{s}$ and times out at the same two points.

\paragraph{The role of quantifier elimination.}
In the substitution family the QE share rises with the quantified dimension, from
$0.19$ at $n_p=1$ to $0.78$ at $n_p=6$ for refinement and from $0.18$ to above $0.99$
for relaxation, and it follows the same trend along the depth sweep
(Figure~\ref{fig:e12-qe}(a)). For the largest instances that completed, QE
therefore accounts for most of the time, and the growth is that of the QE problem.
Fitting $\log(\text{time})$ linearly against $q$ gives slopes between $0.38$ and
$0.71$ per unit of $q$, with $R^2$ between $0.91$ and $0.98$ (the fits for the
joint grid are drawn in Figure~\ref{fig:e12-qe}(b)). We report the fits only as a
description. The factor by which the runtime grows per additional quantified variable
is not constant. In the refinement sweep over $n_p$ it is $1.5$, $1.9$, and $2.5$ for
the steps from $q=2$ to $4$, $4$ to $8$, and $8$ to $12$, and it is $1.3$, $1.8$, and
$2.9$ for relaxation. We therefore make no claim about the base of an exponential.

\begin{figure}[!tbp]
    \centering
    \begin{subfigure}[t]{0.48\linewidth}
        \centering
        \includegraphics[width=\linewidth]{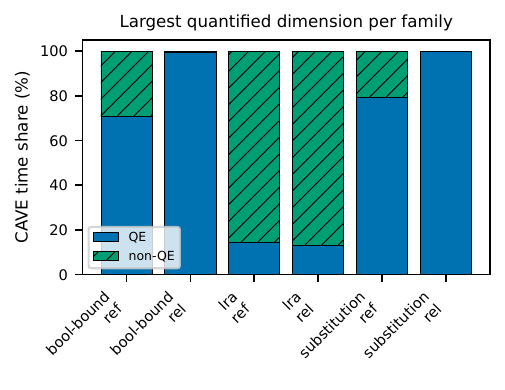}
        \caption{Share of the time spent in QE at the largest quantified dimension that
        completed, for each family and direction.}
        \label{fig:e12-qeshare}
    \end{subfigure}
    \hfill
    \begin{subfigure}[t]{0.48\linewidth}
        \centering
        \includegraphics[width=\linewidth]{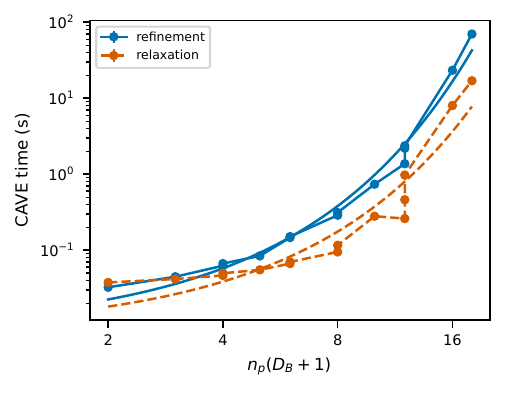}
        \caption{Warm CAVE time against $q=n_p(D_B+1)$ on the joint grid of the
        Boolean-bound family, with log-linear fits.}
        \label{fig:e12-q}
    \end{subfigure}
    \caption{Cost as a function of the quantified dimension.}
    \label{fig:e12-qe}
\end{figure}

\paragraph{Liveness cores.}
The runtime grows linearly with $n_I$ and $n_J$ and stays small. In the substitution
family refinement takes $0.029\,\mathrm{s}$ for $n_I=1$ and $0.221\,\mathrm{s}$ for
$n_I=8$, and relaxation takes $0.040\,\mathrm{s}$ and $0.186\,\mathrm{s}$. The
refinement time is proportional to the number $n_I(1+n_J)$ of liveness candidates
at $9.2$ to $9.8\,\mathrm{ms}$ per candidate (linear fits in $n_I$ and in $n_J$ have
$R^2=0.9999$), and the relaxation time grows by $0.021\,\mathrm{s}$ per added
liveness core of either kind. The Boolean-bound and linear-real families show the same
linear trend, for instance $0.075$ to $0.584\,\mathrm{s}$ and $0.054$ to
$0.329\,\mathrm{s}$ for refinement when $n_I$ goes from $1$ to $8$.

\paragraph{Linear real arithmetic.}
The linear-real family is benign over the tested range. The largest runtime is
$0.708\,\mathrm{s}$, at $n_p=8$ for relaxation, and no configuration times out.
Refinement grows approximately linearly from $0.060\,\mathrm{s}$ at $n_p=1$ to
$0.358\,\mathrm{s}$ at $n_p=8$ and from $0.064\,\mathrm{s}$ at $D=0$ to
$0.225\,\mathrm{s}$ at $D=5$, and QE accounts for $10$ to $14\%$ of the time.
The steep growth of the Boolean families is thus a property of QE over Boolean
variables, which Z3 handles by case splitting, and not of the temporal construction.

\subsection{Comparison with Finite-Horizon Unrolling (E3)}
\label{app:eval-unrolling}

\paragraph{Setting.}
The baseline eliminates the same variables without CAVE. It instantiates the initial
and invariant cores of the supervisor $\mathcal C_S$ and the vehicle $\mathcal C_V$ at
$T$ consecutive time steps, which introduces variables at $T+1$ time points, and
it calls Z3 QE twice on the resulting formulas, once for the universally quantified
implication from the assumptions and guarantees of the supervisor to the assumptions of
the vehicle (the refinement problem) and once for the existentially quantified
conjunction of the two guarantees (the relaxation problem). It is not an equivalent
computation. It drops all liveness cores and
constrains only $T$ steps, so its result is not a GR(1) contract, and it favors the
baseline because no liveness is processed. We use it to show how a monolithic QE call
scales, not to claim a semantic equivalence. Two variants differ in the guarantee
projection. The \emph{requested} variant eliminates all five irrelevant variables
$\{u,\MOV,\DCL,\STP,v\}$ from the guarantee. The \emph{interface-matched} variant
eliminates only $\{u,\MOV,\DCL,\STP\}$, because the production composition keeps
$v$ on its interface, and it is the primary comparison. The reference is the
production computation of $\mathcal C_S\parallel\mathcal C_V$ with CAVE, which does
not depend on $T$ and processes the liveness cores.

\begin{table}[!tbp]
\centering
\caption{Warm median runtime and result size of the unrolled baseline for horizon $T$
and of CAVE for the AEB composition $\mathcal C_S\parallel\mathcal C_V$. Size is the
number of Z3 DAG nodes. T/O is a timeout and ``--'' a skipped configuration.}
\label{tab:e3}
\small
\setlength{\tabcolsep}{6pt}
\begin{tabular}{@{}lrrrr@{}}
\toprule
 & \multicolumn{2}{c}{Interface-matched} & \multicolumn{2}{c}{Requested} \\
\cmidrule(lr){2-3}\cmidrule(lr){4-5}
Horizon $T$ & Time [s] & Size & Time [s] & Size \\
\midrule
1 & 0.00424 & 30 & 0.00411 & 16 \\
2 & 0.0305 & 130 & 0.0127 & 79 \\
3 & 0.371 & 217 & 0.0513 & 116 \\
4 & 15.1 & 313 & 0.241 & 170 \\
6 & T/O & -- & 10.1 & 649 \\
8 & -- & -- & T/O & -- \\
12 & -- & -- & -- & -- \\
16 & -- & -- & -- & -- \\
\midrule
\multicolumn{5}{@{}l}{CAVE (independent of $T$, liveness included)} \\
true & 0.308 & 138 & 5 \\
false & 0.124 & 140 & 17 \\
\bottomrule
\end{tabular}
\end{table}

\begin{figure}[!tbp]
    \centering
    \begin{subfigure}[t]{0.48\linewidth}
        \centering
        \includegraphics[width=\linewidth]{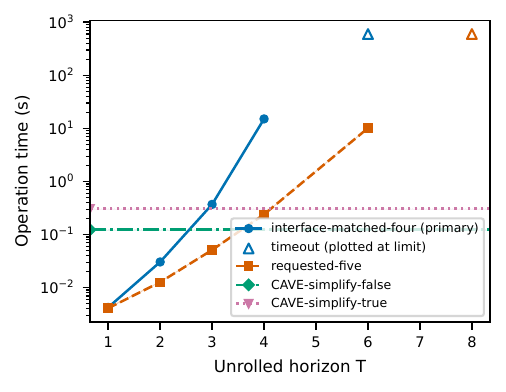}
        \caption{Runtime against the horizon $T$.}
        \label{fig:e3-time}
    \end{subfigure}
    \hfill
    \begin{subfigure}[t]{0.48\linewidth}
        \centering
        \includegraphics[width=\linewidth]{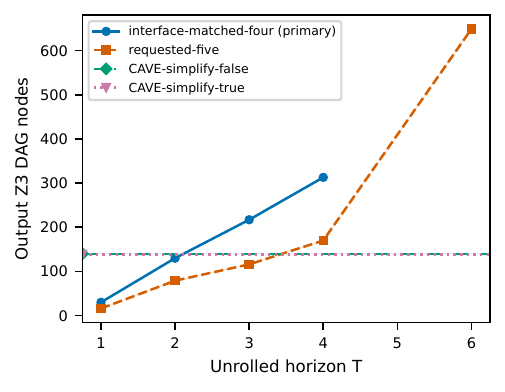}
        \caption{Result size against the horizon $T$.}
        \label{fig:e3-size}
    \end{subfigure}
    \caption{Unrolled baseline for both guarantee projections against the CAVE
    references with simplification on and off. Triangles mark timeouts.}
    \label{fig:e3}
\end{figure}

\paragraph{Results.}
Table~\ref{tab:e3} and Figure~\ref{fig:e3} show the following. CAVE takes
$0.308\,\mathrm{s}$ with the default simplification and $0.124\,\mathrm{s}$ without it,
independently of $T$. The interface-matched baseline takes $0.0042$, $0.031$,
$0.37$, and $15.1\,\mathrm{s}$ for $T=1,2,3,4$ and times out at $T=6$. It is faster
than CAVE for $T\le2$ and slower from $T=3$, both against the default
($0.37$ against $0.31\,\mathrm{s}$) and against the variant without simplification.
The requested variant is cheaper, with $0.24\,\mathrm{s}$ at $T=4$ and
$10.1\,\mathrm{s}$ at $T=6$, and times out at $T=8$. It is faster than the default
CAVE configuration up to $T=4$ and slower at $T=6$ ($T=5$ was not measured), and it is
slower than CAVE without simplification from $T=4$. QE accounts for $92$ to
$100\%$ of the baseline time. The size of the unrolled result grows with the horizon
(from $30$ to $313$ DAG nodes for $T=1,\ldots,4$ in the interface-matched variant
and to $649$ nodes at $T=6$ in the requested one), whereas the CAVE result has
$138$ nodes ($5$ cores) with simplification and $140$ nodes ($17$ cores) without.
The unrolled results cover only the first $T$ steps and no liveness, so they are not a
substitute for the CAVE result even where they are cheaper.

\subsection{Soundness and Precision on Random Instances (E4)}
\label{app:eval-precision}

\paragraph{Setting.}
Timing runs cannot establish soundness, so E4 compares the implementation with an
independent explicit-state oracle. An instance consists of a context and a target over
two relevant Boolean variables $q_1,q_2$ and $n_{\mathrm{int}}\in\{1,2\}$ irrelevant Boolean variables.
The context and the target each have an invariant core, an initial core with probability
$1/2$, and zero to two liveness cores. Every core is a random Boolean formula of nesting
depth at most three over $\neg,\wedge,\vee,\rightarrow,\leftrightarrow$ and the constants
$\True$ and $\False$, in which a variable may occur directly under one $\Next$. There are two sources of instances.
The \emph{uniform} source draws contexts without restriction. The \emph{nonblocking}
source keeps only contexts that pass a quick check for nonblocking behavior, explained
below. Each source and each value of $n_{\mathrm{int}}$ contributes $2000$ instances, which gives
$8000$ instances, and each is classified for refinement and for relaxation.

The oracle works on \emph{lasso words}, which are ultimately periodic words $u\,v^{\omega}$
over the four valuations of $q_1,q_2$ with $|u|\le2$ and $1\le|v|\le3$. There are
$1764$ such words. For one word it decides exactly, by exploring the product of
the word with all valuations of the irrelevant variables and analyzing its strongly connected components
under the liveness requirements, whether $\forall\bar p.(\Gamma\Rightarrow\varphi)$
holds (refinement) or $\exists\bar p.(\Gamma\wedge\varphi)$ holds (relaxation). We
compare the CAVE result with this answer on all $1764$ words and classify the
instance as follows.
\begin{itemize}
  \item \emph{Unsound} if for some word the refinement result holds while the exact
  answer fails, or the exact relaxation holds while the relaxation result fails.
  \item \emph{Exact} if the result agrees with the oracle on every word.
  \item \emph{Conservative} otherwise, that is, sound on all words but not exact.
\end{itemize}
A context is \emph{nonblocking} if for every word some run of the irrelevant variables satisfies it. On
a word without such a run the refinement answer is vacuously true. An instance is
\emph{trivial} if the CAVE result and the exact answer are the same constant on all
words, and \emph{nontrivial} otherwise. The nonblocking source samples contexts that are
feasible on a subset of the words ($|u|\le1$ and $|v|\le2$), and one of its $4000$
instances turned out to be blocking on the full set of words. All draws are
deterministic functions of a fixed seed.
Since only $1764$ words are examined, ``exact'' means exact on this suite, and the
absence of unsound observations is evidence and not a proof.

\begin{table}[!tbp]
\centering
\caption{Classification of CAVE results on $8000$ random instances per direction.
Here $n$ is the number of instances in the subset. The column ``Exact [\%]'' gives the
percentage with its Wilson $95\%$ interval. The unsound column gives the count and, in
parentheses, the Wilson upper bound in percent on the unsound fraction. Instances
that are neither exact nor unsound are conservative.}
\label{tab:e4}
\footnotesize
\setlength{\tabcolsep}{4pt}
\begin{tabular}{@{}lrrrrr@{}}
\toprule
Subset & $n$ & Exact & Exact [\%] & Conserv. & Unsound \\
\midrule
\multicolumn{6}{@{}l}{\emph{Refinement}} \\
All instances & 8000 & 5549 & 69.4 [68.3, 70.4] & 2451 & 0 ($\le$0.05) \\
Uniform source & 4000 & 1681 & 42.0 [40.5, 43.6] & 2319 & 0 ($\le$0.10) \\
Nonblocking source & 4000 & 3868 & 96.7 [96.1, 97.2] & 132 & 0 ($\le$0.10) \\
Context nonblocking & 4890 & 4737 & 96.9 [96.3, 97.3] & 153 & 0 ($\le$0.08) \\
Context blocking & 3110 & 812 & 26.1 [24.6, 27.7] & 2298 & 0 ($\le$0.12) \\
Nontrivial & 3788 & 1337 & 35.3 [33.8, 36.8] & 2451 & 0 ($\le$0.10) \\
Nontrivial, nonblocking & 1482 & 1329 & 89.7 [88.0, 91.1] & 153 & 0 ($\le$0.26) \\
Nontrivial, blocking & 2306 & 8 & 0.3 [0.2, 0.7] & 2298 & 0 ($\le$0.17) \\
\midrule
\multicolumn{6}{@{}l}{\emph{Relaxation}} \\
All instances & 8000 & 7249 & 90.6 [90.0, 91.2] & 751 & 0 ($\le$0.05) \\
Uniform source & 4000 & 3735 & 93.4 [92.6, 94.1] & 265 & 0 ($\le$0.10) \\
Nonblocking source & 4000 & 3514 & 87.8 [86.8, 88.8] & 486 & 0 ($\le$0.10) \\
Context nonblocking & 4890 & 4317 & 88.3 [87.4, 89.2] & 573 & 0 ($\le$0.08) \\
Context blocking & 3110 & 2932 & 94.3 [93.4, 95.0] & 178 & 0 ($\le$0.12) \\
Nontrivial & 3510 & 2759 & 78.6 [77.2, 79.9] & 751 & 0 ($\le$0.11) \\
Nontrivial, nonblocking & 2430 & 1857 & 76.4 [74.7, 78.1] & 573 & 0 ($\le$0.16) \\
Nontrivial, blocking & 1080 & 902 & 83.5 [81.2, 85.6] & 178 & 0 ($\le$0.35) \\
\bottomrule
\end{tabular}
\end{table}

\begin{figure}[!tbp]
    \centering
    \begin{subfigure}[t]{0.48\linewidth}
        \centering
        \includegraphics[width=\linewidth]{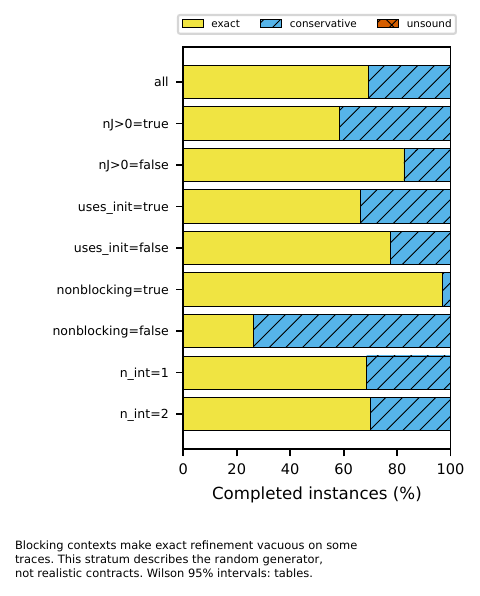}
        \caption{Refinement, all instances.}
        \label{fig:e4-ref-all}
    \end{subfigure}
    \hfill
    \begin{subfigure}[t]{0.48\linewidth}
        \centering
        \includegraphics[width=\linewidth]{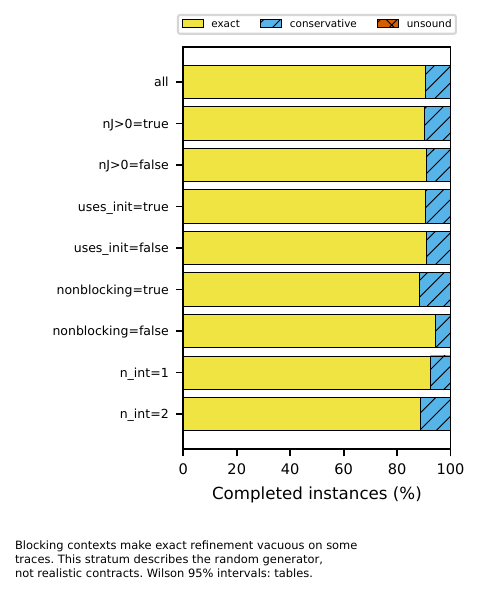}
        \caption{Relaxation, all instances.}
        \label{fig:e4-rel-all}
    \end{subfigure}

    \vspace{0.4em}

    \begin{subfigure}[t]{0.48\linewidth}
        \centering
        \includegraphics[width=\linewidth]{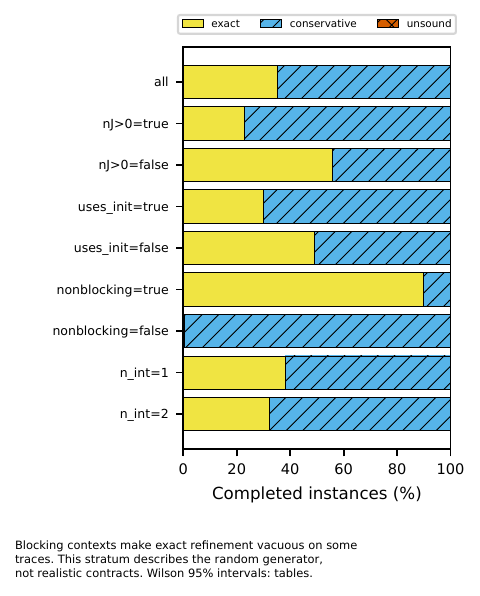}
        \caption{Refinement, nontrivial instances.}
        \label{fig:e4-ref-nt}
    \end{subfigure}
    \hfill
    \begin{subfigure}[t]{0.48\linewidth}
        \centering
        \includegraphics[width=\linewidth]{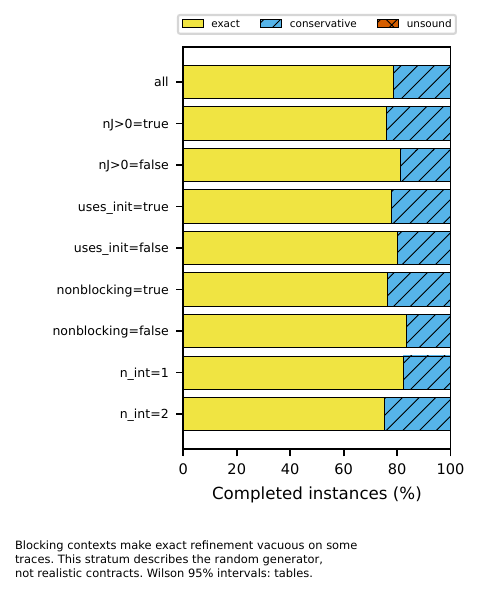}
        \caption{Relaxation, nontrivial instances.}
        \label{fig:e4-rel-nt}
    \end{subfigure}
    \caption{Share of exact, conservative, and unsound results per subset of the
    instances, with Wilson $95\%$ intervals.}
    \label{fig:e4}
\end{figure}

\paragraph{Results.}
No unsound result was observed in any of the $16000$ classifications, and none of the
$1764$ words per instance produced a violation. The Wilson upper bound on the unsound
fraction is $0.05\%$ for each direction. Refinement was exact on $5549$ of $8000$
instances ($69.4\%$) and conservative on the rest. The precision depends strongly on the
context. It is $96.9\%$ for nonblocking contexts and $26.1\%$ for blocking contexts, and
among the nontrivial instances it is $89.7\%$ for nonblocking contexts and $0.3\%$
($8$ of $2306$) for blocking ones. Blocking contexts make the exact answer vacuously
true on some words, and the CAVE result does not exploit this vacuity. Relaxation was
exact on $7249$ instances ($90.6\%$) and on $78.6\%$ of the nontrivial ones, and it does not improve with nonblocking contexts
($88.3\%$ against $94.3\%$ for blocking ones). When a result is conservative, it differs from
the exact answer on a median fraction of $0.49$ of the words for refinement and $0.17$
for relaxation. The two values of $n_{\mathrm{int}}$ give similar numbers
($68.6\%$ and $70.2\%$ exact for refinement, $92.4\%$ and $88.8\%$ for relaxation).
The instances are random and small and do not represent engineering specifications, so
the percentages show where conservatism arises and not how often it occurs in practice.

\subsection{Chains of Compositions (E5)}
\label{app:eval-chains}

\paragraph{Setting.}
Stage $i$ is a contract with the real input $y_{i-1}$ and the outputs $y_i$ (real) and
$m_i$ (Boolean). Its assumptions are $0\le y_{i-1}\le100$ initially and invariantly and
$\Always\Eventually(y_{i-1}\le50)$. Its guarantees are $y_i=y_{i-1}$ initially, the
invariant $y_{i-1}-\tfrac12\le\Next y_i\le y_{i-1}+\tfrac12\wedge(m_i\rightarrow\Next y_i\ge1)$,
and $\Always\Eventually(y_i\le50)$ and $\Always\Eventually m_i$. Composing consecutive
stages eliminates the connecting variable. We measure a \emph{left fold}
$(((\mathcal C_1\parallel\mathcal C_2)\parallel\mathcal C_3)\cdots)$ for $N\le8$ and a
\emph{balanced} composition of contiguous halves for $N\in\{4,8\}$. The time is the sum
of the durations of the composition steps. As a diagnostic, a bounded check
confirms that the safety part of the result is satisfiable and that the composed
assumption restricts $y_0$ to $[\tfrac{N-1}{2},\,100-\tfrac{N-1}{2}]$, which is the range
in which a worst-case drift of $\tfrac12$ per stage keeps every intermediate signal in
$[0,100]$. Both checks succeed for every $N$ (they are not realizability checks).

\begin{table}[!tbp]
\centering
\caption{Warm median runtime of the chain composition, the part spent inside CAVE,
the QE share, and the size of the composed contract.}
\label{tab:e5}
\small
\setlength{\tabcolsep}{6pt}
\begin{tabular}{@{}lrrrrrr@{}}
\toprule
Order & Stages $N$ & Time [s] & CAVE [s] & QE share [\%] & Cores & DAG nodes \\
\midrule
left-fold & 2 & 0.143 & 0.0794 & 7.8 & 7 & 36 \\
left-fold & 3 & 0.367 & 0.21 & 7.3 & 8 & 46 \\
left-fold & 4 & 0.652 & 0.367 & 6.7 & 9 & 56 \\
left-fold & 5 & 1.03 & 0.571 & 6.2 & 10 & 66 \\
left-fold & 6 & 1.49 & 0.808 & 5.8 & 11 & 76 \\
left-fold & 8 & 2.77 & 1.46 & 5.0 & 13 & 96 \\
\midrule
balanced & 4 & 0.572 & 0.321 & 6.9 & 9 & 56 \\
balanced & 8 & 1.79 & 0.961 & 5.8 & 13 & 106 \\
\bottomrule
\end{tabular}
\end{table}

\begin{figure}[!tbp]
    \centering
    \begin{subfigure}[t]{0.48\linewidth}
        \centering
        \includegraphics[width=\linewidth]{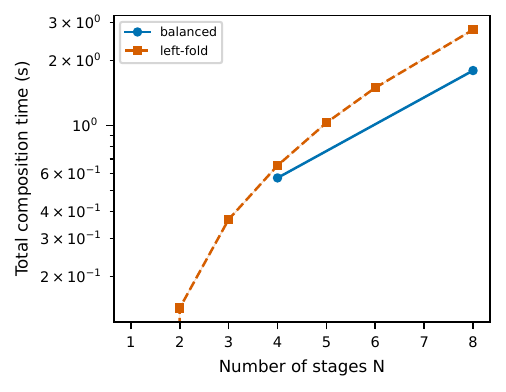}
        \caption{Total composition time against $N$.}
        \label{fig:e5-seconds}
    \end{subfigure}
    \hfill
    \begin{subfigure}[t]{0.48\linewidth}
        \centering
        \includegraphics[width=\linewidth]{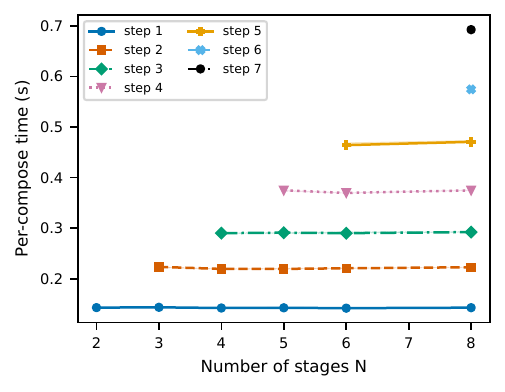}
        \caption{Time of each composition step in the left folds.}
        \label{fig:e5-steps}
    \end{subfigure}
    \caption{Chains of compositions.}
    \label{fig:e5}
\end{figure}

\paragraph{Results.}
Eight stages are composed in $2.77\,\mathrm{s}$ by a left fold, and
$N=2,3,4,5,6$ take $0.14$, $0.37$, $0.65$, $1.03$, and $1.49\,\mathrm{s}$. The cost of
one step depends on the size of the accumulated contract and not on $N$. In the left
fold of eight stages the seven steps take $0.14$, $0.22$, $0.29$, $0.38$, $0.47$,
$0.57$, and $0.69\,\mathrm{s}$, and from two stages on the contract grows by one core and by $10$ DAG
nodes per stage. The total time is therefore superlinear in $N$ but moderate. A balanced
composition reduces the time to $0.57\,\mathrm{s}$ for $N=4$ ($12\%$ less) and
$1.79\,\mathrm{s}$ for $N=8$ ($35\%$ less), because it composes smaller contracts,
at the price of a somewhat larger result for $N=8$ ($106$ against $96$ DAG nodes at the
same $13$ cores). CAVE accounts for $53$ to $57\%$ of the time and QE only for $5.0$ to
$7.8\%$, so the remaining time is spent in the contract operations around the
elimination, which we did not break down further.

\subsection{Comparison with Exact Automata-Based Elimination (E6 and E6b)}
\label{app:eval-automata}

\paragraph{The automata pipeline.}
Temporal quantification can be computed exactly with automata~\cite{qptl}. For
refinement the exact result is $\forall\bar p.(\Gamma\Rightarrow\varphi)\equiv
\neg\exists\bar p.(\Gamma\wedge\neg\varphi)$. Our baseline translates
$\Gamma\wedge\neg\varphi$ to a nondeterministic B\"uchi automaton with Spot, projects
the irrelevant variables out of the edge labels, which realizes $\exists\bar p$, and
complements the result, which realizes the negation. The reported time is that of
these three steps. Afterwards, outside the timed region and only for small instances,
we compare the CAVE result with the exact automaton. The CAVE result is \emph{sound}
if the product of its automaton with the projected automaton has an empty language, that is,
if it implies the exact result, and \emph{exact} if the languages are equal.
The output of the pipeline is an automaton. It is not a GR(1) formula with initial,
invariant, and liveness parts, so it could not be used as a contract in Pacti without a
further translation, which we do not attempt. Spot is the only automata-based tool in
the comparison, and we use one pipeline without tuning. Other complementation
constructions or deterministic automata could change the constants.

\paragraph{E6, small instances.}
E6 runs the Boolean-bound family with $n_I=n_J=1$, $n_p\in\{1,2,3\}$, and $D\in\{0,1\}$.
Table~\ref{tab:e6} shows that Spot is faster on every one of these small
instances, by a factor of $4.4$ to $16.8$ (the automata have at most $16$ states, and the
complements at most $9$). The CAVE result is sound in all $60$ runs and is not exact in
any of them.

\begin{table}[!tbp]
\centering
\caption{E6, warm median runtime of CAVE and of the Spot pipeline on the
Boolean-bound family with $n_I=n_J=1$. States are those of the projected automaton
and of its complement. Sound and exact refer to the CAVE result.}
\label{tab:e6}
\small
\setlength{\tabcolsep}{6pt}
\begin{tabular}{@{}rrrrrrll@{}}
\toprule
$n_p$ & $D$ & CAVE [s] & Spot [s] & States & Compl.\ states & Sound & Exact \\
\midrule
1 & 0 & 0.0121 & 0.00121 & 4 & 3 & yes & no \\
1 & 1 & 0.0165 & 0.00146 & 4 & 3 & yes & no \\
2 & 0 & 0.0222 & 0.00207 & 8 & 6 & yes & no \\
2 & 1 & 0.0469 & 0.00278 & 8 & 6 & yes & no \\
3 & 0 & 0.0428 & 0.00975 & 16 & 9 & yes & no \\
3 & 1 & 0.168 & 0.0128 & 16 & 9 & yes & no \\
\bottomrule
\end{tabular}
\end{table}

\paragraph{E6b, a stress family for the automata pipeline.}
The first comparison is tiny and favors Spot. We therefore designed E6b to probe the regime in
which complementation is expensive, and we disclose that this family was
designed after we had seen the E6 results. It is the Boolean-bound family with
base parameters $n_p=2$, $D=1$, $n_I=n_J=1$, and five sweeps that vary
$n_I$, $n_J$, $k$ (where $n_I=n_J=k$), $n_p\in\{1,\ldots,8\}$, and $D\in\{0,\ldots,6\}$ one
at a time. The liveness counts take the values $1,2,3,4,5,6,8$ (and $k\in\{1,2,3,4,6,8\}$).
Each point is measured five times with a limit of $300\,\mathrm{s}$ for the worker, and a
point is skipped after three timeouts. Ten points with small automata are additionally checked
for soundness and exactness. The run produced $360$ rows, none of which is an error.

\begin{table}[!tbp]
\centering
\caption{E6b, warm median runtime of CAVE and of the Spot pipeline when the number of
liveness conjuncts grows (base $n_p=2$, $D=1$). Compl.\ states and sets are those of the
complemented automaton. T/O is a timeout of the $300\,\mathrm{s}$ worker, and ``--'' a
skipped point.}
\label{tab:e6b-liveness}
\small
\setlength{\tabcolsep}{6pt}
\begin{tabular}{@{}llrrrr@{}}
\toprule
Parameter & Value & CAVE [s] & Spot [s] & Compl.\ states & Compl.\ sets \\
\midrule
$n_I$ & 1 & 0.0458 & 0.0027 & 6 & 2 \\
 & 2 & 0.0943 & 0.00645 & 49 & 4 \\
 & 3 & 0.135 & 0.0375 & 508 & 6 \\
 & 4 & 0.188 & 1.12 & 6754 & 8 \\
 & 5 & 0.231 & 62.3 & 122254 & 10 \\
 & 6 & 0.283 & T/O & -- & -- \\
 & 8 & 0.358 & -- & -- & -- \\
\midrule
$n_J$ & 1 & 0.0471 & 0.00275 & 6 & 2 \\
 & 2 & 0.076 & 0.00589 & 19 & 2 \\
 & 3 & 0.107 & 0.0182 & 150 & 2 \\
 & 4 & 0.131 & 0.356 & 1891 & 2 \\
 & 5 & 0.175 & 14.7 & 31066 & 2 \\
 & 6 & 0.195 & T/O & -- & -- \\
 & 8 & 0.248 & -- & -- & -- \\
\midrule
$k$ & 1 & 0.0471 & 0.00267 & 6 & 2 \\
 & 2 & 0.158 & 0.0489 & 513 & 4 \\
 & 3 & 0.32 & T/O & -- & -- \\
 & 4 & 0.542 & -- & -- & -- \\
 & 6 & 1.14 & -- & -- & -- \\
 & 8 & 2 & -- & -- & -- \\
\bottomrule
\end{tabular}
\end{table}

\begin{table}[!tbp]
\centering
\caption{E6b, warm median runtime of CAVE and of the Spot pipeline when the number of
irrelevant variables or the depth grows (base $n_I=n_J=1$). States are those of the
projected automaton before complementation, and compl.\ states those of its complement.}
\label{tab:e6b-hidden}
\small
\setlength{\tabcolsep}{6pt}
\begin{tabular}{@{}llrrrr@{}}
\toprule
Parameter & Value & CAVE [s] & Spot [s] & States & Compl.\ states \\
\midrule
$n_p$ & 1 & 0.017 & 0.00122 & 4 & 3 \\
 & 2 & 0.047 & 0.00269 & 8 & 6 \\
 & 3 & 0.176 & 0.0122 & 16 & 9 \\
 & 4 & 0.706 & 0.0919 & 32 & 17 \\
 & 5 & 2.52 & 0.77 & 64 & 17 \\
 & 6 & 20.3 & 0.9 & 128 & 17 \\
 & 7 & 137 & 7.84 & 256 & 17 \\
 & 8 & T/O & T/O & -- & -- \\
\midrule
$D$ & 0 & 0.0224 & 0.002 & 8 & 6 \\
 & 1 & 0.0471 & 0.00266 & 8 & 6 \\
 & 2 & 0.103 & 0.0036 & 8 & 6 \\
 & 3 & 0.237 & 0.00477 & 8 & 6 \\
 & 4 & 0.742 & 0.00593 & 8 & 6 \\
 & 5 & 2.68 & 0.0072 & 8 & 6 \\
 & 6 & 11.8 & 0.00875 & 8 & 6 \\
\bottomrule
\end{tabular}
\end{table}

\begin{figure}[!tbp]
    \centering
    \begin{subfigure}[t]{0.48\linewidth}
        \centering
        \includegraphics[width=\linewidth]{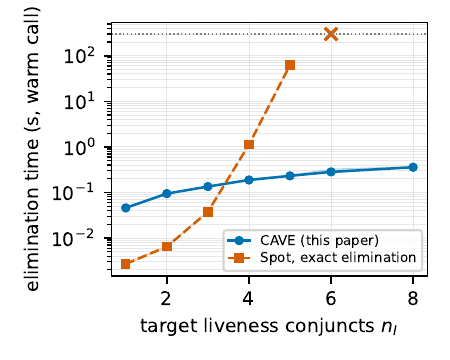}
        \caption{Target liveness $n_I$.}
        \label{fig:e6b-nI}
    \end{subfigure}
    \hfill
    \begin{subfigure}[t]{0.48\linewidth}
        \centering
        \includegraphics[width=\linewidth]{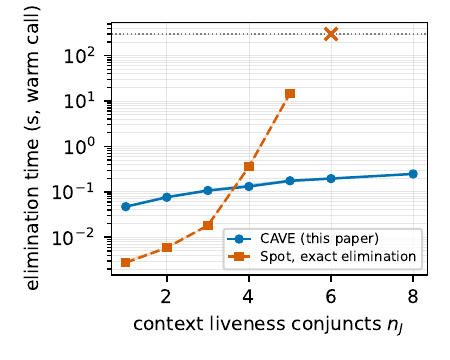}
        \caption{Context liveness $n_J$.}
        \label{fig:e6b-nJ}
    \end{subfigure}

    \vspace{0.4em}

    \begin{subfigure}[t]{0.48\linewidth}
        \centering
        \includegraphics[width=\linewidth]{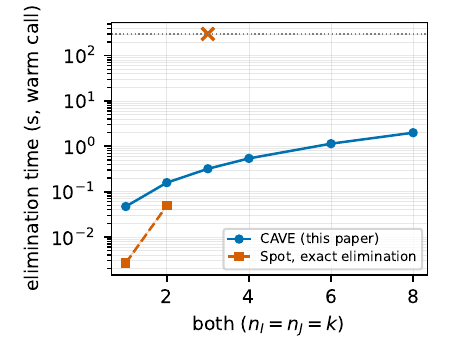}
        \caption{Both, $n_I=n_J=k$.}
        \label{fig:e6b-k}
    \end{subfigure}
    \hfill
    \begin{subfigure}[t]{0.48\linewidth}
        \centering
        \includegraphics[width=\linewidth]{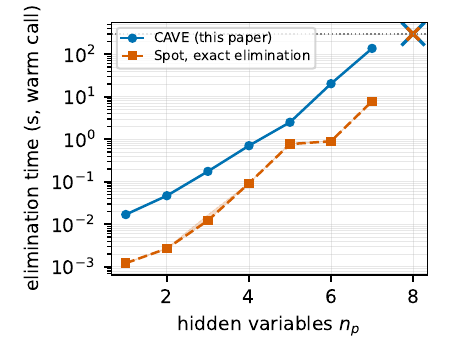}
        \caption{Irrelevant variables $n_p$.}
        \label{fig:e6b-np}
    \end{subfigure}

    \vspace{0.4em}

    \begin{subfigure}[t]{0.48\linewidth}
        \centering
        \includegraphics[width=\linewidth]{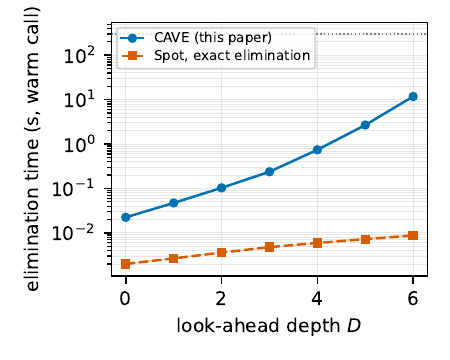}
        \caption{Depth $D$.}
        \label{fig:e6b-D}
    \end{subfigure}
    \caption{E6b, warm median time of CAVE and of exact elimination with Spot, with
    the minimum to maximum range over the repetitions. Crosses mark points at which the first
    repetitions reached the $300\,\mathrm{s}$ limit.}
    \label{fig:e6b}
\end{figure}

\paragraph{E6b results for liveness conjuncts.}
The cost of the pipeline is governed by complementation (Table~\ref{tab:e6b-liveness}).
Translation and projection stay below $0.1\,\mathrm{s}$ for $n_I\le5$, but the complement
has $6$, $49$, $508$, $6754$, and $122254$ states for $n_I=1,\ldots,5$, so the number of
states grows by a factor that rises from $8$ to $18$ per added conjunct, and its number of
acceptance sets is $2n_I$. At $n_I=5$ complementation takes almost all of the $62.3\,\mathrm{s}$,
and $n_I=6$ exceeds the limit. CAVE grows linearly from $0.046$ to $0.358\,\mathrm{s}$ over
$n_I=1,\ldots,8$ with $n_I$ cores and $52$ to $300$ DAG nodes in the result. The curves cross
between $n_I=3$ ($0.038\,\mathrm{s}$ for Spot against $0.135\,\mathrm{s}$ for CAVE) and
$n_I=4$ ($1.12$ against $0.19\,\mathrm{s}$), and at $n_I=5$ CAVE is $269$ times faster.
Context liveness behaves alike. The complement grows from $6$ to $31066$ states while its
acceptance sets stay at two, Spot needs $0.356\,\mathrm{s}$ at $n_J=4$ and $14.7\,\mathrm{s}$
at $n_J=5$ and times out at $6$, whereas CAVE needs $0.248\,\mathrm{s}$ at $n_J=8$. When both
counts grow together, Spot is faster at $k=2$ ($0.049$ against $0.158\,\mathrm{s}$) and
times out at $k=3$, where CAVE needs $0.32\,\mathrm{s}$, and CAVE reaches $k=8$ in
$2.0\,\mathrm{s}$.

\paragraph{E6b results for irrelevant variables and depth.}
The picture reverses (Table~\ref{tab:e6b-hidden}). Spot handles $n_p=7$ irrelevant
variables in $7.8\,\mathrm{s}$, mostly in the translation ($7.1\,\mathrm{s}$) and with a
complement that stays at $17$ states, while CAVE needs $137\,\mathrm{s}$, which is $17.5$
times as long, and $22.6$ times as long at $n_p=6$ ($20.3$ against $0.90\,\mathrm{s}$).
The automaton has $2^{n_p+1}$ states, and for $n_p=8$ both approaches exceed the
limit. In the three repetitions of Spot at $n_p=8$ the translation took
$44$ to $52\,\mathrm{s}$, and in one of them the cold call finished after $295\,\mathrm{s}$, of which
$248\,\mathrm{s}$ were complementation, although complementation took $0.14\,\mathrm{s}$ at
$n_p=7$. We do not explain this jump. The CAVE point at $n_p=7$ is close to the limit as
well, since the worker needed $268$ to $279\,\mathrm{s}$ of the $300\,\mathrm{s}$ for its two calls. In CAVE the
result grows with $n_p$ (from $52$ to $3132$ DAG nodes between $n_p=2$ and $6$) and
the QE share rises from $0.26$ to $0.70$. Along the depth sweep CAVE grows from
$0.022\,\mathrm{s}$ at $D=0$ to $11.8\,\mathrm{s}$ at $D=6$, while the automaton is
unchanged ($8$ states) and Spot needs between $0.002$ and $0.009\,\mathrm{s}$.
Since the shifted invariants are redundant (Section~\ref{app:eval-families}), the
automaton is unaffected, whereas the cost of CAVE grows with the window.

\paragraph{Soundness and exactness of the CAVE result.}
Table~\ref{tab:e6b-checks} lists the ten checked points, which include
$n_I$ and $n_J$ up to $4$ and $n_p$ up to $6$. The CAVE result is sound at every point and
equivalent to the exact result at none of them. Together with E6 the comparison thus covers
$15$ distinct instances, and on each of them the CAVE result implies the exact result
without being equivalent to it. We did not determine whether the exact results can be expressed as GR(1)
formulas, which is why we do not attribute the gap to the restriction to GR(1). The points with the
largest Spot cost could not be checked.

\begin{table}[!tbp]
\centering
\caption{E6b, soundness and exactness checks of the CAVE result against the exact
Spot result. States are those of the projected automaton and of its complement.}
\label{tab:e6b-checks}
\small
\setlength{\tabcolsep}{4pt}
\begin{tabular}{@{}rrrrrrrrll@{}}
\toprule
$n_I$ & $n_J$ & $n_p$ & $D$ & CAVE [s] & Spot [s] & States & Compl.\ states & Sound & Exact \\
\midrule
1 & 1 & 2 & 1 & 0.0472 & 0.00313 & 8 & 6 & yes & no \\
2 & 1 & 2 & 1 & 0.0948 & 0.00683 & 12 & 49 & yes & no \\
1 & 2 & 2 & 1 & 0.0754 & 0.00603 & 12 & 19 & yes & no \\
2 & 2 & 2 & 1 & 0.154 & 0.0516 & 20 & 513 & yes & no \\
3 & 1 & 2 & 1 & 0.139 & 0.0417 & 16 & 508 & yes & no \\
1 & 3 & 2 & 1 & 0.103 & 0.0184 & 16 & 150 & yes & no \\
4 & 1 & 2 & 1 & 0.188 & 1.14 & 20 & 6802 & yes & no \\
1 & 4 & 2 & 1 & 0.13 & 0.347 & 20 & 1891 & yes & no \\
1 & 1 & 4 & 1 & 0.703 & 0.099 & 32 & 17 & yes & no \\
1 & 1 & 6 & 1 & 19.6 & 0.907 & 128 & 17 & yes & no \\
\bottomrule
\end{tabular}
\end{table}

\paragraph{Conclusion of the comparison.}
Neither method dominates. The automata pipeline returns the exact result and is
faster when many variables are irrelevant or the window is deep. Its cost is
dominated by complementation, which grows by orders of magnitude with each added
liveness conjunct, and CAVE is faster in that regime. The family of E6b is adversarial
for the automata pipeline by construction and should not be read as a typical workload.

\subsection{Implementation Ablations (E7)}
\label{app:eval-ablations}

E7 changes one implementation choice at a time in refinement CAVE and measures the
effect on the AEB composition $\mathcal C_S\parallel\mathcal C_V$ and on the Boolean-bound
instance with $n_p=n_I=n_J=2$ and $D=1$. The variants are the \texttt{qe} and
\texttt{qe\_rec} QE tactics of Z3 in place of the default \texttt{qe2}, no simplification of the
result, and construction of the synthetic-branch candidates only, which omits all
support-branch candidates. We check whether the variant returns an equivalent result by comparing it with the
production result bucket by bucket with Z3 and a limit of $500\,\mathrm{ms}$ per query. This is a
sufficient condition, so ``inconclusive'' means only that no certificate was found and
not that the results differ. On the AEB composition it is also inconclusive when the
production result is compared with itself, so inconclusive carries no information
there.

\begin{table}[!tbp]
\centering
\caption{E7, warm median runtime, result size, and equivalence check against the
production result for each variant of refinement CAVE.}
\label{tab:e7}
\footnotesize
\setlength{\tabcolsep}{4pt}
\begin{tabular}{@{}llrrrl@{}}
\toprule
Instance & Variant & Time [s] & Cores & DAG nodes & Equivalence \\
\midrule
AEB composition & production (\texttt{qe2}) & 0.305 & 5 & 138 & inconclusive \\
 & \texttt{qe} & 0.486 & 5 & 303 & equivalent \\
 & \texttt{qe\_rec} & 0.301 & 5 & 136 & equivalent \\
 & simplify off & 0.124 & 17 & 140 & inconclusive \\
 & synthetic only & 0.173 & 4 & 66 & inconclusive \\
\midrule
Boolean family & production (\texttt{qe2}) & 0.213 & 2 & 31 & equivalent \\
 & \texttt{qe} & 0.124 & 2 & 27 & equivalent \\
 & \texttt{qe\_rec} & 0.108 & 2 & 23 & equivalent \\
 & simplify off & 0.15 & 2 & 94 & equivalent \\
 & synthetic only & 0.104 & 2 & 32 & equivalent \\
\bottomrule
\end{tabular}
\end{table}

\begin{figure}[!tbp]
    \centering
    \includegraphics[width=0.5\linewidth]{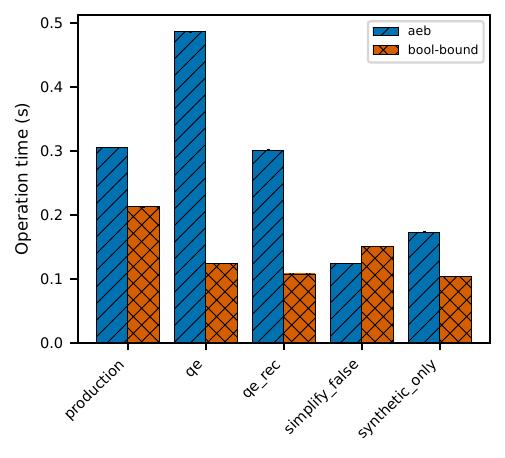}
    \caption{E7, warm median runtime of the variants on the AEB composition and on the
    Boolean-bound instance.}
    \label{fig:e7}
\end{figure}

\paragraph{Results.}
The default tactic \texttt{qe2} is not uniformly the fastest. On the AEB composition
\texttt{qe} is $1.6$ times slower ($0.486$ against $0.305\,\mathrm{s}$) with an equivalent
result, and \texttt{qe\_rec} takes $0.301\,\mathrm{s}$ with an equivalent result. On the
Boolean-bound instance \texttt{qe\_rec} is about twice as fast as the default
($0.108$ against $0.213\,\mathrm{s}$) and \texttt{qe} $1.7$ times, both with equivalent results.
Disabling simplification makes the AEB composition $2.5$ times faster ($0.124\,\mathrm{s}$)
but returns $17$ cores instead of $5$, so the time is saved at the price of a larger and
less readable result. Constructing only the synthetic candidate takes $0.173\,\mathrm{s}$ on
the AEB composition and returns $4$ cores instead of $5$, so some support-based core is
missing, and it takes $0.104\,\mathrm{s}$ on the Boolean-bound instance, where the result is
equivalent and the support candidates do not contribute.

\subsection{AEB Performance (E8)}
\label{app:scalability-aeb}

E8 measures the AEB computation with the same instrumented implementation.    The timed region consists of computing $\mathcal C_{SVD}$, by composing $\mathcal C_S$ and $\mathcal C_V$ and then composing the result with $\mathcal C_D$, and of computing the quotient $\mathcal Q=\mathcal C_T/\mathcal C_V$, both with the production operations. These are the two contracts of Table~\ref{tab:aeb-operation-results}. Construction of the input contracts and bookkeeping lie outside the timed region. Both operations use the default simplification. The cumulative CAVE time and QE time
are the times spent in these parts of the operations, and counters are
taken from the instrumentation. Each of the $10$ repetitions runs in a fresh worker.

\begin{table}[!tbp]
\centering
\caption{E8, warm medians over $10$ runs for $\mathcal C_{SVD}$, $\mathcal Q$, and
their sum.}
\label{tab:aeb-performance}
\small
\setlength{\tabcolsep}{6pt}
\begin{tabular}{@{}lrrr@{}}
\toprule
Metric & Composition & Quotient & Combined \\
\midrule
Runtime [s] & 0.718 & 0.208 & 0.927 \\
CAVE runtime [s] & 0.408 & 0.0767 & 0.485 \\
QE runtime [s] & 0.251 & 0.00221 & 0.254 \\
\midrule
CAVE invocations & 8 & 3 & 11 \\
Refinement / relaxation & 2/6 & 2/1 & 4/7 \\
\texttt{Solver.check()} calls & 189 & 117 & 306 \\
\texttt{Optimize.check()} calls & 42 & 28 & 70 \\
QE applications & 20 & 3 & 23 \\
Liveness targets & 1 & 3 & 4 \\
Candidates & 3 & 5 & 8 \\
Max.\ quantified atoms & 11 & 3 & 11 \\
Max.\ quantified indexed variables & 19 & 10 & 19 \\
\midrule
Output cores & 5 & 7 & 12 \\
Output DAG nodes & 124 & 133 & 257 \\
\bottomrule
\end{tabular}
\end{table}

The composition and the quotient take $0.718\,\mathrm{s}$ and $0.208\,\mathrm{s}$, which is $0.927\,\mathrm{s}$ for both (Table~\ref{tab:aeb-performance}). The time spent inside CAVE is $0.485\,\mathrm{s}$, which is $52\%$ of the total, and QE takes $0.254\,\mathrm{s}$, which is $27\%$ of the total and $52\%$ of the CAVE time. The computation invokes CAVE $11$ times ($4$ refinements and $7$ relaxations) and performs $306$ solver checks, $70$ optimization checks, and $23$ QE applications. The largest QE problem has $19$ indexed variables. The results have $5$ and $7$ cores and $124$ and $133$ DAG nodes. The worker needs about $156\,\mathrm{MiB}$ of resident memory at peak, which includes the loaded libraries. The first step of the composition, $\mathcal C_S\parallel\mathcal C_V$, is also timed in E3 and E7 ($0.308\,\mathrm{s}$ and $0.305\,\mathrm{s}$), which leaves roughly $0.4\,\mathrm{s}$ of the composition time to the step with $\mathcal C_D$.


\subsection{Limitations of the Evaluation}
\label{app:eval-limits}

All measurements come from one machine, one Python implementation, and one version of
Z3 and of Spot, and absolute times depend on them. The synthetic families isolate
single cost drivers, and their depth sweeps add redundant copies of the same invariant, so they
show sensitivity to the syntactic window depth. Timeouts are censored, and a timeout
here means that the cold and warm calls together needed more than the limit. E4 uses random small
instances, bounded lasso words, and an oracle that is itself software, so it is evidence
for soundness and not a proof. The automata comparison uses a single Spot pipeline, and its
stress family was designed after the first comparison. No other tool known to us eliminates irrelevant variables when computing the composition or quotient of contracts whose assumptions and guarantees are temporal formulas, so E3 and E6 compare CAVE with the closest alternative constructions and not with competing tools. Finally, in E8 about half of the time is spent outside CAVE, in the contract operations that surround it, and we did not profile these further.
